\title[Higher spin polynomial solutions of $q$KZ equation]{Higher spin polynomial solutions of quantum Knizhnik--Zamolodchikov equation}
\author[T.~Fonseca]{Tiago Fonseca}
\address{T.~Fonseca, LAPTh, Universit\'e de Savoie, CNRS UMR 5108, BP 110, 74941 Annecy-le-Vieux Cedex} 
\email{tiago.fonseca@lapp.in2p3.fr}
\thanks{TF was based at the Centre de Recherches Math\'ematiques at UdeM while most of the work was carried out. He is currently supported by ANR-10-BLAN-0120-03 ``DIADEMS''. Preprint LAPTh-060/12}%
\author[P.~Zinn-Justin]{Paul Zinn-Justin}
\address{P.~Zinn-Justin, UPMC Univ Paris 6, CNRS UMR 7589, LPTHE,
75252 Paris Cedex, France}
\email{pzinn@lpthe.jussieu.fr}
\thanks{PZJ is supported in part by ERC grant 278124 ``LIC''.
He would like to thank N.~Reshetikhin and R.~Weston for discussions.}
\newtheorem{theorem}{Theorem}[section]
\newtheorem{corollary}[theorem]{Corollary}
\newtheorem{lemma}[theorem]{Lemma}
\newtheorem{proposition}[theorem]{Proposition}
\newtheorem{remark}[theorem]{Remark}
\numberwithin{equation}{section} 
\long\def\rem#1{}
\def\eqbreak#1\\{\\*
\noalign{\vbox\bgroup\hfill$\def\eqbreak{$\egroup\nobreak\vbox\bgroup\hfill$}\displaystyle #1$\egroup}
}
\newcommand\poc[1]{\left(#1;q^4\right)_\infty}
\newcommand\normord[1]{{\,:\!#1\!:\,}}
\newcommand{\braket}[2]{\left\langle  #1 \vphantom{#2}\right|\!\left. #2 \vphantom{#1} \right\rangle}
\newcommand\braE[1]{v_{#1}^{\ast}}
\newcommand\ketE[1]{v_{#1}}
\newcommand\braN[1]{\tensor[_B]{\left\langle#1\right|}{}}
\newcommand\ketN[1]{\tensor[]{\left|#1\right\rangle}{_B}}
\newcommand\braC[1]{\left\langle#1\right|}
\newcommand\ketC[1]{\left|#1\right\rangle}
\tikzstyle{young}=[black, semithick]
\tikzstyle{dot}=[fill,shape=circle,black,inner sep=0pt,minimum size=3pt]
\tikzstyle{Pr}=[fill,shape=rectangle,red!60!white,inner sep=0pt,minimum size=8pt]
\tikzstyle{Ps}=[fill,shape=circle,blue!70!white,inner sep=0pt,minimum size=8pt]
\tikzstyle{pr}=[fill,shape=rectangle,red!60!white,inner sep=0pt,minimum size=4pt]
\tikzstyle{ps}=[fill,shape=circle,blue!70!white,inner sep=0pt,minimum size=4pt]
\tikzstyle{pn}=[draw,thin,shape=circle,black,inner sep=0pt,minimum size=3pt]
\def\Pf{\operatorname{Pf}}
\def\e{\epsilon}
\def\tr{\operatorname{tr}}
\def\coleq{\mathrel{\mathop:}=}
\newcommand\poly[1]{\mathbf #1}
\def\Ho{\mathcal{H}_{1}}
\def\HN{\mathcal{H}^{B}_{\ell}}
\def\HC{\mathcal{H}_{\ell}}
\begin{document}
\begin{abstract}
We provide explicit formulae for highest-weight to highest-weight
correlation functions of perfect vertex operators
of $U_q(\widehat{\mathfrak{sl}(2)})$ in arbitrary integer level $\ell$.
They are given in terms of certain Macdonald polynomials.
We apply this construction to the computation of the ground
state of higher spin vertex models, spin chains (spin $\ell/2$
XXZ) or loop models in the root of unity case $q=-e^{-i\pi/(\ell+2)}$.
\end{abstract}

\maketitle

\section{Introduction}
In the seminal paper \cite{FR-qKZ}, Frenkel and Reshetikhin showed
that correlation functions of $q$-deformed vertex operators (VOs) associated
to quantized affine algebras satisfy a set of holonomic $q$-difference
equations which are a $q$-deformation of the Knizhnik--Zamolodchikov equation
\cite{KZ} ($q$KZ). The analysis is performed for generic value of the level
$\ell$ of the affine algebra, but also works for positive integer level
provided the obvious modifications are made. The most important one is that
the paths in the Weyl chamber describing
the various conformal blocks (correlation functions of vertex operators)
have to be restricted to the ``Weyl alcove''.
In all that follows
we focus on the simplest algebra, $U_q(\widehat{\mathfrak{sl}(2)})$.
In this case, the integrable irreducible highest weight modules
are characterized by a spin $s\in\frac{1}{2}\mathbb{Z}_+$
(defined e.g.\ as the spin of the $U_q(\mathfrak{sl}(2))$ 
representation
of the top degree part, where $U_q(\mathfrak{sl}(2))$ is the horizontal
subalgebra of $U_q(\widehat{\mathfrak{sl}(2)})$
), which in level $\ell$ only exist when $s\le\ell/2$.
Each vertex operator corresponds to a step
of the path and is itself characterized by a spin $j\in\{1/2,\ldots,\ell/2\}$.
The present paper is entirely dedicated to the case of so-called
perfect vertex operators, that is,
$j$ has its maximal value $j=\ell/2$, for which there is exactly one possible step,
namely, $s\to\ell/2-s$. This implies that various simplifications occur,
and we expect this unique conformal block to be particularly simple;
and indeed we provide an explicit formula for them.

The reason to revisit this somewhat old subject is the observation made in \cite{artic34} that certain solutions of $q$KZ also provide the ground state entries
of integrable models at special values of the deformation parameter $q$.
In fact, this idea can be traced back to \cite{Resh-OBA}, where
Reshetikhin obtained solutions of $q$KZ using an ``off-shell Bethe
Ansatz'' and observed that Bethe equations appeared in the semi-classical
approximation, and then pursued in
\cite{RV-KZ}, where  the Knizhnik--Zamolodchikov equation in the quasi-classical
limit was found to produce eigenvectors of the Gaudin model.
Here, we are only concerned with 
one especially simple solution of $q$KZ -- with appropriate normalization,
a {\em polynomial}\/ solution, which produces
one eigenvector
of the higher spin $U_q(\widehat{\mathfrak{sl}(2)})$ (inhomogeneous, twisted)
transfer matrix.

Note that in earlier work \cite{artic34}
(see also \cite{artic31,artic42}), only level $1$ solutions were used, whose
explicit form was already known and can be found e.g.\ in \cite{JM-book}. 
In the present paper, we obtain a novel expression for the 
correlation functions of 
level $\ell$ perfect VOs, and show that once specialized,
they provide the ground state entries
of the integrable spin $\ell/2$ chain when $q=-e^{-i\pi/(\ell+2)}$.
We also establish the connection with the loop model of \cite{artic37} and thus
prove Conj.~1 in it, which is concerned with the degree
of the ground state entries as polynomials. 

Also note that a {\em different}\/ procedure to produce (arbitrary level)
solutions of the $q$KZ equation is to compute finite temperature
correlation functions of the type
$\tr(x^d \Phi\ldots\Phi)$, see e.g.~\cite{JM-book}. 
However
the resulting integral formulae are of a more complicated nature than those
we consider here. 
In particular, they only make sense for $|q|<1$
and it is not clear how to continue them to $|q|=1$. 
Here we restrict ourselves to the zero temperature
correlation functions of the type $\braC{0}\Phi\ldots\Phi\ketC{0}$,
though $\tr(x^d \Phi\ldots\Phi)$ should be computable along similar lines.

The plan of the paper is as follows: in section~\ref{sec:vo} we review
the construction of $U_q(\widehat{sl(2)})$ currents in terms
of bosons and parafermions and then proceed
to build perfect vertex operators. Section~\ref{sec:corr_Uqsl2}, 
the core of the paper, contains the derivation
of the correlation function of these vertex operators. Section~\ref{sec:applic} 
then reinterprets this correlation function as the 
eigenvector of an integrable transfer matrix in various contexts (spin
chain, loop model, supersymmetric lattice fermions). Finally,
various appendices provide additional technical details.

\section{Construction of \texorpdfstring{$U_q(\widehat{sl(2)})$}{Uq(sl(2))} perfect vertex operators}\label{sec:vo}
\subsection{The algebra \texorpdfstring{$U_q(\widehat{sl(2)})$}{Uq(sl(2))}}
Our reference for the quantized affine
algebra $U=U_q(\widehat{sl(2)})$ is the book
\cite{JM-book} as well as the paper \cite{IIJMNT}, whose conventions we follow.
The Chevalley generators are denoted by $E_i, F_i, K_i$, $i=0,1$,
to which one must add the grading operator $d$: we choose the
homogeneous gradation, i.e., $E_0$ (resp.\ $F_0$) has degree $+1$ (resp.\ $-1$)
and all other generators have degree $0$. In particular there is a horizontal
subalgebra $U_1\subset U$ which is generated by $E_1,F_1,K_1$.

We also use in what follows Drinfeld's realization
of $U$ in terms of currents \cite{Drinf-cur}.

Until specified otherwise we are in the regime $|q|<1$.

We shall consider the spin $\ell/2$ evaluation representation $\rho_z$ acting on
$V_z\cong \mathbb{C}^{\ell+1}$ with standard basis
$\ketE{b}$, $b=0,\ldots,\ell$. It is defined by:
\begin{equation}\label{eq:evalrep}
\begin{aligned}
\rho_z(F_1)\ketE{b}&=[b] \ketE{b-1}
&
\rho_z(F_0)\ketE{b}&=[\ell-b] z^{-1} q^{\ell+2} \ketE{b+1}
\\
\rho_z(E_1)\ketE{b}&=[\ell-b] \ketE{b+1}
&
\rho_z(E_0)\ketE{b}&=[b] z q^{-\ell-2} \ketE{b-1}
\\
\rho_z(K_1)\ketE{b}&=q^{2b-\ell} \ketE{b}
&
\rho_z(K_0)\ketE{b}&=q^{\ell-2b} \ketE{b}
\end{aligned}
\end{equation}
where $[n]:=(q^n-q^{-n})/(q-q^{-1})$.
In order to make $d$ act, 
one must consider $z$ a formal variable,
in which case $d=z\frac{d}{dz}+\Delta$, where $\Delta$ is a constant
to be fixed below.
 When restricted to $E_1, F_1, K_1$,
it is an ordinary irreducible representation of $U_1=U_q(\mathfrak{sl}(2))$
of spin $\ell/2$, and we simply denote the underlying space $V$.

We shall also need the $R$-matrix $\bar R(z_1/z_2)$ acting on a pair
of such representations $V_{z_1}\otimes V_{z_2}$.
The bar denotes the fact that we choose
for now a particular normalization of the $R$-matrix where it is the
identity on the highest weight vectors. Other normalizations will be considered
below. Equivalently, consider
$\check{\bar R}(z)=\mathcal{P}\bar R(z)$ where $\mathcal{P}$ permutes factors
of the tensor product. 
We then have the following formula:
\begin{equation}\label{eq:defR}
\check{\bar R}(z)=\sum_{j=0}^\ell \prod_{r=j+1}^\ell \frac{z-q^{2r}}{1-q^{2r}z}
P_{j}
\end{equation}
where $P_{j}$ is the projector onto the spin $j$ irreducible subrepresentation
of $V\otimes V$ w.r.t.\ the $U_1$-action.

Finally, we shall denote collectively by $\HC$ any level $\ell$
representation space of $U$, with $\sigma: U\to \mathcal{L}(\HC)$.

\subsection{Level \texorpdfstring{$\ell$ $U_q(\widehat{sl(2)})$}{l Uq(sl(2))} currents}
The results which are summarized here are based on the work of
 Ding and Feigin \cite{DF-para} (see also \cite{BV-para}). 
For the sake of completeness and because some details do not appear in this article we give a detailed introduction to the subject in appendix~\ref{app:parafermions}.

Consider the bosonic operators:
\begin{align}
 [\bm a_n,\bm a_m] &= \delta_{m,-n} \frac{[2n][\ell n]}{n} & [\bm a_n,\bm \beta] &= \delta_{n,0} & n,m & \in \mathbb{Z}
\end{align}
with grading
\[
[d,\bm a_n]=n \bm a_n\qquad
[d,\bm\beta]=-\frac{\bm a_0}{2\ell}
\]
They act on the direct sum of copies of the bosonic Fock space
$\HN = \bigoplus_{h\in\mathbb{Z}}
\left\langle\prod_{m>0} \bm a_{-m}^{k_m}\ketN{h}\right\rangle$
which differ by the eigenvalue of $\bm a_0$: $\bm a_0\ketN{h}=h\ketN{h}$.
$\bm \beta$ only appears via its exponential, which is such that
$e^{\pm\bm\beta}\ketN{h}=\ketN{h\pm 1}$. The grading is given by
$d\ketN{h}=-\frac{1}{4\ell}h^2\ketN{h}$.
When $\ell=1$, this is a realization of $U$, and therefore it is denoted by $\Ho$.

We build the following current operators:
\begin{equation}
\begin{aligned}
 \bm{e}(z) &= \xi^+(z) e^{2\bm{\beta}} z^{\nicefrac{\bm{a}_0}{\ell}} e^{\sum_{n>0} q^{-\frac{n\ell}{2}} \frac{\bm{a}_{-n}}{[n \ell]} z^n} e^{-\sum_{n>0} q^{-\frac{n\ell}{2}} \frac{\bm{a}_n}{[n \ell]} z^{-n}}\\
 \bm{f}(z) &= \xi^- (z) e^{-2\bm{\beta}} z^{-\nicefrac{\bm{a}_0}{\ell}} e^{-\sum_{n>0} q^{\frac{n\ell}{2}} \frac{\bm{a}_{-n}}{[n \ell]} z^n} e^{\sum_{n>0} q^{\frac{n\ell}{2}} \frac{\bm{a}_n}{[n \ell]} z^{-n}}\\
 \bm k^{\pm} (z) &= q^{\pm \bm{a}_0} e^{\pm (q-q^{-1}) \sum_n \bm{a}_{\pm n} z^{\mp n}}
\end{aligned}
\end{equation}
where the $\xi^\pm(z)$ are parafermions, see appendix~\ref{app:parafermions}.
A full list of relations they satisfy is given in \eqref{eq:pf_relations},
as well as the relation with Chevalley generators. Here we list a few:
\begin{equation}
\begin{aligned}
 \bm k^+(z) \bm f(w) &= \frac{z-q^{2+\frac{\ell}{2}} w}{q^2 z - q^{\frac{\ell}{2}}w} \bm f(w) \bm k^+ (z)\\
 \bm e(z) \bm e(w) \left(z - q^2 w\right) &= \bm e(w) \bm e(z) \left(q^2 z - w\right) \\
 \bm f(z) \bm f(w) \left(q^2 z - w\right) &= \bm f(w) \bm f(z) \left(z - q^2 w\right) \\
 (q-q^{-1}) zw[\bm e(z),\bm f(w)] &= \delta \left( q^{-\ell} \frac{z}{w} \right) \bm k^+ (q^{-\nicefrac{\ell}{2}} z) - \delta \left( q^{\ell} \frac{z}{w} \right) \bm k^- (q^{\nicefrac{\ell}{2}} z)
\end{aligned}
\end{equation}
where $ \delta(z) = \sum_{i \in \mathbb{Z}} z^i $.

These operators act on the tensor product of $\ell$ copies of the Fock space $\Ho$, wich is a level $\ell$ representation of $U$.
A state $\ketN{k_1} \otimes \ldots \otimes \ketN{k_{\ell}}$ is denoted by $\ketC{k_1,\ldots,k_{\ell}}$.
As we shall see in Section~\ref{sec:corr_Uqsl2}, when 
$k_i = 0$ or $1$ for all $i$,
$\ketC{k_1,\ldots,k_{\ell}}$ is a highest weight vector, with highest weight
given by the sum $k=\sum_i k_i$; for fixed $k=\sum_i k_i$, these states
are interchangeable and we denote any of them by $\ketC{k}$.

Remark: for an alternative construction of these vertex operators,
see \cite{BW-VO}.

\subsection{Intertwining relations}

Let $\Phi(z)$ be a type I vertex operator,
i.e., an intertwiner
\[
\Phi(z): \HC \to \HC\otimes V_z
\]
whose normalization will be fixed later.

The intertwining condition writes:
\begin{equation}
(\sigma\otimes\rho_z)(\Delta(x))\Phi(z)=\Phi(z)\sigma(x)
\end{equation}
for all $x\in U$,
or explicitly
for the Chevalley generators:
\begin{equation}
\begin{aligned}
(\sigma(E_i)\otimes 1+\sigma(K_i)\otimes \rho_z(E_i))\Phi(z)
&=\Phi(z)\sigma(E_i)
\\
(\sigma(F_i)\otimes \rho_z(K_i^{-1})+1\otimes \rho_z(F_i))\Phi(z)
&=\Phi(z)\sigma(F_i)
\\
(\sigma(K_i)\otimes \rho_z(K_i))\Phi(z)&=\Phi(z)\sigma(K_i)
\end{aligned}
\end{equation}
plus the intertwining condition for $d$ which will be discussed separately
below.

In what follows application of $\sigma$ will be implicit.
Using the form of $\rho_z$, cf~\eqref{eq:evalrep},
we can expand in components: $\Phi(z)=\sum_{b=0}^\ell \Phi_b (z)\ketE{b}$
with
\begin{equation}
\begin{aligned}\label{eq:Phi_SL2}
\Phi_b(z)E_0-E_0\Phi_b(z)
&= [b+1] z q^{-\ell-2} K_0 \Phi_{b+1}(z)
\\
\Phi_b(z)F_1-{q^{\ell-2b}}F_1\Phi_b(z)
&=[b+1] \Phi_{b+1}(z)
\\
\Phi_b(z)E_1-E_1\Phi_b(z)
&= [\ell-b+1] K_1 \Phi_{b-1}(z)
\\
\Phi_b(z)F_0-{q^{2b-\ell}}F_0\Phi_b(z)
&=[\ell-b+1] z^{-1} q^{\ell+2} \Phi_{b-1}(z)
\\
q^{2b-\ell} K_1\Phi_b(z)&=\Phi_b K_1
\end{aligned}
\end{equation}
Since $\HC$ is a level $\ell$ representation, we have $K_0K_1=q^{\ell}$.

We are mostly interested in operators $F_1$ and $E_0$; expanding
the lowering current $\bm f(z)$ in modes: 
$\bm f(z)=\sum_{n\in\mathbb{Z}} f_n z^{-n-1}$, we 
have $f_0=F_1$, $f_1=E_0 K_1$, so that 
we rewrite the two first equations in~\eqref{eq:Phi_SL2}:
\begin{equation}
\begin{aligned}
\Phi_{b}(z)f_0-{q^{\ell-2b}}f_0\Phi_{b}(z)
&= [b+1] \Phi_{b+1}(z)
\\
\Phi_b(z)f_1-q^{2b-\ell}f_1\Phi_{b}(z)&=q^{2b-\ell}[b+1] z \Phi_{b+1}(z)
\end{aligned}
\end{equation}
This means starting from $\Phi_0(z)$, one can build the $\Phi_b$
by iterated $q$-commutator.
We obtain:
\begin{equation}\label{eq:vop_b}
\begin{aligned}
\Phi_b(z)&=
\sum_{k=0}^b 
\frac{(-1)^k q^{k(\ell-b+1)}}{[k]![b-k]!} 
f_0^k \Phi_0(z)f_0^{b-k}
\\
&=z^{-b}\sum_{k=0}
\frac{(-1)^k q^{(b-k)(\ell-b+1)}}{[k]![b-k]!} 
f_1^k \Phi_0(z)f_1^{b-k}
\end{aligned}
\end{equation}
where $[n]!=[n][n-1]\ldots[1]$.

Finally, we write $f_0=\oint \bm f(w) dw$, $f_1=\oint \bm f(w) w dw$, so that:
\begin{equation}\label{eq:vop_c}
\begin{aligned}
\Phi_b(z)&=
\oint \ldots \oint \prod_{i=1}^b \frac{dw_i}{2\pi i}  
\sum_{k=0}^b 
\frac{(-1)^k q^{k(\ell-b+1)}}{[k]![b-k]!} 
\bm f(w_1) \ldots \bm f(w_k) \Phi_0 (z)\bm f (w_{k+1})\ldots \bm f(w_b)
\\
&=
\oint \ldots \oint \prod_{i=1}^b \frac{w_i dw_i}{2 z^b \pi i}  
\sum_{k=0}^b 
\frac{(-1)^k q^{(b-k)(\ell-b+1)}}{[k]![b-k]!} 
\bm f(w_1) \ldots \bm f(w_k) \Phi_0 (z)\bm f(w_{k+1})\ldots \bm f(w_b)
\end{aligned}
\end{equation}
The summation over $k$ will be performed in section~\ref{sec:vop_b}.

\subsection{Explicit formulae for the perfect vertex operators}
So far we have not used the fact that twice the spin of the VO is equal
to the level.
For so-called perfect VOs, that is for rank 1 case precisely
the VOs of maximal
spin $\ell/2$, we also have the following form of the highest weight entry
in terms of the bosonic field:
\begin{equation}
 \Phi_0 (z) = e^{\ell\bm \beta} (-z)^{\nicefrac{\bm a_0}{2}} e^{\sum_n q^{\frac{n \ell}{2}} \frac{\bm a_{-n}}{[2n]} z^n} e^{- \sum_n q^{\frac{n \ell}{2}} \frac{\bm a_{n}}{[2n]} z^{-n}}
\end{equation}
In order to check the consistency of this Ansatz, we shall require
certain identities.

\subsubsection{Commutation relations}
In order to compute the commutation relations between the operators $\bm e (w)$, $\bm f (w)$ and $\Phi_0 (z)$, it is convenient to
define the normal ordering $\normord{\ldots}$, which pushes the negative modes to the left, that is:
\begin{align*}
 \left. \begin{array}{c} \normord{\bm a_{-m} \bm a_n}\\\normord{\bm a_n \bm a_{-m}} \end{array} \right\} &= \bm a_{-m} \bm a_n & \left. \begin{array}{c}\normord{\bm a_0 \bm \beta} \\ \normord{\bm \beta \bm a_0} \end{array} \right\}&= \bm \beta \bm a_0
\end{align*}
where $n$ and $m$ are positive integers.
The normal ordering has the following important properties:
\begin{lemma}\label{prop:no_several}
 Let $\mathcal{O}$ be a product of $N$ operators $\mathcal{O} = \mathcal{O}_1 \ldots \mathcal{O}_N$.
 If $\mathcal{O}_i \mathcal{O}_j = g_{i,j} \normord{\mathcal{O}_i \mathcal{O}_j}$, then $\mathcal{O} = \prod_{i<j} g_{i,j} \normord{\mathcal{O}}$.
\end{lemma}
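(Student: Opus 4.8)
\emph{Approach.} The plan is to prove the statement by induction on the number $N$ of factors, taking the pairwise hypothesis $\mathcal{O}_i\mathcal{O}_j=g_{i,j}\normord{\mathcal{O}_i\mathcal{O}_j}$ as the base case $N=2$ (the case $N=1$ being trivial, since each $\mathcal{O}_i$ is already in normal order so that $\normord{\mathcal{O}_i}=\mathcal{O}_i$ and the empty product of scalars is $1$). Before starting I would pin down exactly which structural properties of the normal ordering $\normord{\cdot}$ are being used, because the hypothesis alone is pairwise and cannot by itself control an $N$-fold product; one needs that $\normord{\mathcal{O}_{\pi(1)}\cdots\mathcal{O}_{\pi(N)}}$ is independent of the permutation $\pi$ (it merely prescribes the final placement of modes, creation operators to the left), and that each $g_{i,j}$ is a genuine scalar (c-number), so it may be pulled out freely past any operator.

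\emph{Inductive step.} Assuming the formula for $N-1$ factors, I would write $\mathcal{O}=\left(\mathcal{O}_1\cdots\mathcal{O}_{N-1}\right)\mathcal{O}_N$, apply the induction hypothesis to the first block, and then reduce everything to a single ``merge'' sublemma,
\[
\normord{\mathcal{O}_1\cdots\mathcal{O}_{N-1}}\,\mathcal{O}_N=\Bigl(\prod_{i=1}^{N-1}g_{i,N}\Bigr)\normord{\mathcal{O}_1\cdots\mathcal{O}_N},
\]
after which the result follows from the factorization of indices $\prod_{i<j\le N-1}g_{i,j}\cdot\prod_{i<N}g_{i,N}=\prod_{i<j\le N}g_{i,j}$. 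To establish the sublemma I would exploit the explicit form of the operators in play ($\bm{e}(z)$, $\bm{f}(z)$, $\bm{k}^{\pm}(z)$, $\Phi_0(z)$): each factors as an exponential of negative (creation) modes, times a zero-mode part, times an exponential of positive (annihilation) modes. Moving the annihilation part of $\mathcal{O}_N$ leftwards past each already-ordered block produces, by Baker--Campbell--Hausdorff, only the exponential of the relevant commutators; since $[\bm a_n,\bm a_{-n}]$ and $[\bm a_0,\bm\beta]$ are central, these commutators are c-numbers, the exponent is a \emph{sum} of one term per preceding operator, and exponentiating turns that sum into the \emph{product} $\prod_{i<N}g_{i,N}$ with no cross terms.

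\emph{Main obstacle.} The delicate point, and where I would spend most care, is precisely the claim that no ``mixed'' multi-operator defect survives: one must verify that commuting $\mathcal{O}_N$ through the block contributes exactly the sum of pairwise central terms and nothing else, which is what guarantees that the accumulated scalar factorizes as $\prod_{i<N}g_{i,N}$ and matches the pairwise quantity from the hypothesis. This relies entirely on the centrality of all the relevant commutators, and it must be checked uniformly for the zero-mode pieces as well (the $e^{c\bm\beta}$, $q^{\bm a_0}$ and $(-z)^{\bm a_0/2}$ type factors), since $\bm\beta$ and $\bm a_0$ do not commute but their commutator is again a constant. Once permutation-invariance of $\normord{\cdot}$ and centrality are in hand, the induction closes cleanly.
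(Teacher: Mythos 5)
Your proposal is correct, and in fact supplies more than the paper does: the authors state Lemma~\ref{prop:no_several} without proof, treating it as a standard property of normal ordering for exponentials of free bosonic modes. Your induction on $N$ with the single ``merge'' sublemma is exactly the standard justification, and you correctly isolate the two points on which everything hinges: (i) the pairwise hypothesis alone is vacuous for arbitrary operators, so one must use the explicit structure $\mathcal{O}_i = e^{A_i^-}(\text{zero modes})e^{A_i^+}$ of the currents and vertex operators, and (ii) all the relevant cross-commutators ($[\bm a_n,\bm a_{-m}]$ and $[\bm a_0,\bm\beta]$) are central, so that Baker--Campbell--Hausdorff reduces to $e^Xe^Y=e^{[X,Y]}e^Ye^X$ and the accumulated scalar is a product of purely pairwise contributions with no higher defects. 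This also forces $g_{i,j}$ to coincide with the scalar read off from the two-operator relation, closing the induction. The one cosmetic remark is that your merge step orders the factors as $(\mathcal{O}_1\cdots\mathcal{O}_{N-1})\mathcal{O}_N$ and commutes the annihilation/zero-mode part of the \emph{block} past $\mathcal{O}_N$ rather than the reverse; either direction works, but it is worth noting that the same centrality argument is what guarantees $\normord{\cdot}$ is well defined independently of the order in which one performs the reordering, a point you flag correctly. No gap.
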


\begin{lemma}\label{prop:no_scalar}
 Let $\mathcal{O}$ be an operator such that $\normord{\mathcal{O} }= e^{g \bm \beta} f(\bm a_0) e^{\sum_{n>0} d_{n} \bm a_{-n}} e^{\sum_{n>0} c_n \bm a_n}$, where $g$, $c_n$ and $d_n$ are scalar constants.
 Then
\[
  \braN{k} \normord{\mathcal{O}} \ketN{k} = f(k)
\]
if $g=0$ and it vanishes otherwise.

 In the more general context where $\normord{\mathcal{O}} = \bigotimes_i e^{g_i \alpha} f_i(a_0) e^{\sum_{n>0} d_{i,n} a_{-n}} e^{\sum_{n>0} c_{i,n} a_n}$ this property is conserved:
\[
 \braC{k_1,\ldots,k_{\ell}} \normord{\mathcal{O}} \ketC{k_1,\ldots,k_{\ell}} = \prod_i f_i (k_i) 
\]
if $g_i = 0$ for all $i$ and it vanishes otherwise.
\end{lemma}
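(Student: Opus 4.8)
The plan is to read off the matrix element directly from the Fock-space structure, using three facts about $\ketN{k}$: it is annihilated by every positive mode ($\bm a_n\ketN{k}=0$ for $n>0$), it is an eigenvector of $\bm a_0$ with eigenvalue $k$, and, since $[\bm a_0,\bm\beta]=1$, conjugation gives $\bm a_0\,e^{g\bm\beta}=e^{g\bm\beta}(\bm a_0+g)$, so $e^{g\bm\beta}$ shifts the $\bm a_0$-eigenvalue by $g$ (consistent with $e^{\pm\bm\beta}\ketN{h}=\ketN{h\pm1}$). First I would push the rightmost factor through the ket: because the positive modes kill $\ketN{k}$, the exponential $e^{\sum_{n>0}c_n\bm a_n}$ acts as the identity, leaving $\braN{k}\normord{\mathcal{O}}\ketN{k}=\braN{k}\,e^{g\bm\beta}f(\bm a_0)\,e^{\sum_{n>0}d_n\bm a_{-n}}\ketN{k}$.

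The next step is the selection rule. Since $\bm a_0$ commutes with each $\bm a_{-n}$ ($n>0$), the vector $f(\bm a_0)\,e^{\sum_{n>0}d_n\bm a_{-n}}\ketN{k}$ still lies in the $\bm a_0$-eigenvalue-$k$ sector; applying $e^{g\bm\beta}$ then moves it into the sector of eigenvalue $k+g$. Distinct $\bm a_0$-sectors are orthogonal under the pairing, so the overlap with $\braN{k}$ vanishes unless $g=0$, which is exactly the claimed second alternative.

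When $g=0$ I would finish by evaluating the remaining factors. Here $f(\bm a_0)$ acts as the scalar $f(k)$ on the eigenvalue-$k$ sector, and expanding $e^{\sum_{n>0}d_n\bm a_{-n}}\ketN{k}=\ketN{k}+(\text{descendants})$, the descendant terms carry strictly positive degree and are therefore orthogonal to $\braN{k}$; with the normalization $\braN{k}\ketN{k}=1$ this leaves precisely $f(k)$. The tensor-product statement then follows with no extra work: the bra, the ket and the operator $\normord{\mathcal{O}}$ all factorize across the $\ell$ copies of $\Ho$, so the matrix element $\braC{k_1,\ldots,k_{\ell}}\normord{\mathcal{O}}\ketC{k_1,\ldots,k_{\ell}}$ is the product of the single-space results, equal to $\prod_i f_i(k_i)$ when every $g_i=0$ and zero as soon as some $g_i\neq0$.

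The computation is essentially routine normal-ordering bookkeeping, and the only point I would treat with care — the ``hard part'', such as it is — is making the $\bm a_0$-grading selection rule fully rigorous for non-integer $g$: one has to interpret $e^{g\bm\beta}$ as a shift on the (extended) lattice of Fock vacua and confirm that the shifted vacuum pairs to zero with $\braN{k}$ unless the shift is trivial. Once that orthogonality is granted, the rest is immediate from the action of positive modes and of $f(\bm a_0)$ on $\ketN{k}$.
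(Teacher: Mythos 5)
Your argument is correct: killing the positive modes on the ket, using the $\bm a_0$-grading selection rule for the $e^{g\bm\beta}$ factor, and pairing off the descendants is exactly the routine Fock-space computation intended here. Note that the paper states this lemma without any proof at all (it is treated as a standard property of normal ordering), so there is nothing to compare against beyond observing that your write-up, including the factorization over the $\ell$ tensor factors, supplies the omitted details correctly.
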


For convenience, we split the currents in bosonic and parafermionic components,
that is, $\bm e (w) = \xi^+ (w) \mathfrak{e} (w)$ and $\bm f (w) = \xi^- (w) \mathfrak{f} (w)$.

\begin{proposition}\label{prop:normal_order}
 The relation between the products and the normal ordered products is:
\begin{align*}
 \Phi_0 (z_1) \Phi_0 (z_2) &= (-z_1)^{\frac{\ell}{2}} \frac{\poc{q^2 \frac{z_2}{z_1}}}{\poc{q^{2\ell+2} \frac{z_2}{z_1}}} \normord{\Phi_0 (z_1) \Phi_0 (z_2)}\\
 \mathfrak{f} (w) \Phi_0 (z) &= \frac{1}{w-q^{\ell} z} \normord{\mathfrak{f} (w) \Phi_0 (z)}\\
 \Phi_0 (z) \mathfrak{f} (w) &= \frac{1}{q^{\ell} w -z} \normord{\mathfrak{f} (w) \Phi_0 (z)}\\
 \mathfrak{e} (w) \Phi_0 (z) &= (w-z) \normord{\mathfrak{e} (w) \Phi_0 (z)}\\
 \Phi_0 (z) \mathfrak{e} (w) &= (w -z) \normord{\mathfrak{e} (w) \Phi_0 (z)}
\end{align*} 
\end{proposition}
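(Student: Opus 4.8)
The plan is to treat each of $\Phi_0(z)$, $\mathfrak f(w)$ and $\mathfrak e(w)$ as a product of three pieces: a \emph{zero-mode factor} built from $\bm\beta$ and $\bm a_0$, a \emph{creation exponential} involving only the $\bm a_{-n}$ with $n>0$, and an \emph{annihilation exponential} involving only the $\bm a_n$ with $n>0$. Normal ordering a product of two such operators then amounts to two independent moves: (i) commuting the annihilation exponential of the left operator rightward past the creation exponential of the right operator, and (ii) reordering the two zero-mode factors into the canonical $\bm\beta$-to-the-left form prescribed by $\normord{\;}$. All other crossings are trivial, since $[\bm a_n,\bm\beta]=0$ and $[\bm a_n,\bm a_0]=0$ for $n\neq 0$. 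Each nontrivial move produces a scalar prefactor, and I would organize the bookkeeping through Lemma~\ref{prop:no_several}, which reduces the whole computation to the pairwise contractions $g_{i,j}$.

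For step (i), since $[\bm a_n,\bm a_{-m}]=\delta_{n,m}[2n][\ell n]/n$ is central, the disentangling identity $e^X e^Y=e^{[X,Y]}e^Y e^X$ applies verbatim, with $X$ the exponent of the annihilation part and $Y$ that of the creation part. The commutator collapses the double sum to a single sum over $n>0$. For the current--$\Phi_0$ contractions the $[2n]$ and $[\ell n]$ factors cancel against the denominators $[n\ell]$ and $[2m]$, leaving a bare $\sum_{n>0}\frac1n x^n=-\log(1-x)$ and hence a rational prefactor; for $\mathfrak e$ the weights $q^{\pm n\ell/2}$ cancel and one is left instead with the polynomial $(1-z/w)$, while for $\Phi_0\Phi_0$ a surviving ratio $q^{n\ell}[\ell n]/[2n]$ remains inside the sum.

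For step (ii), the only nontrivial relation is $[\bm a_0,\bm\beta]=1$: conjugating a power $z^{\bm a_0/c}$ through $e^{g\bm\beta}$ shifts $\bm\beta\mapsto\bm\beta+c^{-1}\log z$ and produces the monomial $z^{g/c}$. I would simply record these monomials. Sliding $(-z_1)^{\bm a_0/2}$ past $e^{\ell\bm\beta}$ gives $(-z_1)^{\ell/2}$, which is exactly the prefactor in the first relation; for $\mathfrak f(w)\Phi_0(z)$ the monomial $w^{-1}$ combines with the rational factor $w/(w-q^\ell z)$ from (i) to give $1/(w-q^\ell z)$, and the opposite order $\Phi_0(z)\mathfrak f(w)$ yields $1/(q^\ell w-z)$; for $\mathfrak e$ the monomials $w$ and $(-z)$ combine with $(1-z/w)$ and $(1-w/z)$ to produce $(w-z)$ in both orders.

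The one genuinely computational point, and the place I expect the bookkeeping to be most delicate, is the $\Phi_0\Phi_0$ case. There the single sum is $-\sum_{n>0}\frac1n\frac{q^{n\ell}[\ell n]}{[2n]}(z_2/z_1)^n$, which I would rewrite as $-\sum_{n>0}\frac1n\frac{q^{2n}(1-q^{2\ell n})}{1-q^{4n}}(z_2/z_1)^n$ and then recognize, via the expansion $\log\poc{x}=-\sum_{n>0}x^n/\big(n(1-q^{4n})\big)$, as $\log\big(\poc{q^2 \frac{z_2}{z_1}}/\poc{q^{2\ell+2}\frac{z_2}{z_1}}\big)$, giving the quoted ratio of $q^4$-Pochhammer symbols. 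Throughout, the geometric and Pochhammer resummations are to be read in the appropriate annular region ($|z_2/z_1|$, $|z/w|$ or $|w/z|$ small), i.e.\ as operator product expansions; beyond the commutation relations and the explicit forms already displayed, no further input is required.
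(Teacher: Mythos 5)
Your computation is correct and is exactly the standard Wick/Baker--Campbell--Hausdorff argument that the paper leaves implicit (the proposition is stated without proof, accompanied only by the convergence remark~\ref{rmk:normal_order}, whose annular domains you correctly reproduce). All the key contractions check out: the cancellation of $[2n][\ell n]/n$ against the denominators gives $-\log(1-q^{\ell}z/w)$ for the $\mathfrak f$--$\Phi_0$ pairs and $\log(1-z/w)$ for the $\mathfrak e$--$\Phi_0$ pairs, the zero-mode monomials $w^{\mp1}$, $(\mp z)^{\pm1}$, $(-z_1)^{\ell/2}$ are as you state, and the resummation $-\sum_{n>0}\tfrac{1}{n}\tfrac{q^{2n}(1-q^{2\ell n})}{1-q^{4n}}x^n=\log\bigl(\poc{q^2x}/\poc{q^{2\ell+2}x}\bigr)$ yields the quoted Pochhammer ratio.
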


\begin{remark}\label{rmk:normal_order}
 In the previous computations, we perform infinite sums which 
are convergent under the following conditions:
 \begin{align*}
| q^2 z_2| &< |z_1| &
 |q^{\ell} z| &< |w| &
 |q^{\ell} w| &< |z| &
 |z| &< |w| &
 |w| &< |z| 
 \end{align*}
for each expression in proposition~\ref{prop:normal_order}, respectively.
\end{remark}

From these expressions we find immediately:
\begin{equation}
\begin{aligned}
\mathfrak{e} (w) \Phi_0(z)&= \mathfrak e(w)\Phi_0(z)\\
 \mathfrak{f}(w) \Phi_0 (z) &= \frac{q^{\ell}w-z}{w-q^{\ell}z} \Phi_0 (z) \mathfrak{f} (w)
\end{aligned}
\end{equation}
As proven in appendix~\ref{app:parafermions}, the parafermions do not interact with the bosons. 
Therefore, in these formulae we can replace $\mathfrak{f}(w)$ with $\bm f(w)$.

From the first identity we conclude that $[e_k,\Phi_0(z)]=0$.
For $k=0,1$, this coincides with 
the intertwining equations 
in~\eqref{eq:Phi_SL2} for $E_1$ and $F_0$ at $b=0$.
We of course also have $K_1\Phi_0=q^\ell \Phi_0K_1$,
using $K_1=q^{\bm a_0}$.
Finally, we have:
\begin{equation}\label{eq:dimPhi}
x^{d}\Phi_0(z)x^{-d}=x^{-\ell/4}\Phi_0(x^{-1}z)
\end{equation}
which coincides with the intertwining condition for $d$ on condition
that one set $\Delta=\ell/4$.
\footnote{Note that in Conformal Field Theory
(corresponding to $q\to1$), this would be interpreted
as a consequence of the conformal dimension $\ell/4$ of $\Phi_0(z)$.
The $q$-deformation breaks the conformal symmetry by marking the two
points $0$ and $\infty$, leaving only the residual scaling symmetry
$z\to xz$.}

\subsubsection{Closed expression for the $\Phi_b (w)$}\label{sec:vop_b}
Equation~\eqref{eq:vop_b} provides us with two expressions for $\Phi_b (z)$.
Performing the summation results in:
\begin{lemma}\label{lemma:Phi_b}
The following expression holds:
\begin{align*}
\Phi_b(z)&=\alpha_b
\oint \ldots \oint \prod_{i=1}^b \frac{w_i dw_i}{2\pi i}  
\frac{\Phi_0 (z)\bm f (w_1)\ldots \bm f(w_b)}{\prod_i (w_i-q^{\ell} z)}
\end{align*}
where $\alpha_b$ is a constant given by:
\[
 \alpha_b = \frac{\prod_{i=1}^{b}(1-q^{2\ell +2 -2i}) }{[b]!}= (-1)^b (q-q^{-1})^b q^{b(2\ell-b+1)/2} \left[{\ell\atop b}\right]
\]
with the notation $\left[{\ell\atop b}\right]=\frac{[\ell]!}{[b]![\ell - b]!}$.
\end{lemma}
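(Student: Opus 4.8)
The plan is to take either of the two integral representations for $\Phi_b(z)$ in~\eqref{eq:vop_c} and carry out the sum over $k$ explicitly; I will work with the first one. The first step is to move $\Phi_0(z)$ to the far left inside every summand. In the $k$-th term the currents $\bm f(w_1),\ldots,\bm f(w_k)$ sit to the left of $\Phi_0(z)$, and dragging each of them past $\Phi_0(z)$ by means of the bosonic relation $\bm f(w)\Phi_0(z)=\frac{q^{\ell}w-z}{w-q^{\ell}z}\,\Phi_0(z)\bm f(w)$ (established just above Section~\ref{sec:vop_b}, after replacing $\mathfrak f$ by $\bm f$) produces the factor $\prod_{i=1}^{k}\frac{q^{\ell}w_i-z}{w_i-q^{\ell}z}$. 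Every summand then shares the same operator $\Phi_0(z)\bm f(w_1)\cdots\bm f(w_b)$, with all currents to the right of $\Phi_0(z)$ and in the fixed order $1,\dots,b$, so that
\[
\Phi_b(z)=\oint\!\cdots\!\oint\prod_{i=1}^{b}\frac{dw_i}{2\pi i}\;\Big(\sum_{k=0}^{b}\frac{(-1)^{k}q^{k(\ell-b+1)}}{[k]![b-k]!}\prod_{i=1}^{k}\frac{q^{\ell}w_i-z}{w_i-q^{\ell}z}\Big)\,\Phi_0(z)\bm f(w_1)\cdots\bm f(w_b).
\]

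Next I would exploit the symmetry of the integral. Since the measure and the contours are the same for each variable, the $w_i$ may be freely relabelled; combined with the quadratic exchange relation $\bm f(w)\bm f(w')\,(q^{2}w-w')=\bm f(w')\bm f(w)\,(w-q^{2}w')$, swapping $w_i\leftrightarrow w_{i+1}$ leaves the integral invariant provided the scalar prefactor is simultaneously transposed and multiplied by the $R$-factor $\frac{w_{i+1}-q^{2}w_i}{q^{2}w_{i+1}-w_i}$. Hence the integral depends on its scalar prefactor only through the associated $R$-deformed symmetrization, and the lemma amounts to the claim that this symmetrization of the sum over $k$ equals $\alpha_b\prod_i\frac{w_i}{w_i-q^{\ell}z}$ --- which, after absorbing $\prod_i w_i$ so as to rewrite the measure as $\frac{w_i\,dw_i}{2\pi i}$, is precisely the prefactor in the asserted formula.

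To establish this scalar identity I would proceed by induction on $b$, using the $q$-commutator recursion $\Phi_{b+1}(z)=\frac{q^{\ell-2b}}{[b+1]z}\big(\Phi_b(z)f_1-q^{2b-\ell}f_1\Phi_b(z)\big)$ with $f_1=\oint \bm f(w)\,w\,dw$, whose measure $w\,dw$ matches the target. Inserting the inductive form of $\Phi_b(z)$ appends a single new current $\bm f(w_{b+1})$ --- on the right of the block coming from $\Phi_b(z)f_1$, and, after commuting it past $\Phi_0(z)$, on the left of the block coming from $f_1\Phi_b(z)$. Symmetrizing this one extra variable against $w_1,\dots,w_b$ via the exchange relation, the two terms combine into the $(b+1)$-variable integrand and the missing denominator factor $(w_{b+1}-q^{\ell}z)$ is generated, while the constant $\frac{\alpha_{b+1}}{\alpha_b}=\frac{1-q^{2\ell-2b}}{[b+1]}$ is reproduced, closing the induction.

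The main obstacle is exactly this symmetrization identity: one must verify that the difference between the summed prefactor and the target product $\alpha_b\prod_i\frac{w_i}{w_i-q^{\ell}z}$ lies in the kernel of the integration, i.e.\ that after the $R$-deformed antisymmetrization all spurious poles cancel and nothing survives. Keeping track of the $R$-factors generated while reordering $\bm f(w_{b+1})$ through the block of currents, and checking the cancellation of the poles at $w_i=q^{\ell}z$ and $w_i=q^{\pm2}w_j$, is the delicate bookkeeping. Everything else is routine; in particular the two stated forms of $\alpha_b$ agree because $1-q^{2\ell+2-2i}=-q^{\ell+1-i}(q-q^{-1})[\ell+1-i]$ together with $\prod_{i=1}^{b}[\ell+1-i]=[\ell]!/[\ell-b]!$ gives directly $\alpha_b=(-1)^{b}(q-q^{-1})^{b}q^{b(2\ell-b+1)/2}\left[{\ell\atop b}\right]$.
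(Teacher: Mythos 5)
Your proposal follows essentially the same route as the paper's proof: commute $\Phi_0(z)$ to the left using $\bm f(w)\Phi_0(z)=\frac{q^{\ell}w-z}{w-q^{\ell}z}\Phi_0(z)\bm f(w)$, observe that the $R$-symmetrized integration kills anything divisible by $(q^{2}w_i-w_{i+1})$, reduce the lemma to the scalar identity $\sum_{k}\frac{(-1)^{k}q^{k(\ell-b+1)}}{[k]![b-k]!}\prod_{j\le k}(q^{\ell}w_j-z)\prod_{j>k}(w_j-q^{\ell}z)=\alpha_b\prod_i w_i$ modulo such terms, and prove it by induction on $b$. The only (cosmetic) difference is that you drive the induction through the operator recursion $\Phi_{b+1}\propto\Phi_b f_1-q^{2b-\ell}f_1\Phi_b$, whereas the paper splits the scalar sum directly via a $q$-Pascal identity into the two $(b-1)$-term sub-sums in $(w_1,\dots,w_{b-1})$ and $(w_2,\dots,w_b)$ — the same two contributions your recursion produces — and the explicit cancellation bookkeeping you defer is carried out there in a few lines.
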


\begin{proof}
We shall use the first expression of $\Phi_b(z)$ in \eqref{eq:vop_c}; the second expression would lead to the same result.
We have a sum over several terms.
We commute all $\bm f$ to the right, so that each term becomes:
\begin{multline*}
 \bm f (w_1) \ldots \bm f (w_k) \Phi_0 (z) \bm f (w_{k+1}) \ldots \bm f (w_b) =
  \frac{\prod_{j\leq k} (q^{\ell}w_j -z )}{\prod_{j\leq k}(w_j-q^{\ell} z)}
\Phi_0 (z) \bm f (w_1) \ldots \bm f (w_b)\\
= \frac{\prod_{j\leq k} (q^{\ell} w_j-z)\prod_{j>k}(w_j-q^{\ell} z)}{\prod_j (w_j-q^{\ell} z)}  
\Phi_0 (z) \bm f (w_1) \ldots \bm f (w_b)
\end{multline*}

Our purpose is to compute the expectation value of an operator, which contains the operator $\Phi_0(z) \bm f(w_1) \ldots \bm f(w_b)$.
By lemmas~\ref{prop:no_several} and~\ref{prop:no_scalar}, this will make appear one term per each pair of operators.
From $\Phi_0 (z) \bm f(w_i)$ we get a pole $\frac{1}{q^{\ell} w_i-z}$.
The contribution of the pairs $\bm f(w_i) \bm f(w_j)$ is more complicated and we need to consider the decomposition of parafermions in modes $\bm f^k (w_i)$, which we explain in appendix~\ref{app:parafermions}. 
Here we only need a couple of properties.
The relations~\eqref{eq:pf_relations}, which define $U_q(\widehat{\mathfrak{sl}(2)})$, imply that
\[
\bm f (w_1) \ldots \bm f (w_b) \prod_{i<j} \frac{q^2 w_i - w_j}{w_i - w_j }
\]
is symmetric in the exchange of the $w_i$.
This expression decomposes as a sum of terms of the form
$\bm f^{i_1} (w_1) \ldots \bm f^{i_b} (w_b) \prod_{i<j} (q^2 w_i - w_j)$,
which each satisfy by equation~\eqref{eq:pf_f_relations}
\[
\bm f^{i_1} (w_1) \ldots \bm f^{i_b} (w_b) \prod_{i<j} (q^2 w_i - w_j)
 = P(w_1, \ldots, w_b) \normord{\bm f^{i_1} (w_1) \ldots \bm f^{i_b} (w_b)}
\] 
where $P$ is some multivariate polynomial.
We conclude from the definition of normal ordering that
$\bm f (w_1) \ldots \bm f(w_b) \prod_{i<j} \frac{q^2 w_i - w_j}{w_i-w_j}$ is a symmetric function of the $\{w_1,\ldots,w_b\}$ without poles of the form $(w_i - \zeta w_j)^{-1}$.

In summary, if we write 
$
f^i_b(z;w_1,\ldots,w_b) = \prod_{j\leq i} (q^{\ell} w_j-z)\prod_{j>i}(w_j-q^{\ell}z)$
(which is linear in each variable $w_j$),
then the quantity that we want to compute is:
\begin{align*}
\Phi_b(z)&=
\oint \ldots \oint \prod_{i=1}^b \frac{dw_i}{2\pi i}  
\sum_{k=0}^b 
\frac{(-1)^k q^{k(\ell-b+1)}}{[k]![b-k]!} 
f_b^k (z;w_1,\ldots,w_b) 
\prod_{i<j} \frac{w_i-w_j}{q^2 w_i-w_j}\mathcal{S}(z;w_1,\ldots,w_b)
\end{align*} 
where $S(z;w_1,\ldots,w_b)$ is some symmetric function of the $\{w_1,\ldots,w_b\}$ without poles of the form $(w_i - \zeta w_j)^{-1}$.

We now want to prove that 
\[
\sum_{k=0}^b 
\frac{(-1)^k q^{k(\ell-b+1)}}{[k]![b-k]!} 
f_b^k (z;w_1,\ldots,w_b) 
= \alpha_b \prod_{i=1}^b w_i
\]
up to some terms which vanish when integrated, i.e., which are divisible by $(q^2 w_i- w_{i+1})$. They produce a zero contribution 
because the integral becomes skew-symmetric.

This is true when $b=1$:
\[
f_1^0 (z;w_1) -q^{\ell} f_1^1 (z;w_1) 
= (w_1-q^{\ell}z) -q^{\ell} (q^{\ell}w_1-z) 
= (1-q^{2 \ell}) w_1
\]

Suppose that this is true for $b-1$, then:
\begin{align*}
\sum_{k=0}^b 
\frac{(-1)^k q^{k(\ell-b+1)}}{[k]![b-k]!} f_b^k 
&=  \sum_{k=0}^{b-1} \frac{(-1)^k q^{k(\ell-b+2)}}{[k]![b-k-1]![b]} \prod_{j\leq k} (q^{\ell} w_j-z)\prod_{j>k}(w_j-q^{\ell}z)\\
&\quad- q^{\ell+2-2b}\sum_{k=1}^b \frac{(-1)^{k-1} q^{(k-1)(\ell-b+2)}}{[k-1]![b-k]![b]} \prod_{j\leq k} (q^{\ell} w_j-z)\prod_{j>k}(w_j-q^{\ell}z)\\
&= \frac{\alpha_{b-1}}{[b]} (w_b - q^{\ell} z) \prod_{i=1}^{b-1} w_i - \frac{q^{\ell+2-2b}\alpha_{b-1}}{[b]} (q^{\ell}w_1-z) \prod_{i=2}^b w_i\\
&= \frac{\alpha_{b-1}}{[b]} (1-q^{2\ell+2-2b}) \prod_{i=1}^b w_i - z q^{\ell}\frac{\alpha_{b-1}}{[b]} (w_1 - q^{2-2b}w_b) \prod_{i=2}^{b-1} w_i
\end{align*}
by the induction hypothesis.
The last term can be rewritten:
\[
 (w_1 - q^{2-2b}) w_2 \ldots w_{b-1} = \sum_{i=1}^{b-1}q^{-2(i-1)}\left(\prod_{j<i} w_j\right) (w_i - q^{-2} w_{i+1})\left(\prod_{j>i+1} w_j \right) 
\]
which vanishes when integrated.

The constant $\alpha_b$ is given by this recursion.
\end{proof}

\subsection{The \texorpdfstring{$R$}{R} matrix}

Consider next the operator $\check R(z_1/z_2)\Phi(z_1)\Phi(z_2): \HC\to \HC\otimes V_{z_2}\otimes V_{z_1}$, with the normalization of
the $R$-matrix to be adjusted below.
$\check R(z_1/z_2)\Phi(z_1)\Phi(z_2)$ is an intertwiner by the defining properties of $R$ and $\Phi$.
We now use the fact that $\Phi$ is a perfect VO, i.e.,
that it is of (maximal) spin $\ell/2$ where $\ell$ is the level.
This means that each highest weight
irreducible representation (say of spin $s$) in $\HC$,
when tensored with $V_z$, is sent onto another {\em unique}\/ irreducible
representation (namely, of spin $\ell/2-s$), and the same when tensored
twice.
Such an intertwiner is therefore unique;
so that it should be proportional to $\Phi(z_2)\Phi(z_1)$. In order
to fix the proportionality constant, we compute
\[
r(z_1/z_2)
\Phi^{(0)}(z_1)\Phi^{(0)}(z_2)
=
\Phi^{(0)}(z_2)\Phi^{(0)}(z_1)
\]
with
\[
r(z)=z^{-\ell/2}\frac{\poc{q^2z}\poc{q^{2\ell+2}/z}}{\poc{q^2/z}\poc{q^{2\ell+2}z}}
\]
and now choose (as in \cite{IIJMNT})
\[
R(z)=r(z)
\bar R(z)
\]
where $\bar R(z)$ is the identity on the tensor product of highest weight
vectors, in such a way that
\begin{equation}\label{eq:exchPhi}
\check R(z_1/z_2)\Phi(z_1)\Phi(z_2)=\Phi(z_2)\Phi(z_1)
\end{equation}


\section{Correlation functions of \texorpdfstring{$U_q(\widehat{sl(2)})$}{Uq(sl(2))} perfect vertex operators}\label{sec:corr_Uqsl2}
In this section we explain how to compute the correlation function
in even size $L=2n$:
\[
 \Psi^{(k)}_{b_1,\ldots,b_{2n}} (z_1,\ldots,z_{2n}) 
= \braC{k} \Phi_{b_1} (z_1) \ldots \Phi_{b_{2n}} (z_{2n}) \ketC{k}
\]
$\Psi^{(k)}_{b_1,\ldots,b_{2n}} (z_1,\ldots,z_{2n})$ is nonzero only when the neutrality condition $\sum_{i=1}^{2n} b_i = n \ell$ it satisfied.
The final result will be a multiple contour integral. 

\subsection{Integral formulae for correlation functions}
By lemma~\ref{lemma:Phi_b}, the product of vertex operators $\Phi_{b_1} (z_1) \ldots \Phi_{b_{2n}}(z_{2n})$ can be expressed as a multiple integral containing $2n$ operators $\Phi_0 (z_i)$ and $\ell n$ operators $\bm f(w_j)$.
Then we can use lemmas~\ref{prop:no_several} and~\ref{prop:no_scalar} to compute the correlation function.

\subsubsection{Correlation functions of $q$-parafermions}\label{sec:corrpara}
Take a product of $\ell n$ currents, multiplied as before
by some appropriate prefactors, and do a mode decomposition: 
\begin{equation}\label{eq:F_decomposition}
 \mathcal{F} =\bm{f}(w_1) \ldots \bm{f}(w_{\ell n}) \prod_{i<j} \frac{w_j-q^2 w_i}{w_j-w_i}
 = \sum_{i_1,\ldots,i_{\ell n}} \bm{f}^{i_1}(w_1) \ldots \bm{f}^{i_{\ell n}}(w_{\ell n}) \prod_{i<j} \frac{w_j-q^2 w_i}{w_j-w_i}
\end{equation}
Equation~\eqref{eq:pf_f_relations} implies that
the correlation function $\braC{0} e^{2 \ell n\bm \beta} \mathcal{F} \ketC{0}$ is a rational function.
The poles which appear on~\eqref{eq:pf_f_relations} cancel with the product $\prod_{i<j} (w_j - q^2 w_i)$, thus the expression is of the form:
\[
 \braC{0} e^{2 \ell n \bm \beta} \mathcal F \ketC{0} = \frac{\mathcal{P}(w_1, \ldots , w_{\ell n})}{\prod_{i<j} (w_i - w_j)}
\]
where $\mathcal{P}$ is some polynomial. 

By equation~\eqref{eq:pf_relations}, $\mathcal F$ is symmetric and therefore $\mathcal P$ is antisymmetric. It follows that $\braC{0} e^{2 \ell n \bm \beta} \mathcal F \ketC{0}$ is a symmetric polynomial in the $w_i$.\footnote{It can be shown that the coefficients of the polynomial are Laurent polynomials in $q$.} 
Moreover, as a consequence of theorem~\ref{thm:pf-wheel}
for its parafermionic part, 
it satisfies a {\em wheel condition}.

This wheel condition can be defined in a general setting as follows.
Let $t_M$, $q_M$ be two scalars such that $q_M^{r-1}t_M^{k+1} =1$.
A symmetric polynomial 
is said to satisfy the $(r,k)$--wheel condition, if it vanishes whenever we set the first $k+1$ variables such that $\frac{z_{i+1}}{z_i} = t_M q_M^{s_i}$ for all $i\leq k$ and $\sum_i s_i \leq r-1$.
Here we find that $\braC{0} e^{2 \ell n\bm \beta} \mathcal{F} \ketC{0}$ 
satisfies the wheel condition
when we set $r=2$, $k=\ell$ and $t_M=q^{-2}$.

For each term on the decomposition~\eqref{eq:F_decomposition} we can compute the effect of normal ordering:
\[
 \bm f^{i_1} (w_1) \ldots \bm f^{i_{\ell n}} (w_{\ell n}) = P^{i_1,\ldots,i_{\ell n}}(w_1,\ldots,w_{\ell n}) \normord{\bm f^{i_1} (w_1) \ldots \bm f^{i_{\ell n}} (w_{\ell n})} 
\]

Now, we ignore everything but the terms in $e^{\alpha}$.
On the one hand, we have $e^{2 \ell n \bm \beta} = \bigotimes_j e^{n\alpha}$.
On the other hand, each parafermionic mode $\bm{f}^k$ is proportional to $\bigotimes_{j<k} 1 \otimes e^{-\alpha} \bigotimes_{j>k} 1$.
Then, by lemma~\ref{prop:no_scalar}, each mode $k$ should appear exacly $n$ times.

Each time we have a repeated mode (that is, $\epsilon_i=\epsilon_j$) we get a term of degree $2$, more precisely $q^{2\ell-2\epsilon_i} (w_i-q^2w_j)(q^2 w_i-w_j)$.
Otherwise, we get a term of degree zero.
Therefore, the degree of $\braC{0} e^{2\ell n \bm \beta} \mathcal{F} \ketC{0}$ is $\ell n(n-1)$.
The same analysis can be used to prove that $\left(\prod_i w_i\right) \braC{k}e^{2\ell n \bm \beta} \ketC{k}$ is a symmetric polynomial of degree $\ell n(n-1) + n(\ell-k)$.

We can do a more detailed analysis and obtain the dominant monomial.
Let $Y_{\ell,n}$ be the staircase Young diagram, corresponding to $n$ steps of $\ell \times 2$: $Y_{\ell,n} = (2(n-1),\ldots,2(n-1),2(n-2),\ldots,2(n-2),2,\ldots,2,0,\ldots,0)$.
For example:
\begin{align*}
 Y_{3,3} &= 
\begin{tikzpicture}[scale=.2,baseline=-20]
 \draw[young] (0,0) -- (4,0);
 \draw[young] (0,-1) -- (4,-1);
 \draw[young] (0,-2) -- (4,-2);
 \draw[young] (0,-3) -- (4,-3);
 \draw[young] (0,-4) -- (2,-4);
 \draw[young] (0,-5) -- (2,-5);
 \draw[young] (0,-6) -- (2,-6);
 \draw[young] (0,0) -- (0,-9);
 \draw[young] (1,0) -- (1,-6);
 \draw[young] (2,0) -- (2,-6);
 \draw[young] (3,0) -- (3,-3);
 \draw[young] (4,0) -- (4,-3);
\end{tikzpicture}
&
 Y_{2,4} &= 
\begin{tikzpicture}[scale=.2,baseline=-20]
 \draw[young] (0,0) -- (6,0);
 \draw[young] (0,-1) -- (6,-1);
 \draw[young] (0,-2) -- (6,-2);
 \draw[young] (0,-3) -- (4,-3);
 \draw[young] (0,-4) -- (4,-4);
 \draw[young] (0,-5) -- (2,-5);
 \draw[young] (0,-6) -- (2,-6);
 \draw[young] (0,0) -- (0,-8);
 \draw[young] (1,0) -- (1,-6);
 \draw[young] (2,0) -- (2,-6);
 \draw[young] (3,0) -- (3,-4);
 \draw[young] (4,0) -- (4,-4);
 \draw[young] (5,0) -- (5,-2);
 \draw[young] (6,0) -- (6,-2);
\end{tikzpicture}
\end{align*}
More generally we define a modified staircase Young diagram $Y_{\ell,n}^{(k)}$. 
Which is obtained from $Y_{\ell,n}$ by adding $\ell-k$ boxes vertically  at each step.
For example:
\begin{align*}
 Y^{(0)}_{3,3} &= 
\begin{tikzpicture}[scale=.2,baseline=-20]
 \draw[young] (0,0) -- (5,0);
 \draw[young] (0,-1) -- (5,-1);
 \draw[young] (0,-2) -- (5,-2);
 \draw[young] (0,-3) -- (5,-3);
 \draw[young] (0,-4) -- (3,-4);
 \draw[young] (0,-5) -- (3,-5);
 \draw[young] (0,-6) -- (3,-6);
 \draw[young] (0,-7) -- (1,-7);
 \draw[young] (0,-8) -- (1,-8);
 \draw[young] (0,-9) -- (1,-9);
 \draw[young] (0,0) -- (0,-9);
 \draw[young] (1,0) -- (1,-9);
 \draw[young] (2,0) -- (2,-6);
 \draw[young] (3,0) -- (3,-6);
 \draw[young] (4,0) -- (4,-3);
 \draw[young] (5,0) -- (5,-3);
 \node[dot] at (4.5,-0.5) {};
 \node[dot] at (4.5,-1.5) {};
 \node[dot] at (4.5,-2.5) {};
 \node[dot] at (2.5,-3.5) {};
 \node[dot] at (2.5,-4.5) {};
 \node[dot] at (2.5,-5.5) {};
 \node[dot] at (0.5,-6.5) {};
 \node[dot] at (0.5,-7.5) {};
 \node[dot] at (0.5,-8.5) {};
\end{tikzpicture}
&
 Y^{(1)}_{3,3} &= 
\begin{tikzpicture}[scale=.2,baseline=-20]
 \draw[young] (0,0) -- (5,0);
 \draw[young] (0,-1) -- (5,-1);
 \draw[young] (0,-2) -- (5,-2);
 \draw[young] (0,-3) -- (4,-3);
 \draw[young] (0,-4) -- (3,-4);
 \draw[young] (0,-5) -- (3,-5);
 \draw[young] (0,-6) -- (2,-6);
 \draw[young] (0,-7) -- (1,-7);
 \draw[young] (0,-8) -- (1,-8);
 \draw[young] (0,0) -- (0,-9);
 \draw[young] (1,0) -- (1,-8);
 \draw[young] (2,0) -- (2,-6);
 \draw[young] (3,0) -- (3,-5);
 \draw[young] (4,0) -- (4,-3);
 \draw[young] (5,0) -- (5,-2);
 \node[dot] at (4.5,-0.5) {};
 \node[dot] at (4.5,-1.5) {};
 \node[dot] at (2.5,-3.5) {};
 \node[dot] at (2.5,-4.5) {};
 \node[dot] at (0.5,-6.5) {};
 \node[dot] at (0.5,-7.5) {};
\end{tikzpicture}
&
 Y^{(2)}_{3,3} &= 
\begin{tikzpicture}[scale=.2,baseline=-20]
 \draw[young] (0,0) -- (5,0);
 \draw[young] (0,-1) -- (5,-1);
 \draw[young] (0,-2) -- (4,-2);
 \draw[young] (0,-3) -- (4,-3);
 \draw[young] (0,-4) -- (3,-4);
 \draw[young] (0,-5) -- (2,-5);
 \draw[young] (0,-6) -- (2,-6);
 \draw[young] (0,-7) -- (1,-7);
 \draw[young] (0,0) -- (0,-9);
 \draw[young] (1,0) -- (1,-7);
 \draw[young] (2,0) -- (2,-6);
 \draw[young] (3,0) -- (3,-4);
 \draw[young] (4,0) -- (4,-3);
 \draw[young] (5,0) -- (5,-1);
 \node[dot] at (4.5,-0.5) {};
 \node[dot] at (2.5,-3.5) {};
 \node[dot] at (0.5,-6.5) {};
\end{tikzpicture}
&
 Y^{(3)}_{3,3} &= 
\begin{tikzpicture}[scale=.2,baseline=-20]
 \draw[young] (0,0) -- (4,0);
 \draw[young] (0,-1) -- (4,-1);
 \draw[young] (0,-2) -- (4,-2);
 \draw[young] (0,-3) -- (4,-3);
 \draw[young] (0,-4) -- (2,-4);
 \draw[young] (0,-5) -- (2,-5);
 \draw[young] (0,-6) -- (2,-6);
 \draw[young] (0,0) -- (0,-9);
 \draw[young] (1,0) -- (1,-6);
 \draw[young] (2,0) -- (2,-6);
 \draw[young] (3,0) -- (3,-3);
 \draw[young] (4,0) -- (4,-3);
\end{tikzpicture}
\end{align*}
where the doted boxes correspond to the boxes added to $Y_{\ell,n}$.

Recall that a monomial corresponding to some partition $\lambda = \{\lambda_1,\ldots,\lambda_N\}$ is defined by $w^{\lambda} = \prod_i w_i^{\lambda_i}$.
\begin{proposition}\label{prop:leading}
 The leading monomial of $\left(\prod_i w_i\right)\braC{k} e^{2\ell n \bm \beta} \mathcal{F} \ketC{k}$ is:
\[
 q^{\ell^2 n(n-1)+n\binom{\ell}{2}} \left([k]![\ell-k]!\right)^n w^{Y_{\ell,n}^{(k)}}
\]
In the case $k=0$, the expression simplifies:
\[
 \braC{0} e^{2\ell n \bm \beta} \mathcal{F}\ketC{0} = q^{\ell^2 n(n-1) + n\binom{\ell}{2}} \left([\ell]!\right)^n w^{Y_{\ell,n}} + \text{ lower terms.}
\]
\end{proposition}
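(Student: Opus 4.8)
The plan is to start from the mode decomposition \eqref{eq:F_decomposition} and the normal-ordering rules of Lemmas~\ref{prop:no_several} and~\ref{prop:no_scalar}, which we already used to read off the degree and the wheel condition, and to refine that bookkeeping so as to retain the \emph{leading} coefficient rather than merely the total degree. The symmetrizing factor $\prod_{i<j}\frac{w_j-q^2w_i}{w_j-w_i}$ is common to every term of \eqref{eq:F_decomposition}, so I would pull it out of the sum and write
\[
\braC{k}e^{2\ell n\bm\beta}\mathcal{F}\ketC{k}
=\Bigg(\prod_{i<j}\frac{w_j-q^2w_i}{w_j-w_i}\Bigg)
\sum_{\sigma}C_\sigma(q)\!\!\prod_{\substack{i<j\\ \sigma(i)=\sigma(j)}}\!\!(w_i-q^2w_j)(q^2w_i-w_j),
\]
where $\sigma$ runs over the assignments of a mode in $\{1,\dots,\ell\}$ to each variable with every mode used exactly $n$ times (the only ones surviving Lemma~\ref{prop:no_scalar}), and $C_\sigma(q)$ collects the scalar $q$-weights of all pairs together with the factors $\prod_j f_j(k_j)$. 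Extracting the dominance-maximal monomial then amounts to analysing the right-hand side in the regime $|w_1|\gg|w_2|\gg\cdots\gg|w_{\ell n}|$.

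The key observation is that in this regime the prefactor tends to the constant $q^{2\binom{\ell n}{2}}$ (each factor tends to $q^2$), so it affects only the coefficient and not the exponents, while each quadratic $(w_i-q^2w_j)(q^2w_i-w_j)$ with $i<j$ contributes its top term $q^2w_i^2$. Hence inside a fixed mode the $r$-th largest of its $n$ variables acquires the exponent $2(n-r)$, so that \emph{every} admissible $\sigma$ produces a leading monomial whose exponent multiset is exactly $\{2(n-1),2(n-2),\dots,0\}$, each value with multiplicity $\ell$ --- a permutation of $Y_{\ell,n}$. Among these the lexicographically largest is the sorted partition $w^{Y_{\ell,n}}$ itself, attained precisely by the assignments in which each mode contains exactly one variable from each consecutive block $\{w_{(r-1)\ell+1},\dots,w_{r\ell}\}$. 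Since the prefactor limit is a finite nonzero constant, the ratio $\braC{0}e^{2\ell n\bm\beta}\mathcal{F}\ketC{0}/w^{Y_{\ell,n}}$ has a finite limit, which simultaneously identifies $w^{Y_{\ell,n}}$ as the leading monomial and rules out any lexicographically larger monomial that the rational prefactor might a priori generate.

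It then remains to evaluate the leading coefficient, i.e.\ to sum $q^{2\binom{\ell n}{2}}q^{2\ell\binom n2}C_\sigma(q)$ over the $(\ell!)^n$ dominant assignments. I would organise this block by block: within each rank block the sum over the $\ell!$ bijections to the modes factorises, and the telescoping of the weights $q^{2\ell-2\epsilon}$ together with the values $f_j(0)$ turns each block into a $q$-factorial, producing the announced $q^{\ell^2n(n-1)+n\binom\ell2}([\ell]!)^n$; in particular the sum does not vanish. For general $k$ the same analysis applies verbatim except that the $\ell$ copies are no longer equivalent: the $k$ copies with $k_j=1$ and the $\ell-k$ copies with $k_j=0$ enter $f_j$ differently, and each charge-$0$ copy carries an extra power of $w$ relative to a charge-$1$ copy. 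After multiplying by $\prod_i w_i$ this raises by one the exponent of the charge-$0$ variable in each rank block, which adds a vertical strip of $\ell-k$ boxes at every step and turns $Y_{\ell,n}$ into $Y^{(k)}_{\ell,n}$, while the block-wise sum now factorises through the $k$ occupied and the $\ell-k$ empty modes to give $([k]![\ell-k]!)^n$. Setting $k=0$ recovers the stated simplification, since $Y^{(0)}_{\ell,n}=Y_{\ell,n}+(1^{\ell n})$ exactly accounts for the factor $\prod_i w_i$.

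The step I expect to be the main obstacle is this last one: controlling the weighted sum over the $(\ell!)^n$ dominant assignments closely enough to prove both that it is nonzero (so that no cancellation destroys the leading term) and that it equals the precise $q$-factorial expression, and in the general-$k$ case correctly matching the per-copy $w$-shifts with the combinatorics of $Y^{(k)}_{\ell,n}$. Everything else --- the reduction to a mode sum, the dominant-regime limit, and the shape $Y_{\ell,n}$ of the exponent multiset --- is essentially forced by the degree count already carried out.
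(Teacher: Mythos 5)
Your overall strategy --- analysing the mode decomposition \eqref{eq:F_decomposition} in the regime $|w_1|\gg\cdots\gg|w_{\ell n}|$ and reading off the dominant exponents mode by mode --- is sound and is essentially the route the paper's one-line sketch suggests; your identification of the exponent multiset, of the shape $Y_{\ell,n}$, and of the block-bijection assignments that realise the sorted monomial is correct. However, your first displayed decomposition is wrong, and the error is fatal for the leading \emph{coefficient}. By \eqref{eq:pf_f_relations}, a pair of positions $a<b$ carrying \emph{distinct} modes does not contribute a scalar: it contributes $\frac{w_a-q^2w_b}{q^2w_a-w_b}$ when $i_a<i_b$ and $1$ when $i_a>i_b$. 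Combined with the symmetrizer factor $\frac{w_b-q^2w_a}{w_b-w_a}$, these become $\frac{w_a-q^2w_b}{w_a-w_b}$ and $\frac{w_b-q^2w_a}{w_b-w_a}$ respectively, which in the dominant regime tend to $1$ and to $q^2$ --- not uniformly to $q^2$ as you assert. So after pulling out the symmetrizer the individual terms are not of the form $C_\sigma(q)\prod_{\sigma(i)=\sigma(j)}(w_i-q^2w_j)(q^2w_i-w_j)$ with $C_\sigma$ scalar; they retain order-dependent rational factors.

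This matters because the inversion statistic of the mode word is precisely what produces the $q$-factorials. In your bookkeeping every one of the $(\ell!)^n$ dominant assignments acquires the \emph{same} power of $q$: the weights $q^{2\ell-2\epsilon_i}$ attached to equal-mode pairs are assignment-independent in aggregate, since each mode occurs in exactly $\binom{n}{2}$ such pairs. Hence your ``telescoping'' sum can only yield $(\ell!)^n$ times a monomial in $q$, which cannot equal $q^{n\binom{\ell}{2}}([\ell]!)^n$ for $\ell\ge 2$. With the correct pair weights, a dominant assignment given by block bijections $\pi_1,\dots,\pi_n\in S_\ell$ carries the extra factor $q^{2\sum_r\operatorname{inv}(\pi_r)}$ from within-block unequal-mode pairs, while cross-block unequal-mode pairs contribute exactly $\binom{\ell}{2}$ inversions per pair of blocks independently of the $\pi_r$; the identity $\sum_{\pi\in S_\ell}q^{2\operatorname{inv}(\pi)}=q^{\binom{\ell}{2}}[\ell]!$ is what delivers both the $([\ell]!)^n$ and the $q^{n\binom{\ell}{2}}$. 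The same defect infects your general-$k$ coefficient $([k]![\ell-k]!)^n$. You need to redo the decomposition keeping the order-dependent factors of \eqref{eq:pf_f_relations}; once that is done, the remainder of your argument (shape, lexicographic maximality, the $w$-shift by $\ell-k$ per block) goes through.
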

Sketch of proof: multiply $\left(\prod_i w_i \right) \braC{k} e^{2\ell n\bm \beta} \mathcal{F} \ketC{k}$ by the Vandermonde determinant, and try to maximize the degree of $w_1$, then try to maximize the degree of $w_2$ and so on.

Call a Young diagram $(r,k)$-admissible if $\lambda_i-\lambda_{i+k} \geq r$ for all $i$.
Then, it is clear that a staircase $(r,k)$-admissible Young diagram has the minimum number of boxes possible.
If we set $r=2$, $k=\ell$ and we restrict ourselves to diagrams of lenght $2n$ the only diagram with $\ell n(n-1)$ boxes is exacly $Y_{\ell,n}$.
For the case of modified staircases, the situation is different: there are several $(2,\ell)$-admissible Young diagrams with $\ell n(n-1) + (\ell-k) n$ boxes, the smallest of them being $Y_{\ell,n}^{(k)}$.

Macdonald polynomials~\cite{Macdonald} form a basis for the symmetric polynomials depending on two extra variables which we call $q_M$ and $t_M$.
The only result related to Macdonald polynomials that we need
for our purposes is the following theorem:
\begin{theorem}[Feigin, Jimbo, Miwa and Mukhin~\cite{FJMM-Macdonald}]\label{thm:FJMM}
 Let $q_M$, $t_M$ be such that $q_M^{r-1}t_M^{k+1}=1$.
 Let $\mathcal{V}$ be the vector space of symmetric polyynomials on $N$ variables satisfying the $(r,k)$--wheel condition.
 And let $\mathcal{M}$ be the vector space spanned by the Macdonald polynomials given by $(r,k)$-admissible Young diagrams and the Macdonald parameters $q_M$, $t_M$.

 Then $\mathcal{V}=\mathcal{M}$.
\end{theorem}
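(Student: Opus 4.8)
The plan is to establish the two inclusions $\mathcal{M}\subseteq\mathcal{V}$ and $\mathcal{V}\subseteq\mathcal{M}$ directly, working throughout in the Macdonald basis $P_\lambda=P_\lambda(z_1,\ldots,z_N;q_M,t_M)$. The two facts I rely on from the general theory are that the $P_\lambda$ form a basis of the symmetric polynomials and that they are unitriangular with respect to dominance order, $P_\lambda=m_\lambda+\sum_{\mu<\lambda}u_{\lambda\mu}m_\mu$. The resonance $q_M^{r-1}t_M^{k+1}=1$ enters both halves of the argument in an essential way, and it is exactly this relation that ties the two combinatorial notions together: a $(k+1)$-string of ratios $t_M q_M^{s_i}$ with $\sum_i s_i=r-1$ closes up under multiplication by $t_M$ (since then $t_M^{k+1}q_M^{r-1}=1$), which is what makes the vanishing locus a ``wheel.''

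For the inclusion $\mathcal{M}\subseteq\mathcal{V}$, I must show every admissible $P_\lambda$ vanishes on the wheel. By symmetry it is enough to treat one representative configuration, in which the first $k+1$ variables are set to a string $z_i=z_1\,t_M^{\,i-1}q_M^{s_1+\cdots+s_{i-1}}$ with $\sum_{j}s_j\le r-1$. I would specialize these ratios one at a time and control each step by the branching (Pieri) rule for Macdonald polynomials: restricting one variable to a resonant multiple of its neighbour expresses $P_\lambda$ as a sum over horizontal strips with coefficients $\psi$ that are products of factors of the form $1-q_M^{a}t_M^{b}$. The claim is that when $\lambda$ is $(r,k)$-admissible, i.e.\ $\lambda_i-\lambda_{i+k}\ge r$ for all $i$, the resonance forces one such factor to vanish at the wheel point, so the whole evaluation is zero; an induction on $N$ (or on $|\lambda|$) then organizes these cancellations.

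For the reverse inclusion $\mathcal{V}\subseteq\mathcal{M}$, take $f\in\mathcal{V}$ and expand $f=\sum_\lambda c_\lambda P_\lambda$. Since each admissible $P_\lambda$ already lies in $\mathcal{V}$ by the first step, the non-admissible part $g=\sum_{\mu\text{ non-adm}}c_\mu P_\mu$ also lies in $\mathcal{V}$, and it suffices to prove $g=0$. To each non-admissible $\mu$ I attach a wheel point $p_\mu$ and the evaluation functional $\mathrm{ev}_\mu(h)=h(p_\mu)$; such a point exists precisely because non-admissibility gives a block with $\mu_i-\mu_{i+k}\le r-1$, which is the slack needed to insert a resonant $(k+1)$-string. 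As $p_\mu$ lies on the wheel, $\mathrm{ev}_\mu$ annihilates all of $\mathcal{V}$, so in particular $\mathrm{ev}_\mu(P_\lambda)=0$ for every admissible $\lambda$. The key structural input is that the $p_\mu$ can be chosen so that $\mathrm{ev}_\mu$ is triangular against the Macdonald basis, $\mathrm{ev}_\mu(P_\nu)=0$ unless $\nu\le\mu$ in dominance and $\mathrm{ev}_\mu(P_\mu)\ne0$. Then, choosing a dominance-minimal $\mu$ with $c_\mu\ne0$, the relation $0=\mathrm{ev}_\mu(g)=\sum_\nu c_\nu\,\mathrm{ev}_\mu(P_\nu)$ collapses to $c_\mu\,\mathrm{ev}_\mu(P_\mu)=0$, forcing $c_\mu=0$ and hence $g=0$.

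The main obstacle is the interplay between admissibility and the resonance in both halves: in the forward inclusion, showing that the Pieri/branching coefficients on the wheel really do produce the claimed vanishing factor exactly when $\lambda$ is admissible; and in the reverse inclusion, establishing the triangularity and nonvanishing $\mathrm{ev}_\mu(P_\mu)\ne0$ of the wheel-point functionals, which I expect to require the interpolation (Knop--Sahi) Macdonald polynomials or the nonsymmetric theory to pin down the spectral points $p_\mu$ precisely. As an independent consistency check one can match Hilbert series: the graded count of $(r,k)$-admissible partitions in $N$ variables can be computed combinatorially and should agree with the graded dimension of the wheel-condition space, the minimal admissible diagrams being exactly the (modified) staircases $Y_{\ell,n}^{(k)}$ that already appeared above.
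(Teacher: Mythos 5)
The first thing to note is that the paper does not prove this statement at all: it is quoted as a theorem of Feigin, Jimbo, Miwa and Mukhin and used as a black box, so there is no internal proof to compare yours against; your attempt has to be judged against the actual FJMM argument.

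Your overall strategy (admissible $P_\lambda$ vanish on wheels via step-by-step specialization and the branching rule; the reverse inclusion via dominance-triangular evaluation functionals at wheel points) is the right kind of strategy and is close in spirit to what FJMM do. But as written it is a program rather than a proof: the two steps you yourself flag as ``the main obstacle'' --- that the branching coefficients produce a vanishing factor at the wheel exactly when $\lambda$ is $(r,k)$-admissible, and that one can choose wheel points $p_\mu$ for which $\mathrm{ev}_\mu(P_\nu)=0$ unless $\nu\le\mu$ with $\mathrm{ev}_\mu(P_\mu)\ne0$ --- are precisely the substance of the theorem, and neither is established. For ordinary (non-interpolation) Macdonald polynomials the vanishing of $P_\nu$ at a spectral point attached to $\mu$ is not automatic, so the second half genuinely requires the nonsymmetric or Knop--Sahi interpolation machinery you only allude to. There is also an unaddressed issue that precedes both inclusions: at the resonance $q_M^{r-1}t_M^{k+1}=1$ the Macdonald polynomials can acquire poles (this is exactly where the Gram matrix of the Macdonald inner product degenerates), so the expansion $f=\sum_\lambda c_\lambda P_\lambda$, and even the assertion that $P_\lambda$ is well defined for admissible $\lambda$, require proof; FJMM devote real work to showing regularity of the admissible $P_\lambda$ at the resonant locus. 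Your Hilbert-series consistency check is a good instinct --- the count of admissible partitions matching the graded dimension of the wheel space is indeed part of how the equality $\mathcal{V}=\mathcal{M}$ is closed in the original reference --- but on its own it does not substitute for either missing step.
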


As a direct consequence we can compute the correlation function:
\begin{theorem}
\[
 \braC{0} e^{2 \ell n \bm \beta} \bm{f}(w_1) \ldots \bm{f}(w_{\ell n}) \ketC{0} 
\prod_{i<j} \frac{w_j-q^2 w_i}{w_j- w_i}
= \gamma_{\ell,n}  P_{Y_{\ell,_n}} (w_1,\ldots,w_{\ell n}) 
\]
where $P_{Y_{\ell,n}}$ is the Macdonald polynomial with parameters $t_M = q^{-2}$ and $q_{M} = t_M^{-(\ell+1)}$.
\end{theorem}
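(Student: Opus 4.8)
The plan is to realize the left-hand side as an element of the space $\mathcal{V}$ of symmetric polynomials obeying the wheel condition, and then to pin it down inside $\mathcal{M}$ by a degree count. Write $G \coleq \braC{0} e^{2\ell n\bm\beta}\mathcal{F}\ketC{0}$, where $\mathcal{F}$ is as in~\eqref{eq:F_decomposition}, so that $G$ is exactly the left-hand side of the statement. From the mode analysis of Section~\ref{sec:corrpara} I would collect three facts: $G$ is a symmetric polynomial in $w_1,\ldots,w_{\ell n}$; its total degree is $\ell n(n-1)$; and, as a consequence of Theorem~\ref{thm:pf-wheel} applied to the parafermionic part, $G$ satisfies the $(2,\ell)$-wheel condition with $t_M = q^{-2}$. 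I would then record that the Macdonald parameters in the statement are compatible with Theorem~\ref{thm:FJMM}: taking $r=2$, $k=\ell$, $t_M=q^{-2}$ and $q_M = t_M^{-(\ell+1)} = q^{2\ell+2}$ gives $q_M^{r-1}t_M^{k+1} = q^{2\ell+2}q^{-2(\ell+1)} = 1$.

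Next I would invoke Theorem~\ref{thm:FJMM}, which identifies $\mathcal{V}$ with $\mathcal{M}=\mathrm{span}\{P_\lambda : \lambda\ (2,\ell)\text{-admissible}\}$, so that $G = \sum_\lambda c_\lambda P_\lambda$ over $(2,\ell)$-admissible $\lambda$. The decisive observation is combinatorial: a $(2,\ell)$-admissible partition in at most $\ell n$ parts satisfies $\lambda_i - \lambda_{i+\ell}\geq 2$, and saturating these inequalities from the bottom up forces $|\lambda|\geq \ell n(n-1)$, with equality if and only if $\lambda = Y_{\ell,n}$ (this is the minimality already noted after Proposition~\ref{prop:leading}). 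Since each $P_\lambda$ is homogeneous of degree $|\lambda|$ and the $P_\lambda$ are linearly independent, the vanishing of all homogeneous components of $G$ of degree exceeding $\ell n(n-1)$ forces $c_\lambda = 0$ unless $|\lambda| = \ell n(n-1)$; hence only $\lambda = Y_{\ell,n}$ survives and $G = \gamma_{\ell,n} P_{Y_{\ell,n}}$.

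Finally I would fix the constant by comparing leading terms. Since $P_{Y_{\ell,n}} = m_{Y_{\ell,n}} + (\text{dominance-lower terms})$ is monic, the coefficient of $w^{Y_{\ell,n}}$ in $G$ equals $\gamma_{\ell,n}$, and Proposition~\ref{prop:leading} then yields $\gamma_{\ell,n} = q^{\ell^2 n(n-1)+n\binom{\ell}{2}}([\ell]!)^n$. The genuinely hard inputs --- the wheel condition via Theorem~\ref{thm:pf-wheel} and the leading-monomial computation of Proposition~\ref{prop:leading} --- are already established, so the main remaining obstacle is the clean combinatorial uniqueness statement that $Y_{\ell,n}$ is the unique $(2,\ell)$-admissible partition of minimal size $\ell n(n-1)$; everything else is assembly and bookkeeping of the constant.
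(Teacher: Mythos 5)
Your proposal is correct and follows essentially the same route as the paper's own proof: symmetry, homogeneity of degree $\ell n(n-1)$, and the $(2,\ell)$-wheel condition place the correlation function in the span of Macdonald polynomials indexed by $(2,\ell)$-admissible diagrams (Theorem~\ref{thm:FJMM}), the minimality and uniqueness of $Y_{\ell,n}$ among such diagrams with $\ell n$ parts and $\ell n(n-1)$ boxes isolates a single term, and Proposition~\ref{prop:leading} fixes the coefficient $\gamma_{\ell,n}$. Your write-up is in fact slightly more careful than the paper's (which has a typo referring to ``$2n$ variables'' where $\ell n$ is meant), but no new ideas are involved.
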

The proportionality factor is the leading coefficient, i.e. 
\[
 \gamma_{\ell,n} = q^{\ell^2 n(n-1) + n\binom{\ell}{2}} \left([\ell]!\right)^n
\]

\begin{proof}
 We have proven above that the correlation function in the theorem is a homogeneous symmetric polynomial that satisfies the wheel condition, more precisely the $(2,\ell)$-wheel condition.
 Therefore it lives in a vector subspace spanned by Macdonald polynomials associated with $(2,\ell)$-admissible Young diagrams. 
  
 The polynomial depends in $2n$ variables and has a combined degree of $\ell n (n-1)$.
 The diagram $Y_{\ell,n}$ is the smallest one that satisfies the wheel condition (with $2n$ parts) and it has exacly $\ell n (n-1)$.
 Therefore, it corresponds to the only Macdonald polynomial with such conditions.
\end{proof}

As $\braC{k} e^{2\ell n \beta} \mathcal{F} \ketC{k}$ also satisfies the wheel condition, we can express it as well as a sum over Macdonald polynomials:
$
 \left(\prod_i w_i\right) \braC{k} e^{2\ell n \beta} \mathcal{F} \ketC{k} = \sum_Y c_{k,Y} P_Y(w_1, \ldots, w_{\ell n})
$
where the sum runs over all $(2,\ell)$-admissible Young diagrams with $\ell n (n-1) + n (\ell -k)$ boxes.
By proposition~\ref{prop:leading}, we know that the leading monomial is $w^{Y_{\ell,n}^{(k)}}$ and therefore only term appears in this sum:
\[
 \left(
\prod_{i=1}^{\ell n} w_i
\right)
\braC{k} e^{2\ell n \beta} \mathcal{F} \ketC{k} = \gamma_{\ell,n}^{(k)} P_{Y_{\ell,n}^{(k)}}(w_1, \ldots, w_{\ell n})
\]
with coefficient
\[
 \gamma_{\ell,n}^{(k)} = q^{\ell^2 n(n-1) + n\binom{\ell}{2}} \left([k]![\ell-k]!\right)^n
\]
Notice that this implies that $P_{Y_{\ell,n}^{(0)}} = \left(\prod_i w_i \right) P_{Y_{\ell,n}}$.
\subsubsection{Derivation of integral formulae}
We have computed all the ingredients to compute $\Psi^{(k)}_{b_1,\ldots,b_{2n}}$.
In order to express the result, we use the following reparameterization
of the index sequence $\{b_1,\ldots,b_{2n}\}$.
Let $\bm \e=\{\e_1,\ldots,\e_{\ell n}\}$ be a non-decreasing sequence, such that $1\leq \e_i \leq 2n$, and such that, the same value can not be repeated more that $\ell$ times.
Then, there is a bijection between the set of neutral spin sequences $\{b_1,\ldots,b_{2n}\}$ and the non-decreasing sequences $\bm \e$, given by: $b_j = \#\{\e_i \ \text{such that }\e_i=j\}$.
To each $\e_i$ equal to $j$ corresponds a current $\bm f(w_i)$ that is used to raise the spin of $\Phi_0 (z_j)$.

We then compute:
\begin{itemize}
 \item From the product $\Phi_0 (z_1) \ldots \Phi_0 (z_{2n})$ we get: 
\[
\Phi_0 (z_1) \ldots \Phi_0 (z_{2n}) =
 \prod_{i<j} (-z_i)^{\frac{\ell}{2}} \frac{\left(q^2 \frac{z_j}{z_i};q^4\right)_{\infty}}{\left(q^{2\ell+2} \frac{z_j}{z_i};q^4\right)_{\infty}} \normord{\Phi_0 (z_1) \ldots \Phi_0 (z_{2n})}
\]

 \item When $\e_i < j$, we have pairs of the form $\bm f (w_i) \Phi_0 (z_j)$, and then:
\[
 \bm f (w_i) \Phi_0 (z_j) =  \frac{1}{w_i-q^{\ell}z_j} \normord{\bm f(w_i) \Phi_0 (z_j)}
\]
 \item When $\e_i \geq j$, we have pairs of the form $\Phi^0 (z_j) \bm f (w_i)$, and then:
\[
 \Phi_0 (z_j) \bm f (w_i) =  \frac{1}{q^{\ell} w_i-z_j} \normord{\bm f(w_i) \Phi_0 (z_j)}
\]

 \item Finally, we get an extra term for the case $\e_i = j$:
\[
 \Phi_{b_j} (z_j) = \alpha_{b_j} \oint \ldots \oint \left(\prod_{i:\e_i=j} \frac{w_i dw_i}{2\pi i}\frac{1}{w_i-q^{\ell} z_j}\right) \Phi_0 (z_j) \prod_{i:\e_i=j} \bm f (w_i)
\]
where the last product is in the increasing order.
\end{itemize}

We conclude using lemma~\ref{prop:no_several}. We first write the case $k=0$:
\begin{multline}\label{eq:int}
 \Psi^{(0)}_{b_1,\ldots,b_{2n}} (z_1,\ldots,z_{2n}) = \gamma_{\ell,n} \prod_{i=1}^{2n} \alpha_{b_i}
 \prod_{1\le i<j\le 2n} (-z_i)^{\frac{\ell}{2}} \frac{\left(q^2 \frac{z_j}{z_i};q^4\right)_{\infty}}{\left(q^{2\ell+2} \frac{z_j}{z_i};q^4\right)_{\infty}}\\
 \times  \oint \ldots \oint \prod_{i=1}^{\ell n}\frac{w_i dw_i}{2\pi i}
 \prod_{i<j}\frac{w_j-w_i}{w_j-q^2 w_i} \frac{P_{Y_{\ell,n}} (w_1,\ldots, w_{\ell n})}{\prod_{j \leq \e_i} (q^{\ell} w_i-z_j) \prod_{j \geq \e_i}(w_i - q^{\ell}z_j)} 
\end{multline}
Where the countours are chosen such that
\begin{align*}
 |w_i|&<|q^{-\ell}z_j|\qquad\text{for all }j\leq \e_i &
 |w_i|&>|q^{\ell}z_j| \qquad\text{for all }j\geq \e_i\\
 |z_i | & > |q^2 z_j| \qquad\text{for all }i < j &
 |w_i|&>|q^{-2} w_j| \qquad\text{for all }i<j
\end{align*}
This choice is consistent with remark~\ref{rmk:normal_order}.

The same computation can be repeated for $\Psi_{b_1,\ldots,b_{2n}}^{(k)} = \braC{k} \Phi_{b_1} \ldots \Phi_{b_{2n}}\ketC{k}$, and we obtain:
\begin{multline}\label{eq:intb}
 \Psi_{b_1,\ldots,b_{2n}}^{(k)} (z_1,\ldots,z_{2n}) = \gamma_{\ell,n}^{(k)} \prod_{i=1}^{2n} \alpha_{b_i} (-z_i)^{\frac{k}{2}}
 \prod_{1\le i<j\le 2n} (-z_i)^{\frac{\ell}{2}} \frac{\left(q^2 \frac{z_j}{z_i};q^4\right)_{\infty}}{\left(q^{2\ell+2} \frac{z_j}{z_i};q^4\right)_{\infty}} \\
 \times \oint \ldots \oint \prod_{i=1}^{\ell n}\frac{dw_i}{2\pi i}
  \prod_{i<j}\frac{w_j-w_i}{w_j-q^2 w_i} \frac{P_{Y_{\ell,n}^{(k)}} (w_1,\ldots, w_{\ell n})}{\prod_{j \leq \e_i} (q^{\ell} w_i-z_j) \prod_{j \geq \e_i}(w_i - q^{\ell}z_j)} 
\end{multline}

In general, we do not expect a simple formula for the Macdonald polynomial $P_{Y_{\ell,n}^{(k)}}$, except in two cases:
\begin{enumerate}
\item
If $\ell=1$, the parafermionic part of the current is trivial, and 
$P_{Y_{1,n}^{(k)}} = \prod_i w_i^{1-k} \allowbreak \prod_{i < j} (w_i-q^2 w_j) (w_i - q^{-2} w_j)$, and we recover
the standard level $1$ formulae found in e.g.~\cite{JM-book}.
\item 
If $\ell=2$, the parafermionic field reduces to a ($q$-deformed) fermionic
field \cite{Ber-VOB}, and
$P_{Y_{2,n}^{(k)}} = \prod_{i < j} \frac{(w_i-q^2 w_j)(w_i-q^{-2}w_j)}{w_i-w_j} \Pf \left(\frac{(w_i-w_j)E_{2-k}(w_i,w_j)}{(w_i-q^2 w_j)(w_i-q^{-2} w_j)}\right)_{1\leq i,j\leq 2n}$, where $E_m(w_i,w_j)$ is the elementary symmetric polynomial, reproducing results of 
\cite{Idz}.
\end{enumerate}

In the rest of this section, we shall write $\Psi:=\Psi^{(k)}$ when
there is no risk of confusion.

\subsection{Quantum Knizhnik--Zamolodchikov equation}
A detailed proof of the $q$KZ equation can be found in \cite[appendix A]{IIJMNT}). For the convenience of the reader, we provide a short graphical proof
in appendix~\ref{app:qKZ}.

Define
\[
R_\pm(z)=\phi_\pm(z)\bar R(z)
\]
with
\begin{align*}
\phi_-(z)&=q^{-\ell^2/2}\frac{\poc{q^2z}^2}{\poc{q^{2\ell+2}z}\poc{q^{-2\ell+2}z}}
\\
\phi_+(z)&=\phi_-(1/z)^{-1}
\end{align*}

Note that compared to the other normalization $R(z)$ we have used before,
one has:
\[
R_\pm(z)=\tilde\phi_\pm(z) R(z)
\qquad
\tilde\phi_-(z)=
\begin{cases}
(-1)^m
&\ell=2m\text{ even}
\\
(-1)^m (z/q)^{1/2}
\frac{\poc{q^2z}\poc{q^2/z}}{\poc{z}\poc{q^4/z}}
&\ell=2m+1\text{ odd}
\end{cases}
\]
and $\tilde\phi_+(z)=\tilde\phi_-(1/z)^{-1}$.

The following equation is then
satisfied by 
$\Psi(z_1,\ldots,z_L)=\braC{k}\Phi(z_1)\ldots\Phi(z_L)\ketC{k}$:
\begin{multline}\label{eq:qKZ}
R_-(z_{i+1}/(sz_i))\ldots R_-(z_L/(sz_i))K_1^{1+k}
R_+(z_1/z_i)\ldots R_+(z_{i-1}/z_{i})
\Psi(z_1,\ldots,z_L)
\\=\Psi(z_1,\ldots,s\,z_i,\ldots,z_L)
\qquad i=1,\ldots,L
\end{multline}
where $s=q^{-2(\ell+2)}$; all the indices are suppressed and can be recovered
by following the spectral parameters $z_i$, see appendix~\ref{app:qKZ} for
details. This is the quantum Knizhnik--Zamolodchikov equation, first
introduced in \cite{FR-qKZ}.

As a consequence of \eqref{eq:exchPhi} for the VOs we also have
the following identity (exchange relation):
\begin{equation}\label{eq:exch}
\check R_{i,i+1}(z_{i}/z_{i+1})\Psi(z_1,\ldots,z_L)
=\Psi(z_1,\ldots,z_{i+1},z_i,\ldots,z_L)
\end{equation}
We shall come back to this equation in section~\ref{sec:poly}.

\subsection{Recurrence relations}
Define for convenience the ``dual vertex operator'' $\Phi^\ast$,
which by self-duality of evaluation representation can be expressed 
in terms of $\Phi$ as:
\begin{equation}\label{eq:dualvo}
\Phi_b^\ast(z)=c_b z^{\ell/2}\Phi_{\ell-b}(q^{-2}z)
\end{equation}
where $c_b=(-1)^{\nicefrac{\ell}{2}+b+k} q^{k(\ell-k)-b(\ell-b)} \left[ \ell \atop k \right] /\left[\ell\atop b\right]$.

From general arguments 
one can prove that
\begin{subequations}\label{eq:dualrel}
\begin{align}
\Phi_b(z)\Phi^\ast_{b'}(z)&=\delta_{b,b'} g_k
\\
\sum_{b=0}^\ell \Phi^\ast_b(z)\Phi_b(z)&= g_k
\end{align}
\end{subequations}
where $g_k=q^{k(\ell-k)} \left[\ell\atop k\right]
\frac{\poc{q^{2(\ell+1)}}}{\poc{q^2}}$.
cf~\cite[Prop.~4.1]{IIJMNT}.

This implies immediately the following recurrence relations 
for the correlation functions:
\begin{subequations}\label{eq:prerecrel}
\begin{align}\label{eq:prerecrela}
\Psi_{\ldots,b,b',\ldots}(\ldots,z,q^{-2}z,\ldots)&=
\frac{g_k}{c_b} \delta_{b+b',\ell}
z^{-\ell/2}\,
\Psi_{\ldots}(\ldots)
\\\label{eq:prerecrelb}
\sum_{b=0}^\ell 
c_b
\Psi_{\ldots,\ell-b,b,\ldots}(\ldots,q^{-2}z,z,\ldots)&=g_k
z^{-\ell/2}\,
\Psi_{\ldots}(\ldots)
\end{align}
\end{subequations}
where $\dots$ are omitted (arbitrary) arguments.


\subsection{Polynomiality, cyclicity}\label{sec:poly}
%
It is convenient to redefine
\begin{equation}\label{eq:defpol}
\poly\Psi^{(k)}(z_1,\ldots,z_L)
=
\prod_{i=1}^L (-z_i)^{(L-1)\ell/4 + k/2}
\prod_{1\le i<j\le L} F(z_j/z_i)\, 
\Psi^{(k)}(z_1,\ldots,z_L)
\end{equation}
where
\[
F(z)=z^{-\ell/4} \frac{\poc{q^4z}}{\poc{q^{2(\ell+2)}z}}
\]
Note that the prefactor
$\prod_{i=1}^L z_i^{(L-1)\ell/4 + k/2}$ combined with equation~\eqref{eq:dimPhi}
implies that \linebreak $\poly\Psi(xz_1,\ldots,xz_L)=x^{\ell n(n-1) + kn}\poly\Psi(z_1,\ldots,z_L)$
(where $L=2n$);
in fact, rewriting the integral formula~\eqref{eq:int} as:
\begin{equation}\label{eq:Psi_pol}
 \begin{aligned}
 \poly\Psi^{(k)}_{b_1,\ldots,b_{2n}} (z_1,\ldots,z_{2n}) &=\gamma_{\ell,n}^{(k)} \prod_{i=1}^{2n} \alpha_{b_i} z_i^k 
\prod_{1\le i<j\le 2n} \prod_{r=1}^\ell (q^{2r}z_j-z_i)
\oint \ldots \oint \prod_{i=1}^{\ell n}\frac{dw_i}{2\pi i}\\
 & \quad \times \prod_{i<j}\frac{w_j-w_i}{w_j-q^2 w_i} \frac{P_{Y_{\ell,n}^{(k)}} (w_1,\ldots, w_{\ell n})}{\prod_{j \leq \e_i} (q^{\ell} w_i-z_j) \prod_{j \geq \e_i}(w_i - q^{\ell}z_j)} 
 \end{aligned}
\end{equation}
(where we recall that $\bm\e=(\e_1,\ldots,\e_{\ell n})$ is the sequence such that
$b_j = \#\{i \ \text{such that }\e_i=j\}$)
one can show (see appendix \ref{app:poly})
that $\poly\Psi^{(k)}$ is a (vector-valued) {\em polynomial}
in the variables $z_1,\ldots,z_L$ of degree $\ell n(n-1)+kn$.
Furthermore, its coefficients are rational functions of $q$
whose denominators are products of $1-q^{2j}$, $j=1,\ldots,\ell$.

For example, when $\ell=2$ and $n=2$ there are nineteen components, among which:
\begin{align*}
 \bm \Psi^{(0)}_{1,2,1,0} (z_1,z_2,z_3,z_4)&= (q+q^3) (z_1-q^2 z_2)(z_2-q^2 z_3)(z_3-q^2 z_4)(z_1-q^6 z_4)\\
 \bm \Psi^{(0)}_{1,2,0,1} (z_1,z_2,z_3,z_4)&= -(q+q^3) (z_1-q^2 z_2)(z_3-q^2 z_4) \\
 & \quad \times(q^2 z_1 z_2+z_1 z_3-q^4 z_1 z_3 - q^4 z_2 z_3 - q^4 z_1 z_4 + q^{10} z_3 z_4)
\end{align*}
For generic $n$ and $\ell$ the simplest component is the one with $b_i = 0$ for all $i\leq n$ and $b_i = \ell$ for the remainings $i > \ell$:
\begin{equation}\label{eq:simplecomp}
 \bm \Psi^{(k)}_{0, \ldots,0,\ell,\ldots,\ell} = (-1)^{\ell \frac{n(n+1)}{2}} q^{(\ell-k) n + \ell n \frac{n-\ell}{2}} \prod_{i=1}^n z_i^k \prod_{1 \leq i < j \leq n} \prod_{r=1}^{\ell} (q^{2r}z_j-z_i)(q^{2r}z_{n+j}-z_{n+i})
\end{equation}
and all its rotations.


Now define yet another normalization of the $R$-matrix, namely
$\poly R(z)=\frac{F(z)}{F(1/z)} R(z)$,
which is adapted to $\poly\Psi$ from \eqref{eq:defpol}.
Explicitly,
\begin{equation}\label{eq:defRp}
\check {\poly R}(z)=\sum_{j=0}^\ell \prod_{r=1}^j 
\frac{1-q^{2r}z}{z-q^{2r}}P_{j}
\end{equation}

The exchange relation \eqref{eq:exch} becomes
\begin{equation}\label{eq:exchp}
\check{\poly R}_{i,i+1}(z_{i}/z_{i+1})\poly\Psi(z_1,\ldots,z_L)
=\poly\Psi(z_1,\ldots,z_{i+1},z_i,\ldots,z_L)
\end{equation}
Starting from \eqref{eq:qKZ} at say $i=1$, switching to $\poly\Psi$,
applying repeatedly \eqref{eq:exchp} and using $\tilde\phi_+(1/z)F(sz)/F(1/z)=(-1)^\ell$,
we find the cyclicity relation:
\begin{equation}\label{eq:cycl}
(-1)^\ell s^{(n-1)\ell/2} q^{(1+k)a_{0,L}} S\poly\Psi(z_L,z_1,\ldots,z_{L-1})=\poly\Psi(z_1,z_2,\ldots,sz_L)
\end{equation}
where $S: 
V_{z_L}\otimes V_{z_1}\otimes\cdots\otimes V_{z_{L-1}}
\to 
V_{z_1}\otimes\cdots\otimes V_{sz_L}
$ 
rotates cyclically the tensor product.

The system of equations (\ref{eq:exchp}--\ref{eq:cycl}) is similar
to the one that appeared first in \cite{Smi} in the study of form factors.
Note that the power of $s$ is a reflection of the homogeneity of
$\poly\Psi(z_1,\ldots,z_L)$. In components, \eqref{eq:cycl} writes:
\[
(-1)^\ell s^{(n-1)\ell/2} q^{(1+k)(2b_L-\ell)} \poly\Psi_{b_L,b_1,\ldots,b_{L-1}}(z_L,z_1,\ldots,z_{L-1})=\poly\Psi_{b_1,\ldots,b_L}(z_1,z_2,\ldots,sz_L)
\]

Other formulae can be rewritten in terms of $\poly\Psi$ as well;
in particular, the recurrence relations \eqref{eq:prerecrel} become:
\begin{subequations}\label{eq:reqrel}
\begin{multline}\label{eq:recrela}
\poly\Psi_{\ldots,b,b',\ldots}(\ldots,z_i,z_{i+1}=q^{-2} z_i,\ldots)=
(-1)^{k-\nicefrac{\ell}{2}} \frac{1}{c_b} \delta_{b+b',\ell} \left[ \ell \atop k \right] q^{k(\ell-k-1)}
\\
z_i^k\,
\prod_{j=1}^{i-1} \prod_{r=1}^\ell(z_j-q^{2r}z_i)
\prod_{j=i+2}^L \prod_{r=1}^\ell(z_j-q^{-2(r+1)}z_i)
\poly\Psi_{\ldots}(\ldots)
\end{multline}
\begin{multline}\label{eq:recrelb}
\sum_{b=0}^\ell c_b
\poly\Psi_{\ldots,\ell-b,b,\ldots}(\ldots,\,
z_{i}=q^{-2} z_{i+1},z_{i+1},\ldots)=
(-1)^{k-\nicefrac{\ell}{2}} \left[ \ell \atop k \right] q^{k(\ell - k +1)}
\\
z_i^k\,
\prod_{j=1}^{i-1} \prod_{r=1}^\ell(z_j-q^{2r}z_{i+1})
\prod_{j=i+2}^L \prod_{r=1}^\ell(z_j-q^{-2(r+1)}z_{i+1})
\poly\Psi_{\ldots}(\ldots)
\end{multline}
\end{subequations}

\subsection{Wheel condition}
Finally, we point out that $\poly\Psi$ satisfies a wheel condition
of a different nature than 
that of the Macdonald polynomials of sect.~\ref{sec:corrpara}. 
Namely, we have the following theorem:
\begin{theorem}\label{thm:wheel}
Assume that three parameters $z_{i_1}$, $z_{i_2}$, $z_{i_3}$,
$i_1<i_2<i_3$, form a ``wheel'': $z_i=q^{2(a+b)}z$, $z_j=q^{2b}z$, $z_k=z$
with $a,b$ integers such tht $a,b>0$ and $a+b<\ell+2$. Then
\begin{equation}\label{eq:wheel}
\poly\Psi(\ldots,q^{2(a+b)}z,\ldots,q^{2b}z,\ldots,z,\ldots)=0
\end{equation}
\end{theorem}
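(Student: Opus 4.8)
The plan is to work from the integral representation \eqref{eq:Psi_pol} and to argue that the explicit prefactor already supplies the required zeros, the only danger being that the contour integral reintroduces compensating poles. First I would reduce to the case where the three wheel parameters are consecutive, say at positions $i,i+1,i+2$: moving a wheel variable past a non-wheel one only invokes $\check{\poly R}(z)$ at a ratio of a wheel value by a generic $z_m$, which is generic and hence invertible, so by \eqref{eq:exchp} the general wheel is obtained from the consecutive one by an invertible transformation and it suffices to treat the latter. With $z_{i}=q^{2(a+b)}z$, $z_{i+1}=q^{2b}z$, $z_{i+2}=z$, the prefactor $\prod_{p<r}\prod_{s=1}^{\ell}(q^{2s}z_r-z_p)$ (common to all components) acquires zeros from the three pairs inside $\{i,i+1,i+2\}$: pair $(i,i+1)$ gives $q^{2b}z\,(q^{2s}-q^{2a})$, vanishing at $s=a$; pair $(i+1,i+2)$ gives $z\,(q^{2s}-q^{2b})$, vanishing at $s=b$; pair $(i,i+2)$ gives $z\,(q^{2s}-q^{2(a+b)})$, vanishing at $s=a+b$ provided $a+b\le\ell$. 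Since $1\le a,b$ and $a+b\le\ell+1$ we always have $a,b\le\ell$, so the prefactor vanishes to order at least two, and to order three when $a+b\le\ell$.

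It then remains to bound the order of the pole that the integral
\[
I=\oint\!\cdots\!\oint\prod_{i}\frac{dw_i}{2\pi i}\;\prod_{i<j}\frac{w_j-w_i}{w_j-q^2w_i}\;\frac{P_{Y^{(k)}_{\ell,n}}(w)}{\prod_{j\le\e_i}(q^{\ell}w_i-z_j)\prod_{j\ge\e_i}(w_i-q^{\ell}z_j)}
\]
can develop at the wheel. A pole in the $z$'s can only arise by pinching of a $w$-contour between an outer pole $w_i=q^{-\ell}z_j$ ($j\le\e_i$) and an inner pole $w_i=q^{\ell}z_{j'}$ ($j'\ge\e_i$), possibly linked through the poles $w_j=q^2w_i$ of the factors $1/(w_j-q^2w_i)$ into a chain. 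A chain of length $m$ uses $m+1$ of the $w$'s at successive ratios $q^{2}$ and produces a simple pole on a hyperplane of the form $z_j=q^{2(\ell-m)}z_{j'}$. At the wheel the three relevant ratios are $q^{2a},q^{2b},q^{2(a+b)}$, i.e. $m=\ell-a,\ \ell-b,\ \ell-(a+b)$; the last is admissible only when $a+b\le\ell$, in agreement with the presence of the corresponding prefactor zero.

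The crux, and what I expect to be the main obstacle, is that these pinches cannot all act at full strength simultaneously, and this is where the wheel condition of the Macdonald polynomial enters. By Theorem \ref{thm:FJMM} (with $t_M=q^{-2}$, $r=2$, $k=\ell$), the factor $P_{Y^{(k)}_{\ell,n}}(w)$ vanishes whenever $\ell+1$ of the $w_i$ cascade at ratio $q^{-2}$; this annihilates the long chains, and, more importantly, the overlapping (hence too long) chains that would be needed for the joint pinch at the codimension-two (resp. three) wheel locus. The delicate point is that naive one-variable pole counting gives only a finite, not obviously vanishing, limit at $a+b=\ell$ and $a+b=\ell+1$; one must show instead that the Macdonald vanishing lowers the \emph{joint} order of the pole of $I$ strictly below the order of vanishing of the prefactor. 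Granting this, the product in \eqref{eq:Psi_pol} tends to zero, which is \eqref{eq:wheel}.

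As a consistency check, and handling the two extreme cases with no integral at all: if $a=1$ then $z_{i+1}=q^{-2}z_i$ and the recurrence \eqref{eq:recrela} applied to $(z_i,z_{i+1})$ produces the factor $\prod_{s=1}^{\ell}(z_{i+2}-q^{-2(s+1)}z_i)=\prod_s z\,(1-q^{2(b-s)})$, which vanishes at $s=b$; if $b=1$ then $z_{i+2}=q^{-2}z_{i+1}$ and \eqref{eq:recrela} applied to $(z_{i+1},z_{i+2})$ produces $\prod_{s=1}^{\ell}(z_i-q^{2s}z_{i+1})=\prod_s q^2z\,(q^{2a}-q^{2s})$, vanishing at $s=a$. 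In either case $\poly\Psi$ vanishes on the wheel, which both confirms the mechanism and supplies the base of an induction in $\min(a,b)$ should the pinching bound be organized inductively.
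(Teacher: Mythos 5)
Your strategy is genuinely different from the paper's, but it has a gap at exactly the step you yourself flag as ``the crux,'' and that gap is the whole difficulty. The prefactor analysis is fine: at the consecutive wheel the product $\prod_{p<r}\prod_{s=1}^{\ell}(q^{2s}z_r-z_p)$ vanishes on each of the two (or three) hyperplanes $z_{i}=q^{2a}z_{i+1}$, $z_{i+1}=q^{2b}z_{i+2}$, $z_{i}=q^{2(a+b)}z_{i+2}$. But the wheel locus is the codimension-two \emph{intersection} of these hyperplanes, and a rational function with only simple poles along each hyperplane separately can still produce a nonzero limit when multiplied by a prefactor that vanishes to first order on each: think of $xy\cdot\frac{1}{xy}$ at $x=y=0$. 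So ``order of the prefactor zero'' versus ``order of the pinch pole'' counted hyperplane by hyperplane does not decide the question; you need to control the \emph{joint} singularity of the integral at the intersection, and the assertion that the Macdonald wheel condition on $P_{Y^{(k)}_{\ell,n}}$ forbids the simultaneous pinches is precisely what is asserted but not proved (``Granting this\ldots''). Since $\poly\Psi$ is already known to be a polynomial (appendix on polynomiality), no individual hyperplane pole survives anyway; the entire content of the theorem is the behaviour at the intersection, which your argument does not reach. The $a=1$ and $b=1$ checks via \eqref{eq:recrela} are correct but do not seed an induction on $\min(a,b)$, because no mechanism is given to reduce a general $(a,b)$ wheel to one with smaller $\min(a,b)$.

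For comparison, the paper's proof avoids the integral entirely and works representation-theoretically. It uses the $U_1$-invariance of $\poly\Psi^{(0)}$ to place it in the image of the loop-to-spin map $\varphi_\ell$, shows via the explicit form \eqref{eq:defRp} of $\check{\poly R}$ and the exchange relation \eqref{eq:exchp} that at the wheel specialization $\poly\Psi$ lies simultaneously in $\sum_{j<a}\operatorname{Im}P_{j;i,i+1}$ and $\sum_{j<b}\operatorname{Im}P_{j;i+1,i+2}$, translates these projector conditions into bounds $c_{i,i+1}(\pi)<a$ and $c_{i+1,i+2}(\pi)<b$ on the connectivities of the fused link patterns that can appear in its expansion, and derives a contradiction from $(\ell+1-a)+(\ell+1-b)>\ell$. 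The case $k>0$ is handled by a local variant of the same argument. If you want to salvage your analytic route you would need a precise multivariable statement about the denominator of the contour integral as a rational function of the $z$'s near the intersection locus --- essentially a factorized-denominator lemma --- which is likely harder than the combinatorial argument the paper uses.
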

The proof is similar to the one given in \cite[theorem 3]{artic37}
and is reproduced in appendix~\ref{sec:proofwheel}.

Note furthermore that the entries of $\poly\Psi$ are {\em nonsymmetric}\/
polynomials, hence the ordering condition for its arguments in the wheel
condition. This wheel condition is not a special case
of the one considered in \cite{Kasa-wheel} for nonsymmetric Macdonald
polynomials.

\section{Application to integrable spin chains}\label{sec:applic}
In all the discussion that precedes concerning vertex operators,
it was assumed that $|q|<1$.
However, we have seen in section~\ref{sec:poly}
that one can get rid of all infinite products in explicit formulae
by a redefinition of the solution $\poly\Psi$ of $q$KZ and of
the $R$-matrix $\poly R$. The former 
becomes a polynomial of $z_1,\ldots,z_L$ (with coefficients
of the form $P(q)/\prod_{i=1}^\ell (1-q^{2i})^{k_i}$) 
whereas the latter
becomes a rational function of them. As a result, we can now relax
the constraint $|q|<1$ and in particular consider the case where
$q^2$ is a primitive $(\ell+2)^{\text{th}}$ root of unity,
so that $s=q^{-2(\ell+2)}=1$. It is the purpose of this section to show
that $\poly\Psi$ then becomes an eigenvector of an integrable 
transfer matrix. 

Note that since $\poly\Psi\sim \braC{k}
\Phi\ldots\Phi\ketC{k}$, our procedure is similar to a ``matrix product
Ansatz'', where the role of the
matrix algebra is played here by the Zamolodchikov--Faddeev 
algebra \cite{ZZ-FZalg,Fad-FZalg,Luk-FZalg}.
In fact, a similar procedure has already been proposed in \cite{AL-MPA,AL-MPA2,KM-MPA}.
However, there is a crucial difference between our situation and theirs.
In the aforementioned papers, in order to impose periodic boundary conditions,
a trace is taken, which would correspond to computing
$\tr(\Phi\ldots\Phi)$. Unfortunately such a trace is divergent in our
setting. Also observe that such an Ansatz would produce
eigenvectors of integrable models for an arbitrary
value of the quantum parameter $q$, 
a claim which we do not make here. Instead we take
a vacuum expectation value of VOs; the price to pay is that rotational
invariance (periodic boundary conditions) is only restored at special roots
of unity, namely, $q^{2(\ell+2)}=1$. 
One can also regularize the trace by
adding a $x^d$: $\tr(x^d \Phi\ldots\Phi)$, producing finite temperature
correlation functions, as already mentioned in the introduction;
but that would also spoil the rotational invariance, and is therefore
not directly relevant in the present context.

\subsection{The model}\label{sec:spinchain}
Define the inhomogeneous monodromy matrix to be an operator
on $V_z\otimes V_{z_1}\otimes\cdots\otimes V_{z_L}$ of the form:
\[
\mathcal{T}(z)=\poly R_{10}(z_1/z)\ldots \poly R_{L0}(z_L/z)
\]
where the $0$ index corresponds to $V_z$ and non zero indices to $V_{z_i}$,
and the $R$-matrix is given as before by \eqref{eq:defRp}.
with $\check{\poly R}=\mathcal{P}\poly R$.

The twisted transfer matrix is then defined by taking
the trace over the auxiliary space $V_z$, with a twist
(in which it is convenient to absorb the sign $(-1)^\ell$):
\begin{equation}\label{eq:defT}
\poly T(z)=\tr_0 ((-q^{1+k})^{a_0} \mathcal{T}(z))
\end{equation}

The Yang--Baxter equation implies that transfer matrices commute for different spectral parameters:
\[
[\poly T(z),\poly T(z')]=0
\]

The diagonalization of $\poly T(z)$ is usually performed using Bethe Ansatz,
which produces eigenvectors defined in terms of ``Bethe roots'', i.e., 
solutions of certain algebraic equations. We do not pursue this route here.

A local Hamiltonian can be extracted from the transfer matrix
in the homogeneous limit where all the $z_i$ coincide.
Assume 
that $z_i=1$.
According to \eqref{eq:defRp}, $R(1)=\mathcal{P}$, and therefore
\begin{equation}\label{eq:T1}
\poly T(1)=(-q^{1+k})^{a_{0,L}} S
\end{equation}
where we recall that $S$ is cyclic permutation of factors of the tensor product.
$\poly T(1)$ is a discrete analogue of the momentum operator.
Expanding to next order, we obtain the Hamiltonian:
\[
\poly H=\frac{1}{i}\poly T(1)^{-1}\frac{d\poly T}{dz}|_{z=1}
=\sum_{j=1}^L h_{j,j+1}
\]
where 
$h_{j,j+1}=\frac{1}{i}\frac{d}{dz}\check {\poly R}_{j,j+1}(z)|_{z=1}$, $i<L$,
and $h_{L,1}=\frac{1}{i}(-q^{1+k})^{-a_{0,L}}
\frac{d}{dz}\check {\poly R}_{L,1}(z)|_{z=1}
(-q^{1+k})^{a_{0,L}}$ 
(twisted periodic boundary conditions). The factor of $1/i$ is introduced
for convenience, see below.

\subsection{The simple eigenvalue}\label{sec:eig}
Assume that $q^2$ is primitive $(\ell+2)^{\text{th}}$ root of unity
(primitiveness being necessary for $\poly\Psi$ to be well-defined).
Consider $\poly\Psi$ given by
\eqref{eq:Psi_pol}. An important remark is that 
the values of $q_M=q^{2(\ell+1)}$ and $t_M=q^{-2}$ coincide,
so that the Macdonald polynomial appearing in this expression 
becomes simply the corresponding 
{\em Schur polynomial}. Therefore,
\begin{equation}\label{eq:Psi_polb}
 \begin{aligned}
 \poly\Psi^{(k)}_{b_1,\ldots,b_{2n}} (z_1,\ldots,z_{2n}) &=\gamma_{\ell,n}^{(k)} \prod_{i=1}^{2n} \alpha_{b_i} z_i^k 
\prod_{1\le i<j\le 2n} \prod_{r=1}^\ell (q^{2r}z_j-z_i)
\oint \ldots \oint \prod_{i=1}^{\ell n}\frac{dw_i}{2\pi i}\\
 & \quad \times 
\frac{\det(w_i^{h_{j-1}})_{i,j=1,\ldots,2n}}
{\prod_{i<j}(w_j-q^2 w_i)
\prod_{j \leq \e_i} (q^{\ell} w_i-z_j) \prod_{j \geq \e_i}(w_i - q^{\ell}z_j)}
 \end{aligned}
\end{equation}
where $h_j=\lfloor \frac{j}{\ell}\rfloor+\lfloor \frac{j+\ell-k}{\ell}\rfloor+j$.

Next, note that \eqref{eq:cycl} can be simplified if one assumes $s=q^{-2(\ell+2)}=1$ to
\begin{equation}\label{eq:cyclb}
(-q)^{(1+k)a_{0,L}}
S\poly\Psi(z_2,\ldots,z_L,z_1)
=\poly\Psi(z_1,\ldots,z_L)
\end{equation}

Now consider the effect of the transfer matrix on the correlation function
$\poly\Psi(z_1,\ldots,z_L)$. Writing the transfer matrix $\mathbf T(z)$ as the trace of the monodromy matrix $\mathcal{T}(z)$ and using
\eqref{eq:recrela}, we have\footnote{The argument that follows was suggested
 to us by R.~Weston, to whom we are indebted.}
\begin{equation}\label{eq:RWtrick}
\poly T(z)\poly\Psi(z_1,\ldots,z_L)
=
\Pi^{-1}\sum_{b,b'} c_b ((-q^{1+k})^{a_0})_{b,b} \mathcal{T}_{b,b'}(z) 
\poly\Psi_{b,\ell-b',\ldots}(z,q^{-2}z,z_1,\ldots,z_L)
\end{equation}
where $\Pi=\left[{\ell\atop k}\right]z_i^k\prod_{j=1}^{i-1} \prod_{r=1}^\ell(z_j-q^{2r}z_i)
\prod_{j=i+2}^L \prod_{r=1}^\ell(z_j-q^{-2(r+1)}z_i)$
is the factor appearing in \eqref{eq:recrela},
and the $\ldots$ in subscript
mean that only the first and the last index are fixed,
the rest forming a vector in $V_{z_1}\otimes\cdots\otimes V_{z_L}$.

Using \eqref{eq:cyclb}, we can rewrite this as
\[
\poly T(z)\poly\Psi(z_1,\ldots,z_L)
=
\Pi^{-1}\sum_{b,b'} c_b \mathcal{T}_{b,b'}(z)
\poly\Psi_{\ell-b',\ldots,b}(q^{-2}z,z_1,\ldots,z_L,z)
\]

Finally, 
writing $\mathcal{T}(z)$ as a product of $\poly R$-matrices and applying the
exchange relation \eqref{eq:exchp} repeatedly, we find:
\begin{align*}
\poly T(z)\poly\Psi(z_1,\ldots,z_L)
&=\Pi^{-1}\sum_b c_b
\poly\Psi_{\ell-b,b,\ldots}(q^{-2}z,z,z_1,\ldots,z_L)
\\
&=\poly\Psi(z_1,\ldots,z_L)
\end{align*}
where the last equality follows from \eqref{eq:recrelb}.

Noting that $\poly\Psi\ne0$ because of \eqref{eq:simplecomp},
we conclude that if $q^2$ is a primitive
$(\ell+2)^{\text{th}}$ root of unity,
$\poly\Psi(z_1,\ldots,z_L)$ is an eigenvector of
the (inhomogeneous, twisted) transfer matrix $\poly T(z)$,
with a trivial eigenvalue.
A graphical interpretation of this proof can be found in appendix 
\ref{sec:grapheig}.

The twist is $-q^{1+k}$,
but note that only its square is meaningful.
As $k$ varies from $0$ to $\ell$, $q^{2(1+k)}$ spans all
$\ell+1$ nontrivial $(\ell+2)^{\text{th}}$ roots of unity. 

We conjecture that if $q=-e^{\pm i\pi/(\ell+2)}$ (for all
twists $-q^{1+k}$), the eigenvector we have
just constructed corresponds to the largest eigenvalue of $\mathbf{T}(z)$
for $z_i$ of modulus $1$ and sufficiently close to $1$. In particular,
for $z_i=1$, we conjecture that $\poly\Psi$ is the {\em ground state}\/
eigenvector of the Hamiltonian $\poly H$ (with zero ground state energy). 
For a discussion of the latter statement, see sect.~\ref{sec:susy}.
Note that at $q=\pm e^{\pm i\pi/(\ell+2)}$, the spectrum of $\poly H$ 
is real.

\rem{odd vs even twisted?}

\subsection{Relation to loop model}\label{sec:loop}
In \cite{artic37}, a ``higher spin'' loop model was built by fusing
the standard Temperley--Lieb loop model. Since the latter is related
to the spin $1/2$ representation of 
$U=U_q(\widehat{\mathfrak{sl}(2)})$ in the sense that it is equivalent
to the XXZ spin chain/6-vertex model, the fused loop model must be related
to higher spin integrable Hamiltonians/transfer matrices.
The only nontrivial issue is that of boundary conditions. Let us discuss this
here, first in the spin $1/2$ case, then in the fused case.

\subsubsection{Equivalence to spin model and twisted boundary conditions}
\label{sec:loopequiv}
%
%
%
\makeatletter
\newcommand{\gettikzxy}[3]{
  \tikz@scan@one@point\pgfutil@firstofone#1\relax
\pgfmathsetmacro{#2}{\the\pgf@x/\linkpatternunit}
\pgfmathsetmacro{#3}{\the\pgf@y/\linkpatternunit}
}
\tikzset{label anchor/.code={%
    \let\tikz@auto@anchor=\pgfutil@empty
    \def\tikz@anchor{#1}
  },
  label anchor/.default=center
}
\makeatother
\newdimen\linkpatternunit%
\newcount\linkpatternsize%
\newcount\lpsize
%
\newif\iflinkpatterninverted
\newif\iflinkpatterntikzstarted%
\newif\iflinkpatternboxed
\newif\iflinkpatternaxis
\newif\iflinkpatternstraightlines
\newif\iflinkpatternnumbered
%
\newcount\linkpatternfused
%
%
\pgfkeys{/linkpattern/.cd,inverted/.is if=linkpatterninverted,numbered/.is if=linkpatternnumbered,tikzstarted/.is if=linkpatterntikzstarted,straight lines/.is if=linkpatternstraightlines,boxed/.is if=linkpatternboxed,axis/.is if=linkpatternaxis,vertexcolor/.store in=\linkpatternvertexcolor,edgecolor/.store in=\linkpatternedgecolor,boxcolor/.code={\linkpatternboxedtrue\def\linkpatternboxcolor{#1}},tikzoptions/.style={every linkpattern/.append style={#1}},size/.code={\linkpatternsize=#1},numbering/.code={\linkpatternnumberedtrue\def\linkpatternnumbering{#1}\def\lpnumbering{#1}},unit/.code={\linkpatternunit=#1},height/.store in=\linkpatternheight,shape/.store in=\linkpatternshape,looseness/.store in=\linkpatternlooseness,squareness/.store in=\linkpatternsquareness,width/.store in=\linkpatternwidth,
pipedream/.style={shape=pipedream,looseness=0,straight lines,numbering=tangle,numbered=false},
tangle/.style={shape=tangle,numbering=tangle,numbered=false},
every linkpattern/.style={x=\linkpatternunit,y=\linkpatternunit,baseline=0},
fused/.code={\linkpatternfused=#1}}
\linkpatterninvertedfalse%
\linkpatternnumberedfalse%
\linkpatterntikzstartedfalse%
\linkpatternboxedfalse%
\linkpatternaxistrue%
\linkpatternunit=0.6cm%
\linkpatternsize=0%
\linkpatternfused=1
\linkpatternstraightlinesfalse%
\def\linkpatternlooseness{0.2}
\def\linkpatternsquareness{0.4}
\def\linkpatternvertexcolor{red}%
\def\linkpatternedgecolor{blue}%
\def\linkpatternboxcolor{none}%
\def\linkpatternheight{0}
\def\linkpatternwidth{0}
\def\linkpatternshape{default}
\def\linkpatternnumbering{default}
\tikzset{vertex/.style={circle,thin,draw=black,fill=\linkpatternvertexcolor,inner sep=1.5pt}}%
\tikzset{edge/.style={very thick,draw=\linkpatternedgecolor}}%
\def\firstchar#1#2\empty{#1}%
\def\linkpatterndo#1#2{
\edef\param{\csname linkpattern#2\endcsname}
\edef\firstcharparam{\expandafter\firstchar\param\empty}
\expandafter\ifcat\firstcharparam a
\expandafter\ifx\csname linkpattern#1\param\endcsname\relax
\csname linkpattern#1unknown\endcsname
\else
\csname linkpattern#1\csname linkpattern#2\endcsname\endcsname
\fi
\else
\csname linkpattern#1unknown\endcsname
\fi
}%
\def\linkpatterncoorddefault{\xdef\lpcoordx{\x}\xdef\lpcoordy{0}\xdef\lpangle{90}}%
\def\linkpatterncoordtangle{\ifnum\x>\lphalfsize\pgfmathparse{\lpsize+1-\x}\xdef\lpcoordx{\pgfmathresult}\xdef\lpcoordy{\lpheight}\xdef\lpangle{270}\else\xdef\lpcoordx{\x}\xdef\lpcoordy{-\lpheight}\xdef\lpangle{90}\fi}
\def\linkpatterncoordcircle{%
 \pgfmathparse{180+(\x-1)*360/\lpsize}
 \xdef\lpangle{\pgfmathresult}
 \pgfmathparse{-cos(\lpangle)}
\xdef\lpcoordx{\pgfmathresult}
 \pgfmathparse{-sin(\lpangle)}
 \xdef\lpcoordy{\pgfmathresult}
}%
\def\linkpatterncoordpipedream{\ifnum\x>\lphalfsize\pgfmathparse{\lpsize+1-\x-0.5}\xdef\lpcoordx{\pgfmathresult}\xdef\lpcoordy{0}\xdef\lpangle{270}\else\pgfmathparse{0.5-\x}\xdef\lpcoordy{\pgfmathresult}\xdef\lpcoordx{0}\xdef\lpangle{0}\fi}
\def\linkpatterncoordrectangle{
\ifnum\x>\lptqsize
\pgfmathparse{\lpsize+1-\x-0.5}\xdef\lpcoordx{\pgfmathresult}\xdef\lpcoordy{0}\xdef\lpangle{270}
\else\ifnum\x>\lphalfsize
\pgfmathparse{\x-\lptqsize-0.5}\xdef\lpcoordy{\pgfmathresult}\xdef\lpcoordx{\linkpatternwidth}\xdef\lpangle{180}
\else\ifnum\x>\linkpatternheight
\pgfmathparse{\x-\linkpatternheight-0.5}\xdef\lpcoordx{\pgfmathresult}\xdef\lpcoordy{-\linkpatternheight}\xdef\lpangle{90}
\else
\pgfmathparse{0.5-\x}\xdef\lpcoordy{\pgfmathresult}\xdef\lpcoordx{0}\xdef\lpangle{0}
\fi\fi\fi
}%
\def\linkpatterncoordunknown{\message{link pattern: unknown shape}}%
%
\def\linkpatterndrawaxisdefault{\draw (1,0) -- (\lpsize,0);}%
\def\linkpatterndrawaxistangle{\draw (1,-\lpheight) -- (\lphalfsize,-\lpheight) (1,\lpheight) -- (\lphalfsize,\lpheight);}%
\def\linkpatterndrawaxiscircle{\draw (0,0) circle (1);}%
\def\linkpatterndrawaxispipedream{\draw (0,-\lphalfsize) -- (0,0) -- (\lphalfsize,0);}%
\def\linkpatterndrawaxisrectangle{\draw (0,-\linkpatternheight) rectangle (\linkpatternwidth,0);}%
\def\linkpatterndrawaxisunknown{\message{link pattern: unknown shape}}%
%
\def\linkpatterndrawboxdefault{\draw[fill=\linkpatternboxcolor] (0.5,0) rectangle (\lpsize+0.5,\lpheight);}%
\def\linkpatterndrawboxtangle{\draw[fill=\linkpatternboxcolor] (0.5,-\lpheight) rectangle (\lphalfsize+0.5,\lpheight);}
\def\linkpatterndrawboxcircle{\draw[fill=\linkpatternboxcolor] (0,0) circle (1);}%
\def\linkpatterndrawboxunknown{\message{link pattern: unknown shape}}%
\def\linkpatterndrawboxpipedream{\draw[fill=\linkpatternboxcolor] (0,-\lphalfsize) rectangle (\lphalfsize,0);}%
\def\linkpatterndrawboxrectangle{\draw[fill=\linkpatternboxcolor] (0,-\linkpatternheight) rectangle (\linkpatternwidth,0);}%
%
\def\linkpatternsetsizeunknown{
\global\lpsize=\linkpatternsize
\if\linkpatternheight0
\xdef\maxsep{0}
\foreach \x/\xx in \mylist%
{%
\edef\tempx{\withoutprime{\x}}
\edef\tempxx{\withoutprime{\xx}}
\pgfmathparse{max(\maxsep,abs(\tempx-\tempxx))}
\xdef\maxsep{\pgfmathresult}
}%
\pgfmathparse{0.25+0.8*\linkpatternsquareness*\maxsep}
\xdef\lpheight{\pgfmathresult}
\else
\xdef\lpheight{\linkpatternheight}
\fi
}
\def\linkpatternsetsizerectangle{
\pgfmathtruncatemacro{\tempsize}{2*\linkpatternwidth+2*\linkpatternheight}
\global\lpsize=\tempsize
\pgfmathtruncatemacro{\tempsize}{\linkpatternwidth+2*\linkpatternheight}
\xdef\lptqsize{\tempsize}
}%
%
\newcount\tempsize
\def\linkpatternrightmostunknown{
\global\lpsize=0
\global\tempsize=0
\foreach\x/\labx in \linkpatternnumbering
{
\edef\tempx{\withoutprime{\x}}
\ifnum\lpsize<\tempx\global\lpsize=\tempx\fi
\global\advance\tempsize by 1
}
\ifnum\tempsize>\lpsize\global\lpsize=\tempsize\fi
}%
\def\linkpatternrightmostdefault{
\global\lpsize=0
\global\tempsize=0
\foreach \x/\y in \mylist
{
\edef\tempx{\withoutprime{\x}}
\ifnum\lpsize<\tempx\global\lpsize=\tempx\fi
\ifx\x\y
\global\advance\tempsize by 1
\else
\edef\tempy{\withoutprime{\y}}
\ifnum\lpsize<\tempy\global\lpsize=\tempy\fi%
\global\advance\tempsize by 2
\fi
}
\ifnum\tempsize>\lpsize\global\lpsize=\tempsize\fi
}%
\def\linkpatternrightmosttangle{
\global\lpsize=0
\global\tempsize=0
\foreach \x/\y in \mylist
{
\edef\tempx{\withoutprime{\x}}
\ifnum\lpsize<\tempx\global\lpsize=\tempx\fi
\ifx\x\y
\global\advance\tempsize by 1
\else
\edef\tempy{\withoutprime{\y}}
\ifnum\lpsize<\tempy\global\lpsize=\tempy\fi%
\global\advance\tempsize by 2
\fi
}
\global\advance\lpsize by\lpsize
\ifnum\tempsize>\lpsize\global\lpsize=\tempsize\fi
}%
\def\linkpatternrightmosthalftangle{\linkpatternrightmosttangle}
%
\def\linkpatternnumberingdefault{\xdef\lpnumbering{1,...,\lpsize}}%
\def\linkpatternnumberingtangle{%
\xdef\lpnumbering{1,...,\lphalfsize}
\foreach\x in {1,...,\lphalfsize}
{\xdef\lpnumbering{\lpnumbering,\x'}}
}%
\def\linkpatternnumberinghalftangle{%
\xdef\lpnumbering{1,...,\lphalfsize}
\foreach\x in {1,...,\lphalfsize}
{\xdef\lpnumbering{\lpnumbering,\x'/}}
}%
\def\linkpatternnumberingunknown{%
}%
\newcommand\linkpattern[2][]{
{
\pgfkeys{/linkpattern/.cd,#1}
\edef\mylist{#2}
\def\primetest##1'{}%
\def\hasaprime##1{\expandafter\primetest##1''}
\def\internalwithoutprime##1'{##1}%
\def\withoutprime##1{\if\hasaprime##1 %
\expandafter\internalwithoutprime##1\else ##1\fi}%
\iflinkpatternnumbered%
\iflinkpatterninverted
\tikzset{lbl/.style 2 args={label={[label anchor=-##1+180,inner sep=2pt]##1:$\scriptstyle##2$}}}
\else
\tikzset{lbl/.style 2 args={label={[label anchor=##1+180,inner sep=2pt]##1:$\scriptstyle##2$}}}
\fi
\else%
\tikzset{lbl/.style={}}
\fi%
\iflinkpatterntikzstarted
\begin{scope}[/linkpattern/every linkpattern]
\else%
\begin{tikzpicture}[/linkpattern/every linkpattern]%
\fi%
\iflinkpatterninverted%
\begin{scope}[yscale=-1]%
\fi%
\linkpatterndo{setsize}{shape}
\ifnum\lpsize=0
\linkpatterndo{rightmost}{numbering}
\fi
\pgfmathtruncatemacro{\lphalfsize}{\lpsize/2}
\linkpatterndo{numbering}{numbering}
\iflinkpatternboxed
\linkpatterndo{drawbox}{shape}
\else
\iflinkpatternaxis
\linkpatterndo{drawaxis}{shape}
\fi
\fi
\foreach\x/\xlab in \lpnumbering
{
\if\hasaprime\x %
\pgfmathtruncatemacro{\x}{\lpsize+1-\withoutprime{\x}}
\fi
\linkpatterndo{coord}{shape}
\path (\lpcoordx,\lpcoordy) coordinate[transform shape,vertex,lbl={\lpangle+180}{\xlab},alias=v\xlab] (v\x)
 ++(\lpangle:\linkpatternunit) coordinate[alias=vv\xlab] (vv\x);
}
\foreach \a/\b/\c in \mylist
{
\if\hasaprime\a %
\pgfmathtruncatemacro{\a}{\lpsize+1-\withoutprime{\a}}
\fi
\draw[edge]
\ifx\b\c
\else
[decoration={markings,mark = at position 0.5 with { \arrow[semithick]{\c} }},postaction={decorate}]
\fi
\ifx\a\b
(v\a) -- ++(0,\lpheight);
\else
\pgfextra{
\if\hasaprime\b %
\pgfmathtruncatemacro{\b}{\lpsize+1-\withoutprime{\b}}
\fi
\gettikzxy{(v\a)}{\ax}{\ay}
\gettikzxy{(v\b)}{\bx}{\by}
\gettikzxy{(vv\a)}{\axx}{\ayy}
\gettikzxy{(vv\b)}{\bxx}{\byy}
\pgfmathsetmacro{\dist}{sqrt((\ax-\bx)*(\ax-\bx)+(\ay-\by)*(\ay-\by))}
\pgfmathsetmacro{\abx}{(\axx-\ax)*\dist*\linkpatternsquareness+(\bx-\ax)*\linkpatternlooseness)}
\pgfmathsetmacro{\aby}{(\ayy-\ay)*\dist*\linkpatternsquareness+(\by-\ay)*\linkpatternlooseness)}
\pgfmathsetmacro{\bax}{(\bxx-\bx)*\dist*\linkpatternsquareness+(\ax-\bx)*\linkpatternlooseness)}
\pgfmathsetmacro{\bay}{(\byy-\by)*\dist*\linkpatternsquareness+(\ay-\by)*\linkpatternlooseness)}
}
(v\a) 
\iflinkpatternstraightlines
\pgfextra{
\pgfmathsetmacro{\t}{((\ax-\bx)*\bay-(\ay-\by)*\bax)/(\aby*\bax-\abx*\bay)}
\pgfmathsetmacro{\abx}{\t*\abx}
\pgfmathsetmacro{\aby}{\t*\aby}
}
[rounded corners] -- ++(\abx,\aby) -- (v\b);
\else
.. controls ++(\abx,\aby) and ++(\bax,\bay) .. 
\fi
(v\b);
\fi
}
\iflinkpatterninverted
\end{scope}
\fi
\iflinkpatterntikzstarted
\end{scope}
\else%
\end{tikzpicture}%
\fi%
}}%
%
%
\newcommand\tanglelinkpattern[3][]{%
{
\pgfkeys{/linkpattern/.cd,#1}
\begin{tikzpicture}[/linkpattern/every linkpattern]%
\linkpattern[#1,tikzstarted,numbered=false]{#3}
\iflinkpatterninverted
\begin{scope}[yshift=0.5*\linkpatternunit]
\else
\begin{scope}[yshift=-0.5*\linkpatternunit]
\fi
\pgfmathtruncatemacro{\lptempsize}{2*\linkpatternsize}
\linkpattern[tangle,#1,tikzstarted,size=\lptempsize,
numbering=halftangle,
height=0.5]{#2}
\end{scope}
\end{tikzpicture}%
}}
%
%
\newcommand\diag[4][]{%
\pgfkeys{/linkpattern/.cd,#1}
\iflinkpatterntikzstarted\else%
\begin{tikzpicture}[scale=0.5]
\fi%
\iflinkpatterninverted%
\begin{scope}[yscale=-1]%
\fi%
\draw (0,0) grid (#2,#3);
\edef\mylist{#4}
\foreach\y/\x/\z in \mylist
{
\ifx\x\z
\draw[decorate,decoration={zigzag,
amplitude=1pt,segment length=5pt}]
(\x-0.5,#3) -- (\x-0.5,\y-0.5) node[circle,fill=black,inner sep=2pt] {} -- (#2,\y-0.5);
\else
\node at (\x-0.5,\y-0.5) {$\z$};
\fi
}
\iflinkpatterninverted
\end{scope}
\fi
\iflinkpatterntikzstarted\else%
\end{tikzpicture}%
\fi%
}
%
\makeatletter
\tikzset{circle split part fill/.style  args={#1,#2}{%
 alias=tmp@name,
  postaction={%
    insert path={
     \pgfextra{%
     \pgfpointdiff{\pgfpointanchor{\pgf@node@name}{center}}%
                  {\pgfpointanchor{\pgf@node@name}{east}}%
     \pgfmathsetmacro\insiderad{\pgf@x}
      \fill[#1] (\pgf@node@name.base) ([xshift=-\pgflinewidth]\pgf@node@name.east) arc
                          (0:180:\insiderad-\pgflinewidth)--cycle;
      \fill[#2] (\pgf@node@name.base) ([xshift=\pgflinewidth]\pgf@node@name.west)  arc
                           (180:360:\insiderad-\pgflinewidth)--cycle;                    }}}}}  
 \makeatother
\tikzset{bdot/.style={circle,circle split,draw,circle split part fill={black,white},thin,inner sep=1pt}}%
\tikzset{wdot/.style={circle,circle split,draw,circle split part fill={white,black},thin,inner sep=1pt}}%
%
%
%
\newcommand\circlelinkpattern[2][]{
{
\pgfkeys{/linkpattern/.cd,#1}
\iflinkpatterntikzstarted\else%
\begin{tikzpicture}[/linkpattern/every linkpattern]%
\fi%
\iflinkpatterninverted%
\begin{scope}[yscale=-1]%
\fi%
\global\lpsize=\linkpatternsize
\edef\mylist{#2}
\foreach \x/\y in \mylist
{
\ifnum\x>\lpsize\global\lpsize=\x\fi
\ifnum\y>\lpsize\global\lpsize=\y\fi
}
%
\iflinkpatternaxis
\draw (0,0) circle (1);
\fi
\foreach\x in {1,...,\lpsize}
{
\pgfmathparse{(0.3*floor((\x-1)/\linkpatternfused)+0.7*((\x-0.5)/\linkpatternfused-0.5))*\linkpatternfused*360/\lpsize}
\coordinate[vertex] (v\x) at (\pgfmathresult:1);
}
\foreach \x/\y/\z in \mylist
{
\ifx\y\z%
\draw[edge] (v\x) .. controls ($0.5*(v\x)$) and  ($0.5*(v\y)$) .. (v\y);
\else
\draw[edge,decoration={markings,mark = at position 0.5 with { \arrow[semithick]{\z} }},postaction={decorate}] (v\x) .. controls ($0.5*(v\x)$) and  ($0.5*(v\y)$) .. (v\y);
\fi
}
\iflinkpatternnumbered%
\pgfmathparse{\lpsize/\linkpatternfused}
\global\lpsize=\pgfmathresult
\def\linkpatternnumbering{1,...,\lpsize}
\newdimen\angle
\foreach\x/\xx in \linkpatternnumbering
{
  \pgfmathsetmacro{\angle}{360/\lpsize*(\x-1)}
  \node[outer sep=1pt,anchor=180+\angle] at (\angle:1) {$\scriptstyle\xx$}; 
}
\fi%
\iflinkpatterninverted%
\end{scope}
\fi%
\iflinkpatterntikzstarted\else%
\end{tikzpicture}%
\fi%
}}%

We first define a link pattern of size $L$
to be a planar pairing of $\{1,\ldots,L\}$ viewed as boundary points
of a disk, e.g.,
\linkpatternnumberedtrue%
\[
\mathcal{L}_6=
\left\{
\linkpattern[shape=circle]{1/6,2/5,3/4},
\linkpattern[shape=circle]{1/6,2/3,4/5},
\linkpattern[shape=circle]{1/4,2/3,5/6},
\linkpattern[shape=circle]{1/2,3/6,4/5},
\linkpattern[shape=circle]{1/2,3/4,5/6}
\right\}
\]
We shall identify a link pattern with the fixed-point-free involution
which sends paired points to each other. Note that $L$ has to be even
for $\mathcal{L}_L$ to be nonempty.

The span of link patterns, $\mathbb{C}[\mathcal{L}_L]$,
can be identified with
a subspace of $(\mathbb{C}^2)^{\otimes L}$;
indeed, there is a linear map $\varphi$ which to each $\pi\in\mathcal{L}_L$ 
associates a  vector in $(\mathbb{C}^2)^{\otimes L}$,
which is given by the following explicit formula:
\begin{equation}\label{eq:spinloop}
\varphi(\pi)=\sum_{\epsilon\in\{0,1\}^L}
\prod_{1\le i<\pi(i)\le L} \delta_{\epsilon_i+\epsilon_{\pi(i)},1}
(-q)^{-\epsilon_i+\frac{1}{2}} \ketE{\epsilon}
\end{equation}
where on the r.h.s.\ the $\ketE{\epsilon}$ are the standard basis of
$(\mathbb{C}^2)^{\otimes L}$.
In appendix \ref{app:spinloop}, 
we show that $\varphi$ is injective,
and that for generic $q$ it is in fact an isomorphism from
$\mathbb{C}[\mathcal{L}_L]$ to the 
$U_1$-invariant subspace
of $(\mathbb{C}^2)^{\otimes L}$ (recall that $U_1$ is the horizontal subalgebra
of $U$).

The change of basis described by \eqref{eq:spinloop}
has the following diagrammatic
interpretation: one sums over all possible orientations of the arcs,
assigns a weight which is equal to $(-q)^{\frac{1}{2\pi}\text{total angle spanned by the arcs}}$,
and then reconstructs the spin state by recording the orientations
at the endpoints of the arcs,
identifying $\ketE{1}=\ketE{\uparrow}, \ketE{0}=\ketE{\downarrow}$.

The actual shape of the disk and the positioning of the points on its
boundary are irrelevant and result in gauge transformations (i.e.,
moving around the twist).
The choice above corresponds to all points aligned on a straight
region of the boundary, e.g.,
\begin{align*}
\linkpattern[shape=circle,numbered]{1/4,2/3}
&=
\linkpattern[boxed,numbered]{1/4,2/3}
\\
&=
\linkpattern[boxed,numbered]{1/4/>,2/3/>}
+
\linkpattern[boxed,numbered]{1/4/>,2/3/<}
+
\linkpattern[boxed,numbered]{1/4/<,2/3/>}
+
\linkpattern[boxed,numbered]{1/4/<,2/3/<}
\\
&=-q^{-1}\ketE{\uparrow\uparrow\downarrow\downarrow}+\ketE{\uparrow\downarrow\uparrow\downarrow}+\ketE{\downarrow\uparrow\downarrow\uparrow}-q\ketE{\downarrow\downarrow\uparrow\uparrow}
\end{align*}

The simplest way to compute the twist is to consider the
rotation  one-step to the left
of link patterns, $\tilde S$, and map it to the spin space.
We find that $\tilde S=(-q)^{a_{0,L}} S$ where $S$ is (untwisted)
spin rotation, compare with \eqref{eq:T1} at $k=0$.

For the sake of completeness we also indicate how to map covectors (and
therefore, operators). It is convenient to think of
covectors in the loop picture 
as planar pairings {\em outside}\/ the disk.\footnote{Note the asymmetry:
the region outside the disk is homeomorphic to an interval times a circle (``identified connectivities''),
not a disk (``unidentified connectivities''). 
It would be tempting to add the point at infinity to restore
the symmetry, but that is not allowed because we want the winding angle
of any closed curve to be $2\pi$.} The pairing between vectors and covectors
is to paste them together, and assign them $\tau$ to the power the number
of closed loops thus formed, where $\tau=-(q+q^{-1})$. Note that $\tau$
is nothing but $
\tikz[baseline=-3pt]{\draw[blue,very thick,decoration={markings,mark = at position 0.5 with { \arrow{>} }},postaction={decorate}] (0,0) circle (0.3cm);}
+
\tikz[baseline=-3pt]{\draw[blue,very thick,decoration={markings,mark = at position 0.5 with { \arrow{<} }},postaction={decorate}] (0,0) circle (0.3cm);}
$, where we have used the same rule as before that
the weight of a line is equal to $(-q)^{\frac{1}{2\pi}\text{total angle spanned by the arcs}}$. 
Therefore, the exact same graphical rule
must be used for covectors: sum over all orientations, give a weight of
$(-q)^{\frac{1}{2\pi}\text{total angle spanned by the arcs}}$
and build the dual spin state from the orientations
at the endpoints of the arcs,
with $\braE{1}=\braE{\uparrow}, \braE{0}=\braE{\downarrow}$,
$\braE{i}\ketE{j}=\delta_{ij}$.

\subsubsection{Fusion}
We shall not reproduce here the fusion procedure, i.e., consider
representation theory of $U$, but rather
consider only the non-affine part $U_1$.
We therefore view the $U_1$-module
$V=\mathbb{C}^{\ell+1}$ as a submodule of $(\mathbb{C}^2)^{\otimes\ell}$,
as well 
as the projection $p: (\mathbb{C}^2)^{\otimes\ell}\to V$
(see \ref{app:loopfused} for its explicit definition).

The $\ell$-fused link pattern of size $L$ of \cite{artic37}
is then obtained from ordinary link patterns of size $\ell L$ 
by grouping together vertices into groups indexed
by $r_i=1+\lfloor \frac{i-1}{\ell}\rfloor$
and by restricting to link patterns with no connections within a group, i.e.,
\[
\mathcal{L}_{\ell,L}=\{\pi\in\mathcal{L}_{\ell L}:
\pi(i)=j\ \Rightarrow r_i\ne r_j\}
\]
For example,
\[
\mathcal{L}_{3,4}
=
\left\{
\circlelinkpattern[fused=3]{1/6,2/5,3/4,7/12,8/11,9/10},
\circlelinkpattern[fused=3]{1/12,2/5,3/4,6/7,8/11,9/10},
\circlelinkpattern[fused=3]{1/12,2/11,3/4,5/8,6/7,9/10},
\circlelinkpattern[fused=3]{1/12,2/11,3/10,4/9,5/8,6/7}
\right\}
\]

In the fused case,
to a $\pi\in \mathcal{L}_{\ell,L}$ is then associated the corresponding vector 
{\em after projection}:
\begin{equation}\label{eq:projfus}
\varphi_\ell(\pi)
= \prod_{i=1}^L p_i\ \varphi(\pi)
\end{equation}
where $p_i$ is $p$ acting on the factors $\ell i+1,\ldots,\ell(i+1)$ of
the tensor product $(\mathbb{C}^2)^{\otimes\ell L}$.

Clearly, twist and fusion commute, so that we have the same twist
$(-1)^\ell K_1=(-q)^{a_0}$ in the fused case.

In appendix \ref{app:spinloop}, 
we show that the map $\varphi_\ell$
is well-defined and injective
provided $q^2$ is not a $r^{\text{th}}$ root of unity, $2\le r\le\ell$,
and that for generic $q$ it is in fact an isomorphism from
$\mathbb{C}[\mathcal{L}_{\ell;L}]$ to the 
$U_1$-invariant subspace
of $(\mathbb{C}^2)^{\otimes L}$.

The integrable inhomogeneous transfer matrix can be formulated directly
in the language of loops, as is done in \cite{artic37}.
Once converted to the spin space, it becomes exactly
the twisted transfer matrix $\mathbf{T}(z)$ discussed in \ref{sec:spinchain},
with twist given by $k=0$.

\subsubsection{Some results}
Consider now $\poly\Psi^{(0)}(z_1,\ldots,z_L)$. 
It is proportional to
$\Psi^{(0)}(z_1,\ldots,z_L)=\braC{0}\Phi(z_1)\ldots\Phi(z_L)\ketC{0}$,
and therefore, from the intertwining property of $\Phi(z)$,
is $U_1$-invariant. We conclude that it is in the image of $\varphi_\ell$ 
for generic $q$, and by continuity it is so for $q^2$ $(\ell+2)^{\text{th}}$ root
of unity.
We can therefore consider its preimage $\varphi_\ell^{-1}(\poly\Psi^{(0)})$.
By construction it is an eigenvector of the transfer matrix of the loop
model, with eigenvalue $1$. And it is a polynomial of degree $\ell n(n-1)$
in the $z_i$. At the particular value $q=-e^{-i\pi/(\ell+2)}$,
which is the only one considered in \cite{artic37}, $\poly\Psi^{(0)}$ coincides
with the eigenvector studied in that paper, and its
degree is the content of Conjecture 1 of \cite{artic37},
which is thus proved.

\rem{meaning of other twists for loop model?}

\subsection{Relation to supersymmetric fermions with exclusion}
\label{sec:susy}
In \cite{FNS-susy}, a supersymmetric
model of fermions on the one-dimensional lattice was introduced. It has
the exclusion rule that at most $\ell$ consecutive sites may be occupied.
In the case $\ell=1$, it was found to be equivalent to the XXZ spin
chain at $\Delta=-1/2$ \cite{YF-susy}. More generally, in a recent paper
\cite{Hag-susy}, Hagendorf conjectured that for any $\ell$,
this fermionic model was equivalent to the integrable spin $\ell/2$ chain
at $q=e^{i\pi/(\ell+2)}$. Note that models related by $q\to -q$ and $q\to 1/q$
are closely connected (in particular the choice $q\to -q$ depends
on the sign convention for the Hamiltonian). It was also stated that the
admissible twists are of the form $e^{2\pi i m/(\ell+2)a_0}$, where $m=0,1,\ldots,p$, $p=\ell+1$ for $\ell$ odd and $p=\ell/2$ for $\ell$ even. However,
it seems difficult to compare his twists with ours because the equivalence
is only conjectural and not explicit in the general case, and may modify the
twist. See also the computation of the Witten index in \cite[sect.~3.5]{Hag-susy}.

Adapting this discussion to our setting,
we conclude (conjecturally) that the Hamiltonian $\poly{H}$ at
$q=-e^{\pm i\pi/(\ell+2)}$ should be supersymmetric for all twists of
the form $\pm q^n$. Furthermore, it should have a supersymmetric ground state
exactly when $n\ne0\pmod {\ell+2}$.
This would imply that our state $\poly\Psi$, which has zero energy,
is indeed the (supersymmetric) ground state of $\poly H$.

\appendix

\section{Construction of \texorpdfstring{$q$-deformed}{q-deformed} parafermions}\label{app:parafermions}
Here we follow 
the procedure of Ding and Feigin~\cite{DF-para}, we intend to construct a level $\ell$ Drinfeld realization of $U_q(\widehat{\mathfrak{sl}(2)})$,
and to rewrite the current in terms of bosonic and parafermionic
operators (these being appropriate $q$-deformations of Conformal
Field Theory operators, cf the parafermionic fields of \cite{FZ-para}).

Let
\begin{equation}\label{eq:pf_series}
 \begin{aligned}
 e(z) &= \sum_{i \in \mathbb{Z}} e_i z^{-i-1} &
 f(z) &= \sum_{i \in \mathbb{Z}} f_i z^{-i-1}\\
 k^+(z) &= \sum_{i \in \mathbb{Z}^+_0} k^+_i z^{-i} &
 k^-(z) &= \sum_{i \in \mathbb{Z}^-_0} k^-_i z^{-i}
 \end{aligned}
\end{equation}

We want to construct operators that satisfy the following relations:
\begin{align}\label{eq:pf_relations}
\notag k^+_0 k^-_0 &= 1 = k^-_0 k^+_0\\
\notag k^{\pm} (z) k^{\pm} (w) &= k^{\pm} (w) k^{\pm} (z)\\
\notag k^- (z) k^+ (w) &= \frac{g\left(q^{-c} z/w\right)}{g\left(q^c z/w\right)} k^+ (w) k^- (z)\\
 k^{\pm}(z) e(w) &= g\left(q^{-\nicefrac{c}{2}} \left(w/z\right)^{\pm 1}\right)^{\mp 1} e(w) k^{\pm} (z)\\
\notag k^{\pm}(z) f(w) &= g\left(q^{\nicefrac{c}{2}} \left(w/z\right)^{\pm 1}\right)^{\pm 1} f(w) k^{\pm} (z)\\
\notag e(z) e(w) \left(z - q^2 w\right) &= e(w) e(z) \left(q^2 z - w\right) \\
\notag f(z) f(w) \left(q^2 z - w\right) &= f(w) f(z) \left(z - q^2 w\right) \\
\notag (q-q^{-1}) zw [e(z),f(w)] &= \delta \left( q^{-c} z/w \right) k^+ (q^{-\nicefrac{c}{2}} z) - \delta \left( q^c z/w \right) k^- (q^{\nicefrac{c}{2}} z)
\end{align}
where $c$ is a central element. 
The functions $g(z)$ and $\delta(z)$ are defined by:
\begin{align*}
 g(z) &= \frac{1-q^2 z}{q^2 - z} & \delta(z) &= \sum_{i \in \mathbb{Z}} z^i
\end{align*}

\begin{remark}
The Chevalley generators are given by:
 \begin{align*} 
 K_0 &= q^c k^-_0 & E_0 &=f_1 (k_0^+)^{-1} & F_0 &= k_0^+ e_{-1}\\
 K_1 &= k^+_0 & E_1 &=e_0 & F_1 &= f_0
 \end{align*}
\end{remark}

A Hopf algebra structure is obtained as follows. We define the co-multiplication:
\begin{equation}\label{eq:pf_coproduct}
\begin{aligned}
 \tilde\Delta(q^h) &= q^h \otimes q^h \\
 \tilde\Delta(e(z)) &= e(z) \otimes 1 + k^-(q^{\nicefrac{c_1}{2}}z) \otimes e(q^{c_1} z)\\
 \tilde\Delta(f(z)) &= f(q^{c_2} z) \otimes k^+(q^{\nicefrac{c_2}{2}}z) + 1 \otimes f(z)\\
 \tilde\Delta(k^{\pm} (z)) &= k^{\pm} (q^{\pm \nicefrac{c_2}{2}}z) \otimes k^{\pm} (q^{\mp \nicefrac{c_1}{2}} z)
\end{aligned}
\end{equation} 
where $c_1 = c \otimes 1$ and $c_2 = 1 \otimes c$;
the co-unit:
\begin{align}\label{eq:pf_counit}
\tilde\epsilon(q^h) &=1=\tilde\epsilon(k^{\pm} (z)) & \tilde\epsilon(e(z)) &= 0 = \tilde\epsilon(f(z)) 
\end{align}
and the antipode:
\begin{equation}\label{eq:pf_antipode}
 \begin{aligned}
\tilde a(q^h) &= q^{-h}\\
\tilde a(e(z)) &= - k^- (q^{-\nicefrac{c}{2}} z)^{-1} e(q^{-c} z)\\
\tilde a(f(z)) &= - f(q^{-c}) k^+(q^{-\nicefrac{c}{2}} z)^{-1}\\
\tilde a(k^{\pm} (z)) &= k^{\pm} (z)^{-1}
\end{aligned}
\end{equation}

\subsection{Highest weight modules}
We define a highest weight module as the unique module
$ V(\Lambda):=U_q(\widehat{\mathfrak{sl}(2)}) v_{\Lambda}$,
where $\Lambda = n_0 \Lambda_0 + n_1 \Lambda_1$ is the highest weight, 
and $v_\Lambda$ is the highest weight vector, which satisfies:
 \begin{align*}
 E_i v_{\Lambda} &= 0 & K_i v_{\Lambda} &= q^{(\Lambda,\alpha_i)} v_{\Lambda} & F_i^{(\Lambda,\alpha_i)+1} v_{\Lambda} &=0   
 \end{align*}
the inner product being defined by $(\Lambda_i,\alpha_j)=\delta_{i,j}$.
Except in the trivial case $\Lambda=0$, the modules thus generated are irreducible and infinite-dimensional. 

We will give a more explicit way of constructing these modules depending on
the eigenvalue $\ell = n_0+n_1$ of the central element $c$.

\subsection{Level \texorpdfstring{$1$}{1} realization}
We first build a realization of the $U_q (\widehat{\mathfrak{sl}(2)})$ algebra for the case $c=\ell=1$ in terms of bosonic currents.
We introduce a new set of bosonic operators, which satisfy the relations:
\begin{align*}
 [a_n,a_m] &= \delta_{n+m,0} \frac{[2n][n]}{n} & [a_n,\alpha]&=2\delta_{n,0}
\end{align*}

Then we can build the currents by:
\begin{equation} \label{eq:pf_level1}
\begin{aligned}
 e(z) &= e^{\alpha} z^{a_0} e^{\sum q^{-\nicefrac{n}{2}} \frac{a_{-n}}{[n]} z^n}e^{- \sum q^{-\nicefrac{n}{2}} \frac{a_n}{[n]} z^{-n}}\\
 f(z) &= e^{-\alpha} z^{-a_0} e^{-\sum q^{\nicefrac{n}{2}} \frac{a_{-n}}{[n]} z^n}e^{\sum q^{\nicefrac{n}{2}} \frac{a_n}{[n]} z^{-n}}\\
 k^+(z) &= q^{a_0} e^{(q-q^{-1})\sum a_{n} z^{-n}}\\
 k^-(z) &= q^{-a_0} e^{-(q-q^{-1})\sum a_{-n} z^{n}}
\end{aligned}
\end{equation}
these satisfy all the above relations.
We omit such computations.

They act on the usual bosonic Fock space $\Ho=
\bigoplus_{h\in \mathbb{Z}}\left\langle( \prod_{m>0}  a_{-m}^{k_m}\ketN{h}\right\rangle$
by the following rules:
\begin{align*}
 a_i f \ketN{h} &= a_i f \ketN{h} \quad\text{ if }i<0 & e^{\alpha} f \ketN{h} &= f \ketN{h+2}\\
 a_i f \ketN{h} &= [a_i,f] \ketN{h} \quad\text{ if }i>0 & a_0 f \ketN{h} &= f h \ketN{h}
\end{align*}
where $f \in \mathbb{C}[a_{-1},a_{-2},\ldots]$.

This produce two highest weight modules, depending on the parity of $h$, corresponding to the two highest weight vectors 
\begin{align*}
 v_{\Lambda_0} &= \ketN{0} & v_{\Lambda_1} &= \ketN{1}
\end{align*}


\subsection{Level \texorpdfstring{$\ell$}{l} realization}

Let us denote 
$\sigma_1:U\to \mathcal{L}(\Ho)$ 
the level $1$ realization.
We produce, using the coproduct $\tilde\Delta$,
a level $\ell$ representation as the tensor product
of $\ell$ copies of $V$, i.e.,
if we define $\tilde\Delta^{\ell} = (1\otimes \ldots \otimes \tilde\Delta)\circ \ldots \circ (1\otimes \tilde\Delta)\circ \tilde\Delta$, then
$
\sigma=\tilde\Delta^{\ell}(\sigma_1\otimes\cdots\otimes\sigma_1)
$ is the representation on $\HC=\Ho^{\otimes \ell}$.

We also define currents in $\HC$ in the obvious way:
\begin{equation}
\begin{aligned}
 \bm{e}(z) &= \tilde\Delta^{\ell} (e(z)) &
 \bm{f}(z) &= \tilde\Delta^{\ell} (f(z))\\
 \bm k^{\pm}(z) &= \tilde\Delta^{\ell} (k^{\pm}(z))& q^{\bm c} &= \tilde\Delta^{\ell} (q^c) = q^c \otimes \ldots \otimes q^c
\end{aligned}
\end{equation}
As $\tilde\Delta$ is a co-multiplication, this will satisfy automatically the relations~\eqref{eq:pf_relations}, with $\bm c = \ell$.

In that way we can compute $\bm e (z)$ for any $\ell$. For example, if $\ell=2$ we get:
\begin{equation}
  \bm{e}(z) = e(z) \otimes 1 + k^-(q^{\nicefrac{c_1}{2}} z) \otimes e(q^{c_1} z) = \bm{e}^1 (z) + \bm{e}^2 (z).
\end{equation}
where $c_i$ is the value of $c$ in $V_i$.

We perform a mode expansion of $\bm e(z)$, for generic $\ell$:
\begin{equation}
\bm{e}(z) = \sum_{i\leq \ell} \bm{e}^i (z)
\end{equation}
where (using $c_i=1$)
\begin{equation}
 \bm{e}^i (z) = \bigotimes_{j<i} k^{-}(q^{j-\nicefrac{1}{2} } z) \otimes e(q^{i-1} z) \bigotimes_{j>i} 1 
\end{equation}
where $j$ indicates the position on the tensor product.

We repeat the process for $\bm f(z)$:
\begin{equation}
 \bm{f}(z) = \sum_{i\leq \ell} \bm{f}^i (z)
\end{equation}
where (using $c_i=1$)
\begin{equation}
 \bm{f}^i (z) = \bigotimes_{j<i} 1 \otimes f(q^{\ell-i} z) \bigotimes_{j>i} k^+(q^{\ell-j+\nicefrac{1}{2}}z) 
\end{equation}
The same goes for $\bm k^{\pm}(z)$:
\begin{equation}
 \bm k^{\pm} (z) = \bigotimes_j k^{\pm} (q^{\pm \frac{\ell-2j +1}{2}} z) 
\end{equation}

Let us denote $\ketC{k_1,\ldots,k_{\ell}}=\ketN{k_1}\otimes \ldots\otimes \ketN{k_\ell}$, where $k_1 = 0,\ 1$.
Let $k = \sum_i k_i$. 
$\ketC{k_1,\ldots,k_{\ell}}$,
as a tensor product of highest weight vectors, is a highest weight vector 
with highest weight $\Lambda = (\ell-k) \Lambda_0 + k \Lambda_1$.
Note that for a given $\Lambda$, except for $k=0$ or $\ell$, this construction is not unique. For our purposes, any such highest weight vector 
provides a realization of the highest weight module,
and we will use the notation $\ketC{k}$ for any of them 
(our results do not depend on the choice, i.e., when we permute $k_i \leftrightarrow k_j$).
Moreover, these vectors are normalized such that $\braket{k}{k} = 1$.

\subsection{Parafermions}
The computation of commutation relations between $\bm{e}^i (z)$ (or $\bm{f}^i (z)$) and $\bm k^{\pm} (w)$ it is not complicated, although it is long.
The result for $\bm{e}^i(z)$ is:
\begin{equation}\label{eq:pf_com_e_h}
\begin{aligned}
 \bm{e}^i (z) \bm k^+ (w) &= :\bm{e}^i (z) \bm k^+ (w): \\
 \bm k^+ (w) \bm{e}^i (z) &= \frac{q^2 w - q^{-\nicefrac{\ell}{2}}z}{w - q^{2- \nicefrac{\ell}{2}}z }:\bm{e}^i (z) \bm k^+ (w): \\
 \bm{e}^i (z) \bm k^- (w) &= \frac{z - q^{-2-\nicefrac{\ell}{2}}w}{z - q^{2- \nicefrac{\ell}{2}}w }:\bm{e}^i (z) \bm k^- (w): \\
 \bm k^- (w) \bm{e}^i (z) &= q^{-2} :\bm{e}^i (z) \bm k^- (w):
\end{aligned}
\end{equation}
and for $\bm{f}^i(z)$ is:
\begin{equation}\label{eq:pf_com_f_h}
\begin{aligned}
 \bm{f}^i (z) \bm k^+ (w) &= :\bm{f}^i (z) \bm k^+ (w): \\
 \bm k^+ (w) \bm{f}^i (z) &= \frac{w - q^{2+\nicefrac{\ell}{2}}z}{q^2 w - q^{ \nicefrac{\ell}{2}}z }:\bm{f}^i (z) \bm k^+ (w): \\
 \bm{f}^i (z) \bm k^- (w) &= \frac{z - q^{2+\nicefrac{\ell}{2}}w}{z - q^{-2+ \nicefrac{\ell}{2}}w }:\bm{f}^i (z) \bm k^- (w): \\
 \bm k^- (w) \bm{f}^i (z) &= q^{-2} :\bm{f}^i (z) \bm k^- (w):
\end{aligned}
\end{equation}

Surprisingly, these results do not depend on $i$. 
This fact, they tell us that we can factor $\bm e^i (z)$ and $\bm f^i (z)$ into two parts: one (parafermionic part) that commutes with $\bm k^{\pm} (w)$ and depends on $i$ and a second part (bosonic part) that does not depend on $i$ and does not commute with $\bm k^{\pm} (w)$.

Following Ding and Feigin, let us build the bosonic part.
Define (let $n$ be a non-negative integer):
\begin{equation}
\begin{aligned}
  \bm{a}_{\pm n} &= \sum_i \bm{a}_{\pm n}^i \\
  \bm{a}_{\pm n}^i &= \bigotimes_{j<i} 1 \otimes q^{-n\frac{\ell-2i+1}{2}} a_{\pm n} \bigotimes_{j>i} 1
\end{aligned}
\end{equation}
(correcting a sign misprint in \cite{DF-para}).

We could think that $\bm a_{n}$ is only the result of applying the co-multiplication to $a_n$.
But this is false. 
In fact, there is no Hopf algebra for these operators.

These new operators satisfy the following commutation realation:
\begin{align}
 [\bm{a}_n,\bm{a}_m] &= \delta_{n+m,0}\frac{[2n][\ell n]}{n}
\end{align}
It can be easily proved.

The main reason for this definition is the fact that:
\begin{equation}
 \bm k^{\pm} (z) = q^{\pm \bm{a}_0} e^{\pm (q-q^{-1}) \sum_n \bm{a}_{\pm n} z^{\mp n}}
\end{equation}
where, for example, $q^{\bm{a}_0}$ means $q^{a_0} \otimes \ldots \otimes q^{a_0}$.
Notice that this is the natural generalization of $k^{\pm} (z)$.

Now the idea is to factor $\bm e^i (z)$ and $\bm f^i (z)$ using these new bosonic operators.
That is, we want to write them as $\bm{e}^i (z)= \mathfrak{e}(z) \xi^{+i} (z)$ and $\bm{f}^i (z) = \mathfrak{f}(z) \xi^{-i} (z)$, where $\mathfrak{e}(z)$ and $\mathfrak{f}(z)$ are expressed in terms of $\bm{a}_{\pm n}$ and $e^{\bm \alpha} = \bigotimes_j e^{\alpha}$
(compared to the notations of the main text, we have renormalized the
zero mode as $\bm\alpha=2\ell\bm\beta$).

The following Ansatz is made:
\begin{equation}
\begin{aligned}
 \mathfrak{e}(z) &= e^{\nicefrac{\bm{\alpha}}{\ell}} z^{\nicefrac{\bm{a}_0}{\ell}} e^{\sum_n q^{-\frac{n\ell}{2}} \frac{\bm{a}_{-n}}{[n \ell]} z^n} e^{-\sum_n q^{-\frac{n\ell}{2}} \frac{\bm{a}_n}{[n \ell]} z^{-n}}\\
 \mathfrak{f}(z) &= e^{-\nicefrac{\bm{\alpha}}{\ell}} z^{-\nicefrac{\bm{a}_0}{\ell}} e^{-\sum_n q^{\frac{n\ell}{2}} \frac{\bm{a}_{-n}}{[n \ell]} z^n} e^{\sum_n q^{\frac{n\ell}{2}} \frac{\bm{a}_n}{[n \ell]} z^{-n}}
\end{aligned}
\end{equation}

In this way, the commutation relations of $\mathfrak{e}(z)$ (or $\mathfrak{f}(z)$) and $\bm k^{\pm} (w)$ have the form of~\eqref{eq:pf_com_e_h} and~\eqref{eq:pf_com_f_h}, i.e.,
\begin{equation}
\begin{aligned}
0&=[\xi^{+i} (z), \bm k^{\pm} (w)] \\
0&=[\xi^{-i} (z), \bm k^{\pm} (w)]
\end{aligned}
\end{equation}
for all $i$.

In fact, one can prove that 
 \begin{align}
  [\bm{a}_{n}, \xi^{\pm i}(w)] &=0 &
  [\bm{\alpha}, \xi^{\pm i} (w)] &=0
 \end{align}
for any integer $n$ and $i \in \{1,2,\ldots, \ell\}$.

The operators $\xi^+(z) = \sum_i \xi^{+i}(z)$ and $\xi^- (z) = \sum_i \xi^{-i} (z)$ are the so-called parafermionic operators.

We can split a vector $\ketC{k}$ into a bosonic component and a parafermionic component. 
The first being an average $\ketN{k} = \ketN{\nicefrac{k}{\ell}} \otimes \ldots \otimes \ketN{\nicefrac{k}{\ell}} $.
This requires that we enlarge the Fock space to $\Ho = \bigoplus_{h \in \nicefrac{\mathbb{Z}}{\ell}} \left\langle (\prod_{m>0} a_{-m}^{k_m} \ketN{h} \right\rangle$. 

\subsection{The wheel condition}
These operators satisfy an important relation.
\begin{theorem}\label{thm:pf-wheel}
$\bm e(z), \bm f(z)$ satisfy the wheel condition:
\begin{align*}
 \bm{e}(z) \bm{e}(q^2 z) \ldots \bm{e} (q^{2\ell}z) &= 0\\
 \bm{f}(q^{2\ell} z) \ldots \bm{f}(q^2 z) \bm{f}(z) &= 0
\end{align*}
\end{theorem}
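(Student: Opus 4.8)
The plan is to separate the bosonic and parafermionic contributions and reduce the theorem to a purely parafermionic statement. Since the parafermions commute with all the bosonic modes $\bm a_n$ and with $\bm\alpha$, writing $\bm e(z)=\mathfrak e(z)\,\xi^+(z)$ gives
\[
\bm e(z_0)\bm e(z_1)\cdots\bm e(z_\ell)=\mathfrak e(z_0)\mathfrak e(z_1)\cdots\mathfrak e(z_\ell)\;\xi^+(z_0)\xi^+(z_1)\cdots\xi^+(z_\ell).
\]
Each $\mathfrak e(z)$ is a product of exponentials of oscillators times $e^{\bm\alpha/\ell}z^{\bm a_0/\ell}$, hence an invertible operator for $z\neq0$; consequently the bosonic prefactor $\mathfrak e(z_0)\cdots\mathfrak e(z_\ell)$ is invertible and the wheel condition for $\bm e$ is \emph{equivalent} to $\xi^+(z_0)\cdots\xi^+(z_\ell)=0$ at $z_a=q^{2a}z$. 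As a side check I would verify that the scalar produced by normal-ordering a pair of $\mathfrak e$'s, namely $z_a^{2/\ell}\exp\big(-\sum_{n>0}q^{-n\ell}\tfrac{[2n]}{n[n\ell]}(z_b/z_a)^n\big)$, is regular and nonzero at the geometric point, so that no spurious zero or pole is introduced. The statement for $\bm f$ is handled identically, with $\mathfrak f,\xi^-$ in place of $\mathfrak e,\xi^+$.

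To attack the parafermionic product I would use the tensor description $\HC=\Ho^{\otimes\ell}$ together with $\bm e(z)=\sum_{i=1}^\ell\bm e^i(z)$, where $\bm e^i(z)=\bigotimes_{j<i}k^-(q^{j-1/2}z)\otimes e(q^{i-1}z)\otimes\bigotimes_{j>i}1$. The essential level-one input is the contraction $e(z)e(w)=(z-w)(z-q^{-2}w)\normord{e(z)e(w)}$, which follows from \eqref{eq:pf_level1}; it vanishes at $w=q^2z$, so on any single slot two level-one currents whose arguments are in ratio $q^2$ annihilate each other. Expanding $\bm e(z_0)\cdots\bm e(z_\ell)=\sum_{i_0,\dots,i_\ell}\bm e^{i_0}(z_0)\cdots\bm e^{i_\ell}(z_\ell)$ and computing any matrix element via Lemmas~\ref{prop:no_several}--\ref{prop:no_scalar}, each summand becomes a product of rational per-slot contractions. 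On slot $m$ the surviving $e$'s carry arguments $q^{m-1}z_c$ for the factors $c$ with $i_c=m$, and after commuting the intervening $k^-$'s through (using \eqref{eq:pf_relations}) any assignment that places two \emph{consecutive} factors $c,c+1$ on the same slot produces a vanishing factor $e(Z)e(q^2Z)=0$.

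The main obstacle is that this per-slot mechanism does not finish the proof: the assignments with $i_c\neq i_{c+1}$ for all $c$ survive and are individually nonzero (already at $\ell=2$ the terms $(1,2,1)$ and $(2,1,2)$ remain), so the theorem is a genuine cancellation among the surviving summands rather than a term-by-term vanishing. To organize it I would introduce the antisymmetrised object $\Omega(z_0,\dots,z_\ell)=\bm e(z_0)\cdots\bm e(z_\ell)\prod_{a<b}(z_a-q^2z_b)$, which by the quadratic relation $\bm e(z)\bm e(w)(z-q^2w)=\bm e(w)\bm e(z)(q^2z-w)$ is totally antisymmetric in $z_0,\dots,z_\ell$; since $\prod_{a<b}(z_a-q^2z_b)$ is nonzero on the geometric progression for $|q|<1$, it suffices to show that $\Omega$ vanishes there. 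I would then combine the antisymmetry with the per-slot hard-core constraints and the explicit mode relations \eqref{eq:pf_f_relations}, equivalently by tracking the $\mathbb Z_\ell$ parafermionic charge: the $q^2$-spaced wheel forces $\ell+1$ units of charge into the $\ell$ available colours (slots), one too many, and it is this overflow that makes the surviving contributions cancel.

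I expect this last step to require the most care, since it is exactly where the precise parafermionic fusion enters, and where the reason the critical length is $\ell+1$ (rather than $2$) becomes visible: a single pair $\bm e(z)\bm e(q^2z)$ does \emph{not} vanish for $\ell\ge2$, so the vanishing is a feature of the level-$\ell$ realization and not a consequence of the abstract quadratic relations alone. The argument for $\bm f$ is identical after the substitutions $e\to f$ and the reversal of the ordering of the arguments, using the mirror relation $\bm f(z)\bm f(w)(q^2z-w)=\bm f(w)\bm f(z)(z-q^2w)$.
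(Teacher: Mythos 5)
Your setup --- the mode decomposition $\bm e(z)=\sum_i\bm e^i(z)$ on $\HC=\Ho^{\otimes\ell}$ and a pairwise contraction analysis --- is exactly the paper's, but the argument goes wrong at the point where you assert that only same-slot consecutive pairs vanish, so that the assignments with $i_c\neq i_{c+1}$ for all $c$ ``survive and are individually nonzero'' and the theorem must come from a cancellation among them. That assertion is false, and recognizing that it is false is precisely the content of the paper's proof. For $i<j$ the pair $\bm e^i(z)\,\bm e^j(w)$ \emph{also} carries a vanishing prefactor at $w=q^2z$: in slot $i$ the level-one current $e(q^{i-1}z)$ of the first factor is contracted against the $k^-(q^{i-1/2}w)$ sitting in that same slot of the second factor, and this contraction produces the factor $\frac{z-q^{-2}w}{z-q^{2}w}$, which vanishes at $w=q^{2}z$. (This is the first of the three $\bm e^i\bm e^j$ relations the paper records just before \eqref{eq:pf_f_relations}.) Consequently your claimed $\ell=2$ survivors $(1,2,1)$ and $(2,1,2)$ in fact vanish term by term: each contains a consecutive pair whose slot index increases. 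The only assignments that survive at all are the strictly decreasing ones $\epsilon_1>\epsilon_2>\cdots>\epsilon_{\ell+1}$, and no such sequence exists with values in $\{1,\dots,\ell\}$. The proof is therefore term-by-term vanishing plus pigeonhole; no cancellation is involved.

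Because of this misdiagnosis, the part of your proposal that is supposed to do the real work --- the antisymmetrized object $\Omega$ and the ``$\mathbb Z_\ell$ charge overflow'' --- is never actually carried out, and you yourself flag it as the step requiring the most care; as written the proposal does not constitute a proof. Two smaller remarks: the initial boson/parafermion factorization is a harmless but unnecessary detour, since the paper's $\bm e^i\bm e^j$ relations already package the bosonic and parafermionic contributions together; and the claim that the bosonic prefactor is ``invertible'' rests on the nonvanishing of an infinite-product normal-ordering scalar at the geometric point, which you only promise to verify. Your intuition that $\ell+1$ factors cannot fit into $\ell$ slots is indeed the pigeonhole the paper uses --- but it applies because \emph{nothing} survives the specialization, not because the survivors cancel.
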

This is theorem 2.5 in~\cite{DF-para}.

In order to prove this fact, we need the folowing relations for $\bm{e}^i (z)$ (let $j>i$):
\begin{equation}
\begin{aligned}
 \bm{e}^i (z) \bm{e}^j (w) &= \frac{z-q^{-2} w}{z-q^2 w} :\bm{e}^i (z) \bm{e}^j (w):\\
 \bm{e}^i (z) \bm{e}^i (w) &= q^{2i-2} (z-w)(z-q^{-2} w) :\bm{e}^i (z) \bm{e}^i (w):\\
 \bm{e}^j (z) \bm{e}^i (w) &= q^{-2}:\bm{e}^j (z) \bm{e}^i (w):
\end{aligned}
\end{equation}
And the equivalent for $\bm{f}^i(z)$:
\begin{equation}\label{eq:pf_f_relations}
\begin{aligned}
 \bm{f}^i (z) \bm{f}^j (w) &= \frac{z-q^2 w}{q^2 z - w} :\bm{f}^i (z) \bm{f}^j (w):\\
 \bm{f}^i (z) \bm{f}^i (w) &= q^{2\ell-2i}(z-w)(z-q^2 w) :\bm{f}^i (z) \bm{f}^i (w):\\
 \bm{f}^j (z) \bm{f}^i (w) &= :\bm{f}^j (z) \bm{f}^i (w):
\end{aligned}
\end{equation}

Note that when $w=q^2 z$ (or $w=q^{-2} z$ for the $\bm{f}^i(z)$ case) these expressions simplify. This is our main tool to prove theorem~\ref{thm:pf-wheel}:

\begin{proof}
The product $\bm{e}(z_1) \ldots \bm{e}(z_n)$ can be written as the sum over all possible decompositions:
\[
 \bm{e}(z_1) \ldots \bm{e}(z_n) = \sum_{\epsilon_1,\ldots,\epsilon_n} \bm{e}^{\epsilon_1} (z_1) \ldots \bm{e}^{\epsilon_n} (z_n)
\]

The regular part, that is the $:\ldots :$ part, has no poles (aside $0$ and $\infty$).
Thus, setting $w=q^2 z$, we see that only $\bm{e}^j (z) \bm{e}^i (q^2 z)$ survives (we are assuming $j>i$).
\[
 \bm{e}(z) \ldots \bm{e}(q^{2n-2}z) = \sum_{\epsilon_1>\ldots>\epsilon_n} \bm{e}^{\epsilon_1} (z) \ldots \bm{e}^{\epsilon_n} (q^{2n-2}z)
\]
but such sequence $\epsilon_1 > \ldots > \epsilon_n$ is impossible if $n=\ell+1$, considering that $\epsilon_i \in \{1,\ldots,\ell\}$.
The result follows.
The proof for $\bm f^i (z)$ is identical.
\end{proof}

\section{Graphical calculus} 
We describe here the graphical calculus to represent
$U_q(\widehat{\mathfrak{sl}(2)})$ invariants. It is a convenient way
to derive various relations while keeping track of the spaces involved,
i.e., it acts in a similar way as a ``type checking'' tool for programming.

\renewcommand\ss{\scriptstyle}
\subsection{Basics}
Vector spaces are represented by oriented lines, such that tensor products
correspond to stacking lines from left to right if looking in the direction
of the orientation. Highest weight representations $\HC$ 
are denoted by thick lines,
evaluation representations $V_z$ by thin lines with a label $z$. 
Thus, a (type I) vertex operator
$\Phi(z)$ becomes:
\[
\Phi(z)=
\begin{tikzpicture}[baseline=-3pt,scale=0.75,rotate=180]
\draw[ultra thick,decoration={markings, mark = at position 0.25 with {\arrow{>}}},postaction={decorate}] (0,0) -- (2,0);
\draw[decoration={markings, mark = at position 0.5 with {\arrow{>}}},postaction={decorate}] (1,0) -- node[left] {$\ss z$} (1,-1);
\end{tikzpicture}
\]
and its (vacuum to vacuum) correlation function is: 
\[
\Psi(z_1,\ldots,z_L)=\braC{0}\Phi(z_1)\ldots\Phi(z_L)\ketC{0}
=
\begin{tikzpicture}[baseline=-3pt,scale=0.75,rotate=180]
\draw[ultra thick,decoration={markings, mark = at position 0.2 with {\arrow{>}}},postaction={decorate}] (0,0) node[thin,circle,draw,fill=white,inner sep=2pt] {} -- (2,0) (3,0) -- (5,0) node[thin,circle,draw,fill=white,inner sep=2pt] {};
\draw[ultra thick,dotted] (2,0) -- (3,0);
\draw[decoration={markings, mark = at position 0.5 with {\arrow{>}}},postaction={decorate}] (1,0) -- node[left] {$\ss z_L$} (1,-1);
\draw[decoration={markings, mark = at position 0.5 with {\arrow{>}}},postaction={decorate}] (4,0) -- node[left] {$\ss z_1$} (4,-1);
\end{tikzpicture}
\]
where the white dots on the left (resp.\ right) correspond to $\braC{0}$ (resp.\ $\ketC{0}$).

We need two more objects obtained by applying representations to the universal 
$R$-matrix $\mathcal{R}$ (or to $\mathcal P(\mathcal{R}^{-1})$): 
the $R$-matrix
\[
R_+(z_1/z_2)=
\begin{tikzpicture}[baseline=-3pt,scale=0.75,shift={(0,0.5)}]
\draw[decoration={markings, mark = at position 0.25 with {\arrow{<}}},postaction={decorate}] (1,0) -- node[right] {$\ss z_1$} (0.6,-0.4) (0.4,-0.6) -- (0,-1);
\draw[decoration={markings, mark = at position 0.25 with {\arrow{<}}},postaction={decorate}] (0,0) -- node[pos=0.25,left] {$\ss z_2$} (1,-1);
\end{tikzpicture}
\qquad
R_-(z_1/z_2)=
\begin{tikzpicture}[baseline=-3pt,scale=0.75,shift={(0,0.5)}]
\draw[decoration={markings, mark = at position 0.25 with {\arrow{<}}},postaction={decorate}] (-1,0) -- node[left] {$\ss z_2$} (-0.6,-0.4) (-0.4,-0.6) -- (0,-1);
\draw[decoration={markings, mark = at position 0.25 with {\arrow{<}}},postaction={decorate}] (0,0) -- node[pos=0.25,right] {$\ss z_1$} (-1,-1);
\end{tikzpicture}
\]
where the argument is the ratio of spectral parameters of left and right incoming lines (note that $R_+$ and $R_-$ only differ by a normalization);
and the $L$-matrix
\begin{align*}
L_+(z)&=\begin{tikzpicture}[baseline=-3pt,scale=0.75,rotate=180]
\draw[ultra thick,decoration={markings, mark = at position 0.25 with {\arrow{>}}},postaction={decorate}] (-1,0) -- (1,0);
\draw[decoration={markings, mark = at position 0.25 with {\arrow{>}}},postaction={decorate}] (0,1) -- node[right] {$\ss q^{\ell/2}z$} (0,0.2) (0,-0.2) -- node[right] {$\ss q^{-\ell/2}z$} (0,-1);
\end{tikzpicture}
&
a(L_+(z))&=\begin{tikzpicture}[baseline=-3pt,scale=0.75,rotate=180]
\draw[ultra thick,decoration={markings, mark = at position 0.25 with {\arrow{>}}},postaction={decorate}] (1,0) -- (-1,0);
\draw[decoration={markings, mark = at position 0.25 with {\arrow{>}}},postaction={decorate}] (0,1) -- node[right] {$\ss q^{-\ell/2}z$} (0,0.2) (0,-0.2) -- node[right] {$\ss q^{\ell/2}z$} (0,-1);
\end{tikzpicture}
\\
L_-(z)&=\begin{tikzpicture}[baseline=-3pt,scale=0.75,rotate=180]
\draw[ultra thick,decoration={markings, mark = at position 0.25 with {\arrow{>}}},postaction={decorate}] (-1,0) -- (-0.2,0) (0.2,0) -- (1,0);
\draw[decoration={markings, mark = at position 0.25 with {\arrow{>}}},postaction={decorate}] (0,1) -- node[right] {$\ss q^{-\ell/2}z$} (0,0) -- node[right] {$\ss q^{\ell/2}z$} (0,-1);
\end{tikzpicture}
&
a(L_-(z))&=\begin{tikzpicture}[baseline=-3pt,scale=0.75,rotate=180]
\draw[ultra thick,decoration={markings, mark = at position 0.25 with {\arrow{>}}},postaction={decorate}] (1,0) -- (0.2,0) (-0.2,0) -- (-1,0);
\draw[decoration={markings, mark = at position 0.25 with {\arrow{>}}},postaction={decorate}] (0,1) -- node[right] {$\ss q^{\ell/2}z$} (0,0) -- node[right] {$\ss q^{-\ell/2}z$} (0,-1);
\end{tikzpicture}
\end{align*}
Note that thin lines pick up a factor of $q^{\pm\ell}$ when they cross thick lines.

Lines can also be slid across intersections
of other lines (Reidemeister move III) due to the Yang--Baxter equation, or 
across the trivalent vertex of a VO ($L$/VO commutation).

Finally, 
it is easy to check on generators that the square of the antipode
satisfies $a^2(x)=q^{-a_0+4d}xq^{a_0-4d}$ for all $x$ in $U$. This implies that any line that does a full $\pm 2\pi$ rotation
can be replaced with a straight line with an insertion of $q^{\pm(a_0-4d)}$ times
a central element, e.g.,
\begin{equation}\label{eq:doubledual}
\begin{tikzpicture}[baseline=-3pt,scale=0.75,rotate=180]
\draw[decoration={markings, mark = at position 0.15 with {\arrow{<}}},postaction={decorate}]
(0,-1) -- (0,-0.2) (0,0) arc (0:-310:0.2) (0,0) -- (0,1);
\end{tikzpicture}
=c\ 
\begin{tikzpicture}[baseline=-3pt,scale=0.75,rotate=180]
\draw[decoration={markings, mark = at position 0.25 with {\arrow{<}}},postaction={decorate}] 
(0,-1) -- (0,0) node[circle,fill=black,inner sep=0.05cm] {} -- (0,1);
\end{tikzpicture}
\qquad
\begin{tikzpicture}[baseline=-3pt,scale=0.75,rotate=-90]
\draw[ultra thick,decoration={markings, mark = at position 0.15 with {\arrow{<}}},postaction={decorate}]
(0,-1) -- (0,-0.2) (0,0) arc (0:-310:0.2) (0,0) -- (0,1);
\end{tikzpicture}
=C\ 
\begin{tikzpicture}[baseline=-3pt,scale=0.75,rotate=-90]
\draw[ultra thick,decoration={markings, mark = at position 0.25 with {\arrow{<}}},postaction={decorate}] 
(0,-1) -- (0,0) node[circle,fill=black,inner sep=0.05cm] {} -- (0,1);
\end{tikzpicture}
\end{equation}
where the dot represents insertion of $q^{a_0-4d}$.
We shall not need the values $c,C$ of the central element in what follows.
The $q^{-4d}$ acts as a multiplicative shift of $q^{-4}$ of the spectral
parameter, which propagates along the lines.

\subsection{Proof of \texorpdfstring{$q$KZ}{qKZ} equation}\label{app:qKZ}
We begin with a local relation: consider the following equality,
obvious diagrammatically:
\[
\begin{tikzpicture}[baseline=-3pt,scale=0.75,rounded corners=0.3cm,rotate=180]
\draw[ultra thick,decoration={markings, mark = at position 0.45 with {\arrow{>}}},postaction={decorate}] (0,0) -- (2.8,0) (3.2,0) -- (4,0);
\draw[decoration={markings, mark = at position 0.2 with {\arrow{>}}},postaction={decorate}] (2,0) -- (2,-0.8) -- node[above] {$\ss z$} (3,-0.8) -- (3,1) [sharp corners] -- (2.1,1) arc (270:-60:0.2cm) [rounded corners=0.3cm] (1.9,1) -- (1,1) -- node[right] {$\ss q^{-\ell} z$} (1,0.1) (1,-0.1)
-- node[right] {$\ss q^{-2\ell}z$} (1,-1);
\end{tikzpicture}
=
\begin{tikzpicture}[baseline=-1.5cm,scale=0.5,rotate=180]
\draw[rounded corners=0.2cm,ultra thick,decoration={markings, mark = at position 0.25 with {\arrow{<}}},postaction={decorate}] (3,5) -- (-2,5) -- (-2,2) -- (-0.2,2) (0.2,2) -- (1,2) -- (1,3) -- (-1,3) -- (-1,4) -- (2,4) -- (2,1) -- (-3,1);
\draw[decoration={markings, mark = at position 0.9 with {\arrow{>}}},postaction={decorate}] (0,3) -- (0,1.6) arc (0:-300:0.17cm) (0,1.4) -- (0,1.2) (0,0.8) -- node[right] {$\ss q^{-2\ell}z$} (0,0);
\end{tikzpicture}
=
\begin{tikzpicture}[baseline=-1.35cm,scale=0.75,rotate=180]
\draw[ultra thick,decoration={markings, mark = at position 0.57 with {\arrow{>}}},postaction={decorate}] (-2,2) -- (-1.1,2) arc (-90:220:0.2cm) (-0.8,2) -- (0.9,2) arc (90:-220:0.2cm) (1.1,2) -- (2,2);
\draw[decoration={markings, mark = at position 0.9 with {\arrow{>}}},postaction={decorate}] (0,2) -- (0,1.1) arc (0:-320:0.2cm) (0,0.9) -- node[right] {$\ss q^{-2\ell}z$} (0,0); 
\end{tikzpicture}
\]
If we use \eqref{eq:doubledual}, 
recognizing conjugation by $q^{a_0-4d}$ on the thick line,
the coproduct $\Delta(q^{a_0-4d})=q^{a_0-4d}\otimes q^{a_0-4d}$ and finally the intertwining property of the VO, we can simplify this to:
\[
\begin{tikzpicture}[baseline=-3pt,scale=0.75,rounded corners=0.3cm,rotate=180]
\draw[ultra thick,decoration={markings, mark = at position 0.45 with {\arrow{>}}},postaction={decorate}] (0,0) -- (2.8,0) (3.2,0) -- (4,0);
\draw[decoration={markings, mark = at position 0.2 with {\arrow{>}}},postaction={decorate}] (2,0) -- (2,-0.8) -- node[above] {$\ss z$} (3,-0.8) -- (3,1)  -- (2,1) node[circle,fill=black,inner sep=0.05cm] {} -- (1,1) -- (1,0.1) (1,-0.1)
-- node[right] {$\ss sz$} (1,-1);
\end{tikzpicture}
=
\begin{tikzpicture}[baseline=-3pt,scale=0.75,rotate=180]
\draw[ultra thick,decoration={markings, mark = at position 0.25 with {\arrow{>}}},postaction={decorate}] (0,0) -- (2,0);
\draw[decoration={markings, mark = at position 0.5 with {\arrow{>}}},postaction={decorate}] (1,0) -- node[right] {$\ss sz$} (1,-1);
\end{tikzpicture}
\]
where $s=q^{-2(\ell+2)}$, and we recall that the dot represents insertion of $K_1=q^{a_0}$ accompanied by a shift of the spectral parameter $z\to q^{-4}z$, which we pulled out of the thin line. \rem{plus a power of $q$, like $q^{\ell/4}$, which
probably compensates when we hit the bra/kets}

Now apply this relation to the correlation function $\Psi(z_1,\ldots,z_L)$
described graphically above, and use the highest/lowest weight property of
$\ketC{0}$ and $\braC{0}$, i.e., $L_+(z)\ketC{0}=1$ and $\braC{0}L_-(z)=1$:
\begin{equation}\label{eq:qKZgraph}
\begin{tikzpicture}[baseline=-3pt,scale=0.75,rotate=180]
\draw[ultra thick,decoration={markings, mark = at position 0.25 with {\arrow{>}}},postaction={decorate}] (0,0) node[thin,circle,draw,fill=white,inner sep=2pt] {} -- (2,0) (3,0) -- (5,0) (6,0) -- (8,0) node[thin,circle,draw,fill=white,inner sep=2pt] {};
\draw[ultra thick,dotted] (2,0) -- (3,0) (5,0) -- (6,0);
\draw[decoration={markings, mark = at position 0.5 with {\arrow{>}}},postaction={decorate}] (1,0) -- node[left] {$\ss z_L$} (1,-1);
\draw[decoration={markings, mark = at position 0.5 with {\arrow{>}}},postaction={decorate}] (4,0) -- node[left] {$\ss s z_i$} (4,-1);
\draw[decoration={markings, mark = at position 0.5 with {\arrow{>}}},postaction={decorate}] (7,0) -- node[left] {$\ss z_1$} (7,-1);
\end{tikzpicture}
=
\begin{tikzpicture}[baseline=-3pt,scale=0.75,rounded corners,rotate=180]
\draw[ultra thick,decoration={markings, mark = at position 0.25 with {\arrow{>}}},postaction={decorate}] (0,0) node[thin,circle,draw,fill=white,inner sep=2pt] {} -- (2,0) (3,0) -- (5,0) (6,0) -- (8,0) node[thin,circle,draw,fill=white,inner sep=2pt] {};
\draw[ultra thick,dotted] (2,0) -- (3,0) (5,0) -- (6,0);
\draw[decoration={markings, mark = at position 0.75 with {\arrow{>}}},postaction={decorate}] (1,0) -- node[left,pos=0.8] {$\ss z_L$} (1,-1);
\draw[decoration={markings, mark = at position 0.9 with {\arrow{<}}},postaction={decorate}] (3.75,-1) -- (3.75,-0.5) -- node[above] {$\ss s z_i$} (1.1,-0.5) (0.9,-0.5) -- (-0.5,-0.5) -- (-0.5,0.5) -- node[circle,fill=black,inner sep=0.05cm] {} (8.5,0.5) -- (8.5,-0.5) -- node[above,pos=0.75] {$\ss z_i$} (4.25,-0.5) -- (4.25,0);
\draw[decoration={markings, mark = at position 0.75 with {\arrow{>}}},postaction={decorate}] (7,0) -- (7,-0.4) (7,-0.6) -- node[left] {$\ss z_1$} (7,-1);
\end{tikzpicture}
\end{equation}
which is exactly the quantum Knizhnik--Zamolodchikov equation \eqref{eq:qKZ}.

In case $\braC{0}\ldots\ketC{0}$ 
is replaced with
$\braC{k}\ldots\ketC{k}$, $k=0,\ldots,\ell$,
there is an additional contribution
coming from the zero modes of $L_\pm(z)\sim q^{a_0\otimes a_0/2}$, so that
the twist in the dot becomes $q^{(1+k)a_0}$.

The $q$KZ equation is valid for the correlation function of a product of
arbitrary VOs; however, in the special case of perfect VOs, we have the additional relation \eqref{eq:exchPhi}:
\[
\begin{tikzpicture}[baseline=-3pt,scale=0.75,rounded corners=0.2cm,rotate=180]
\draw[ultra thick,decoration={markings, mark = at position 0.5 with {\arrow{>}}},postaction={decorate}] (0,0) -- (3,0);
\draw[decoration={markings, mark = at position 0.5 with {\arrow{>}}},postaction={decorate}]  (1,0) -- node[right] {$\ss z_2$} (1,-1) -- (2,-2);
\draw[decoration={markings, mark = at position 0.5 with {\arrow{>}}},postaction={decorate}]  (2,0) -- node[left] {$\ss z_1$} (2,-1) -- (1,-2);
\end{tikzpicture}
=
\begin{tikzpicture}[baseline=-3pt,scale=0.75,rotate=180]
\draw[ultra thick,decoration={markings, mark = at position 0.5 with {\arrow{>}}},postaction={decorate}]  (0,0) -- (3,0);
\draw[decoration={markings, mark = at position 0.5 with {\arrow{>}}},postaction={decorate}]  (1,0) -- node[right] {$\ss z_1$} (1,-2);
\draw[decoration={markings, mark = at position 0.5 with {\arrow{>}}},postaction={decorate}]  (2,0) -- node[left] {$\ss z_2$} (2,-2);
\end{tikzpicture}
\]
where the flat crossing corresponds to another normalization of the $R$-matrix,
denoted by $R(z)$ in the text.
This in turn implies the exchange relation  \eqref{eq:exch}
for correlation functions, and, combined with the $q$KZ equation at $i=L$,
produces the cyclicity property:
\[
\begin{tikzpicture}[baseline=-3pt,scale=0.75,rotate=180]
\draw[ultra thick,decoration={markings, mark = at position 0.25 with {\arrow{>}}},postaction={decorate}] (0,0) node[thin,circle,draw,fill=white,inner sep=2pt] {} -- (3,0) (4,0) -- (7,0) node[thin,circle,draw,fill=white,inner sep=2pt] {};
\draw[ultra thick,dotted] (3,0) -- (4,0);
\draw[decoration={markings, mark = at position 0.5 with {\arrow{>}}},postaction={decorate}] (1,0) -- node[right] {$\ss s z_L$} (1,-1);
\draw[decoration={markings, mark = at position 0.5 with {\arrow{>}}},postaction={decorate}] (2,0) -- node[left] {$\ss z_{L-1}$} (2,-1);
\draw[decoration={markings, mark = at position 0.5 with {\arrow{>}}},postaction={decorate}] (6,0) -- node[left] {$\ss z_1$} (6,-1);
\end{tikzpicture}
\propto
\begin{tikzpicture}[baseline=-3pt,scale=0.75,rotate=180]
\draw[ultra thick,decoration={markings, mark = at position 0.25 with {\arrow{>}}},postaction={decorate}] (0,0) node[thin,circle,draw,fill=white,inner sep=2pt] {} -- (3,0) (4,0) -- (7,0) node[thin,circle,draw,fill=white,inner sep=2pt] {};
\draw[ultra thick,dotted] (3,0) -- (4,0);
\draw[decoration={markings, mark = at position 0.5 with {\arrow{>}}},postaction={decorate}] (2,0) -- node[left] {$\ss z_{L-1}$} (2,-1);
\draw[decoration={markings, mark = at position 0.5 with {\arrow{>}}},postaction={decorate}] (5,0) -- node[left] {$\ss z_1$} (5,-1);
\draw[decoration={markings, mark = at position 0.5 with {\arrow{<}}},postaction={decorate},rounded corners] (1,-1) -- (1,-0.5) -- (-0.5,-0.5) -- (-0.5,0.5) -- node[circle,fill=black,inner sep=0.05cm] {} (7.5,0.5) -- (7.5,-0.5) -- node[above] {$\ss z_L$} (6,-0.5) -- (6,0);
\end{tikzpicture}
\]
The change from $\Psi$ to $\poly\Psi$ removes the proportionality factor,
which leads to \eqref{eq:cycl}.

\subsection{Proof of eigenvector property}\label{sec:grapheig}
We provide here a graphical proof of the eigenvector property
of section \ref{sec:eig}. What follows 
is not rigorous because the VO construction
becomes divergent when $|q|=1$; however, the proper proof, given in the text, is
along the exact same lines as this graphical proof, except with diverging
prefactors removed. We shall therefore ignore prefactors in what follows.

Recall the dual vertex operator \eqref{eq:dualvo}. The properties
\eqref{eq:dualrel} it satisfies can be depicted as:
\begin{subequations}
\begin{align}\label{eq:dualrelgraph}
\Phi_b(z)\Phi^\ast_{b'}(z)&=\delta_{b,b'} g_k
&
\begin{tikzpicture}[baseline=-3pt,scale=0.75,rounded corners,rotate=180]
\draw[ultra thick,decoration={markings, mark = at position 0.25 with {\arrow{>}}},postaction={decorate}] (0,0) -- (3,0);
\draw[decoration={markings, mark = at position 0.5 with {\arrow{<}}},postaction={decorate}] (1,0) -- node[left] {$\ss z$} (1,-1);
\draw[decoration={markings, mark = at position 0.5 with {\arrow{>}}},postaction={decorate}] (2,0) -- node[left] {$\ss z$} (2,-1);
\end{tikzpicture}
&=g_k\ 
\begin{tikzpicture}[baseline=-3pt,scale=0.75,rounded corners,rotate=180]
\draw[ultra thick,decoration={markings, mark = at position 0.25 with {\arrow{>}}},postaction={decorate}] (0,0) -- (3,0);
\draw[decoration={markings, mark = at position 0.5 with {\arrow{>}}},postaction={decorate}] (1,-1) -- (1,-0.3) -- node[above] {$\ss z$} (2,-0.3) -- (2,-1);\end{tikzpicture}
\\
\sum_{b=0}^\ell \Phi^\ast_b(z)\Phi_b(z)&= g_k
&
\begin{tikzpicture}[baseline=-3pt,scale=0.75,rounded corners,rotate=180]
\draw[ultra thick,decoration={markings, mark = at position 0.25 with {\arrow{>}}},postaction={decorate}] (0,0) -- (3,0);
\draw[decoration={markings, mark = at position 0.5 with {\arrow{>}}},postaction={decorate}] (1,0) -- (1,-1) -- node[above] {$\ss z$} (2,-1) -- (2,0);
\end{tikzpicture}
&=g_k\ 
\begin{tikzpicture}[baseline=-3pt,scale=0.75,rounded corners,rotate=180]
\draw[ultra thick,decoration={markings, mark = at position 0.25 with {\arrow{>}}},postaction={decorate}] (0,0) -- (3,0);
\end{tikzpicture}
\end{align}
\end{subequations}

Now use the first relation
to rewrite the action of the transfer matrix $\poly T(z)$ on $\Psi(z)$ in a similar
way as in \eqref{eq:RWtrick}:
\[
\begin{tikzpicture}[baseline=-3pt,scale=0.75,rounded corners,rotate=180]
\draw[ultra thick,decoration={markings, mark = at position 0.2 with {\arrow{>}}},postaction={decorate}] (0,0) node[thin,circle,draw,fill=white,inner sep=2pt] {} -- (2,0) (3,0) -- (5,0) node[thin,circle,draw,fill=white,inner sep=2pt] {};
\draw[ultra thick,dotted] (2,0) -- (3,0);
\draw[decoration={markings, mark = at position 0.5 with {\arrow{>}}},postaction={decorate}] (1,0) -- node[left] {$\ss z_L$} (1,-1.5);
\draw[decoration={markings, mark = at position 0.5 with {\arrow{>}}},postaction={decorate}] (4,0) -- node[left] {$\ss z_1$} (4,-1.5);
\draw[decoration={markings, mark = at position 0.4 with {\arrow{>}}},postaction={decorate}] (2.5,-1) -- (5.5,-1) -- (5.5,0.6) -- node[circle,inner sep=0.05cm,fill=black] {} (-0.5,0.6) -- node[right] {$\ss z$} (-0.5,-1) -- cycle;%
\end{tikzpicture}
\propto
\begin{tikzpicture}[baseline=-3pt,scale=0.75,rounded corners,rotate=180]
\draw[ultra thick,decoration={markings, mark = at position 0.25 with {\arrow{>}}},postaction={decorate}] (0,0) node[thin,circle,draw,fill=white,inner sep=2pt] {} -- (2,0) (3,0) -- (7,0) node[thin,circle,draw,fill=white,inner sep=2pt] {};
\draw[ultra thick,dotted] (2,0) -- (3,0);
\draw[decoration={markings, mark = at position 0.5 with {\arrow{>}}},postaction={decorate}] (1,0) -- node[left] {$\ss z_L$} (1,-1.5);
\draw[decoration={markings, mark = at position 0.5 with {\arrow{>}}},postaction={decorate}] (4,0) -- node[left] {$\ss z_1$} (4,-1.5);
\draw[decoration={markings, mark = at position 0.5 with {\arrow{>}}},postaction={decorate}] (6,0) -- (6,-1) -- (7.5,-1) -- (7.5,0.6) -- 
node[circle,inner sep=0.05cm,fill=black] {} 
(-0.5,0.6) -- node[right] {$\ss z$} (-0.5,-1) -- (5,-1) -- (5,0);%
\end{tikzpicture}
\]

Applying the $q$KZ equation \eqref{eq:qKZgraph} at $i=L$ and ignoring the spectral parameter shift since $s=1$, we find:
\begin{align*}
\begin{tikzpicture}[baseline=-3pt,scale=0.75,rounded corners,rotate=180]
\draw[ultra thick,decoration={markings, mark = at position 0.2 with {\arrow{>}}},postaction={decorate}] (0,0) node[thin,circle,draw,fill=white,inner sep=2pt] {} -- (2,0) (3,0) -- (5,0) node[thin,circle,draw,fill=white,inner sep=2pt] {};
\draw[ultra thick,dotted] (2,0) -- (3,0);
\draw[decoration={markings, mark = at position 0.5 with {\arrow{>}}},postaction={decorate}] (1,0) -- node[left] {$\ss z_L$} (1,-1.5);
\draw[decoration={markings, mark = at position 0.5 with {\arrow{>}}},postaction={decorate}] (4,0) -- node[left] {$\ss z_1$} (4,-1.5);
\draw[decoration={markings, mark = at position 0.4 with {\arrow{>}}},postaction={decorate}] (2.5,-1) -- (5.5,-1) -- (5.5,0.6) --
node[circle,inner sep=0.05cm,fill=black] {} (-0.5,0.6) -- node[right] {$\ss z$} (-0.5,-1) -- cycle;%
\end{tikzpicture}
&\propto
\begin{tikzpicture}[baseline=-3pt,scale=0.75,rounded corners,rotate=180]
\draw[ultra thick,decoration={markings, mark = at position 0.25 with {\arrow{>}}},postaction={decorate}] (0,0) node[thin,circle,draw,fill=white,inner sep=2pt] {} -- (2,0) (3,0) -- (7,0) node[thin,circle,draw,fill=white,inner sep=2pt] {};
\draw[ultra thick,dotted] (2,0) -- (3,0);
\draw[decoration={markings, mark = at position 0.5 with {\arrow{>}}},postaction={decorate}] (1,0) -- node[left] {$\ss z_L$} (1,-1.5);
\draw[decoration={markings, mark = at position 0.5 with {\arrow{>}}},postaction={decorate}] (4,0) -- node[left] {$\ss z_1$} (4,-1.5);
\draw[decoration={markings, mark = at position 0.5 with {\arrow{>}}},postaction={decorate}] (5,0) -- (5,-1) -- node[above] {$\ss z$} (6,-1) -- (6,0);
\end{tikzpicture}
\\
&\propto
\begin{tikzpicture}[baseline=-3pt,scale=0.75,rounded corners,rotate=180]
\draw[ultra thick,decoration={markings, mark = at position 0.35 with {\arrow{>}}},postaction={decorate}] (0,0) node[thin,circle,draw,fill=white,inner sep=2pt] {} -- (2,0) (3,0) -- (5,0) node[thin,circle,draw,fill=white,inner sep=2pt] {};
\draw[ultra thick,dotted] (2,0) -- (3,0);
\draw[decoration={markings, mark = at position 0.5 with {\arrow{>}}},postaction={decorate}] (1,0) -- node[left] {$\ss z_L$} (1,-1.5);
\draw[decoration={markings, mark = at position 0.5 with {\arrow{>}}},postaction={decorate}] (4,0) -- node[left] {$\ss z_1$} (4,-1.5);
\end{tikzpicture}
\end{align*}
which is the desired eigenvector property.

\section{Properties of the spin-loop mapping}\label{app:spinloop}
We discuss here in more detail the linear map $\varphi_\ell$ from the loop space to the spin space that is used in section \ref{sec:loop}.

\subsection{Case \texorpdfstring{$\ell=1$}{l=1}}
This case is well-known, and we only review it briefly.
$\varphi_1=\varphi: \mathbb{C}[\mathcal{L}_L]\to (\mathbb{C}^2)^{\otimes L}$ 
is defined explicitly in \ref{sec:loopequiv}. Both spaces
come equipped with standard bases, indexed by $\mathcal{L}_L$ and $\{0,1\}^L$
respectively, and its matrix is given by \eqref{eq:spinloop};
its entries are powers of $-q$, and therefore $\varphi$
is well-defined for any $q\in\mathbb{C}^\times$.
Up to an overall power of $q$, these are nothing but maximal parabolic
Kazhdan--Lusztig polynomials, see e.g.~\cite{artic52} for a discussion
in a related context.

There is a partial order on $\{0,1\}^L$:
$\epsilon\le\epsilon'$ iff $\sum_{j=1}^i (\epsilon_j-\epsilon'_j)\le 0$
for all $i=1,\ldots,L$. Denote $\epsilon_0=\{1,0,\ldots,1,0\}$
and $\tilde{\mathcal{L}}_L=\{\epsilon\in\{0,1\}^L: \epsilon\ge\epsilon_0\}$
(Dyck words).
There is a 
natural bijection between $\mathcal{L}_L$ and $\tilde{\mathcal{L}}_L$ 
which to $\pi\in\mathcal{L}_L$ associates $\epsilon$ such that
\[
\epsilon_i=\begin{cases}
1&\text{if $\pi(i)>i$}\\
0&\text{if $\pi(i)<i$}
\end{cases}
\qquad
\qquad i=1,\ldots,L
\]

The claim is that the submatrix of $\phi$ where indices
are restricted to $\tilde{\mathcal{L}}_L
\times\mathcal{L}_L$, is upper triangular w.r.t.\ the partial order above
(modulo the bijection). This is fairly obvious in the graphical interpretation
where one orients the arcs: the diagonal elements of the matrix
correspond to all arcs pointing right, and each subsequent change of orientation
of an arc clearly increases the partial sums $\sum_{j\le i}\epsilon_i$.
Furthermore, the diagonal elements are all $(-q)^{-L/2}$.
Therefore, the matrix has maximal rank for all $q\in\mathbb{C}^\times$,
and $\varphi$ is an isomorphism of $\mathbb{C}[\mathcal{L}_L]$ onto its image.

For $L=2$, the unique link pattern is sent onto the state
$(-q)^{-1/2}\ketE{10}+(-q)^{1/2}\ketE{01}$, which is checked explicitly
to be $U_1$-invariant. One can then proceed inductively by noting that
any link pattern
possesses an arc connecting neighbors, applying the $L=2$ observation to it
and then removing it. Therefore all link patterns are sent into the 
$U_1$-invariant subspace $((\mathbb{C}^2)^{\otimes L})_{inv}$
of $(\mathbb{C}^2)^{\otimes L}$.

For generic $q$, the dimension of the invariant subspace
is the same as for $\mathfrak{sl}(2)$ and is well-known to be the
Catalan number $L!/((L/2)!(L/2+1)!)$ ($L$ even), which enumerates Dyck paths
or link patterns. Therefore, for generic $q$,
$\varphi$ is an isomorphism of $\mathbb{C}[\mathcal{L}_L]$ onto 
$((\mathbb{C}^2)^{\otimes L})_{inv}$.

\subsection{Fused case}\label{app:loopfused}
Let us first define the projection map $p=p^{(\ell)}:
(\mathbb{C}^2)^{\otimes \ell}\to V\cong \mathbb{C}^{\ell+1}$.
We use induction on $\ell$: $p^{(1)}=1$ and
\[
p^{(k+1)}=p^{(k)}\left(1+\frac{q^{k}-q^{-k}}{q^{k+1}-q^{-(k+1)}}e_k\right)p^{(k)}
\]
where $e_k=\left(\begin{smallmatrix}0&0&0&0\\ 0&-q&1&0\\ 0&1&-q^{-1}&0\\ 0&0&0&0
	\end{smallmatrix}\right)_{k,k+1}$; in the loop language, $e_k$ is simply 
\linkpattern[tangle,numbering={1/k,1'/k,2/k+1,2'/k+1}]{1/2,1'/2'}.
From its definition, it commutes with the $U_1$-action, and in fact the claim
(see e.g.~\cite{Martin-Potts}) 
is that it is exactly the projection onto the maximal spin $\ell/2$ 
irreducible subrepresentation.

Now define $\varphi_\ell$ from $\varphi$ by projection, as in \eqref{eq:projfus}. It is well-defined when the projection is, that is, when $q^{2r}\ne 1$, $r=2,\ldots,\ell$.
By definition, the
$\varphi_\ell(\pi)$ live in the $U_1$-invariant 
subspace $(V^{\otimes L})_{inv}$ of $V^{\otimes L}$.
Furthermore, borrowing an argument from \cite{artic37}, we note
that linear independence of the $\varphi_\ell(\pi)$ is equivalent
to linear independence of the $\varphi(\pi)$, because
$\varphi_\ell(\pi)=\varphi(\pi)+\sum_{\rho\in\mathcal{L}_{\ell L}-\mathcal{L}_{\ell;L}} c_{\pi,\rho} \varphi(\rho)$. But the latter was proved in the previous section.

It is not hard to show once again 
that $\dim (V^{\otimes L})_{inv}=\# \mathcal{L}_{\ell,L}$
for generic $q$
(a short proof is to use a bijection -- generalizing the one of the
previous section --
between fused link patterns and Bratteli diagrams for spin $\ell/2$,
the increments of the paths being now given as the difference of numbers
of opening and closing arcs in a given group); 
equivalently, using the crystal limit $q\to 0$
one can conclude directly that the map above is an isomorphism between
$\mathbb{C}[\mathcal{L}_{\ell,L}]$ and $(V^{\otimes L})_{inv}$ for generic $q$.
We conjecture that it remains an isomorphism
for
$q=-e^{-i\pi/(\ell+2)}$, though we do not have a rigorous proof of that.

\subsection{Wheel condition}\label{sec:proofwheel}
\let\up=\uparrow
\let\down=\downarrow
\let\ket=\ketC
\let\bra=\braC
We prove here the wheel condition of Theorem~\ref{thm:wheel}.
\rem{does the proof work for any $k$? is $\Psi^{(k)}$ $U_1$-invariant??}

We work inside $(\mathbb{C}^2)^{\otimes \ell L}$.
We fix a subset of consecutive $2\ell$ indices,
say $\ell(i-1)+1,\ldots,\ell (i+1)$, 
and consider the projector $P_{j;i,i+1}$ ($j=0,\ldots,\ell$) 
onto the spin $j$ subrepresentation
of the action of $U_q(\mathfrak{sl}(2))$ acting on these $2\ell$ spaces.
Also, given a link pattern $\pi\in\mathcal{L}_{L,\ell}$.
denote $2c_{i,i+1}(\pi)$ the number of connections between the 
sites $\ell(i-1)+1,\ldots,\ell(i+1)$ and the outside.

Then we have the following lemma:
$j>c_{i,i+1}(\pi)$ implies $P_{j;i,i+1}\varphi_1(\pi)=0$ 
(and the $P_{j;i,i+1} \varphi_1(\pi)$, where $\pi$ runs over
link patterns with $c_{i,i+1}=j$, are linearly independent).
The equality is easy proved by inductively removing the ``little arches''
connecting sites $\ell(i-1)+1,\ldots,\ell(i+1)$. 
Explicitly, suppose $\pi(k)=k+1$,
$\ell(i-1)+1\le k<k+1\le \ell(i+1)$, and consider the vector space
\begin{multline*}
V=(\mathbb C^2)_{\ell(i-1)+1}\otimes\cdots
\otimes (\mathbb C^2)_{k-1}\otimes 
((-q)^{-1/2} \ket{\up\down}-(-q)^{1/2}\ket{\down\up})_{k,k+1}\\\otimes
(\mathbb C^2)_{k+2}\otimes\cdots\otimes (\mathbb C^2)_{\ell(i+1)}
\end{multline*}
Note that $\varphi_1(\pi)\in W_{\text{left}} \otimes V \otimes W_{\text{right}}$,
where $W_{\text{left}}\cong(\mathbb{C}^2)^{\otimes \ell(i-1)}$ 
and $W_{\text{right}}\cong(\mathbb{C}^2)^{\otimes \ell(L-i-1)}$ 
correspond to
the sites outside $[\ell(i-1)+1,\ell(i+1)]$.
Since the vector at sites $k,k+1$ is a singlet of $U_q(\mathfrak{sl}(2))$,
$V$ is isomorphic to $(\mathbb{C}^2)^{\otimes 2(\ell-1)}$ as
a representation space of $U_q(\mathfrak{sl}(2))$.
Iterating the process, we find that
$\varphi_1(\pi)\in W_{\text{left}}\otimes V\otimes W_{\text{right}}$ as above, where $V$ is a subspace of
$(\mathbb C^2)^{\otimes 2\ell}$ which is
isomorphic to $(\mathbb C^2)^{\otimes 2c_{i,i+1}(\pi)}$. But we know explicitly
the decomposition of the latter into irreducible representations,
and in particular that the highest possible spin is $c_{i,i+1}(\pi)$, 
hence the equality of the lemma. 
For the linear independence, consider the unique spin state obtained
from $\pi$ which 
has all outside connections of the form $\ket{\uparrow}$, i.e.,
by orienting all lines to the outside ``outwards''. 

Next, we note that since the projection
$\prod_{i=1}^L p_i$ {\em cannot increase}\/ the number of connections
to the outside, then the same statement can be made about 
the ``fused'' vector $\varphi_\ell(\pi)$, $\pi\in\mathcal{L}_{L,\ell}$
(noting that in the case of fused link patterns, the connections to
the outside are shared equally between 
groups $i$ and $i+1$).
So,
\[
P_{j;i,i+1}\varphi_\ell(\pi)=0\qquad
\pi\in\mathcal{L}_{L;\ell},\ 
j>c_{i,i+1}(\pi)
\]
(and the $P_{j;i,i+1} \varphi_1(\pi)$, where $\pi$ runs over
fused link patterns with $c_{i,i+1}=j$, are linearly independent).

From the coproduct of $U_q(\mathfrak{sl}(2))$ it not hard
to see that $P_{j;i,i+1}$ commutes with its action {\em on the whole of}\/
$(\mathbb{C}^{\ell+1})^{\otimes L}$, and in particular leaves invariant
the image of $\varphi_\ell$. This implies that
$P_{j;i,i+1} \varphi_\ell(\pi)$ is a linear combination of $\varphi_\ell(\pi')$.

Now writing $P_{\ell;i,i+1} P_{j;i,i+1} \varphi_\ell(\pi)=0$ for $j<\ell$, and expanding 
$P_{j;i,i+1}\varphi_\ell(\pi)$ 
in $\varphi_\ell(\pi')$ as above, we find that no link patterns
$\pi'$ with $c_{i,i+1}(\pi')=\ell$ can appear in the expansion.
Repeating the argument with $P_{\ell-1;i,i+1}P_{j;i,i+1}\varphi_\ell(\pi)=0$, etc,
until $P_{j+1;i,i+1}P_{j;i,i+1}\varphi_\ell(\pi)=0$, we conclude that
\[
\mathrm{Im}\,P_{j;i,i+1}\circ\varphi_\ell \subseteq \left<\varphi_\ell(\pi): c_{i,i+1}(\pi)\le j\right>
\]

Now consider
$\poly\Psi:=\poly\Psi^{(0)}(\ldots,z_i,z_{i+1},z_{i+2},\ldots)$ (the case $k>0$ will be discussed separately)
with
$z_i=q^{2(a+b)}z$, $z_{i+1}=q^{2b}z$, $z_{i+2}=z$, $a,b>0$, $a+b<\ell+2$.
Plugging these equalities into
\eqref{eq:exchp} and using the form \eqref{eq:defRp} of the
$R$-matrix,
we conclude that $\poly\Psi$ is a linear combination
of $P_{j;i,i+1} \poly\Psi$, $j<a$, and a linear combination of
$P_{j;i+1,i+2} \poly\Psi$, $j<b$. 

As mentioned in sect.~\ref{sec:loop}, $\poly\Psi$ is $U_1$-invariant.
We can therefore apply the reasoning above and conclude that
\[
\poly\Psi
\in \left<\varphi_\ell(\pi): c_{i,i+1}(\pi)< a,\,
c_{i+1,i+2}(\pi)< b\right>
\]
For any such $\pi$ the group of sites $i+1$ has 
at least $\ell+1-a$ connections
to the sites $i$, and $\ell+1-b$ connections to the sites $i+2$;
but $\ell+1-a+\ell+1-b>\ell$ which is contradictory.
Therefore $\poly\Psi=0$.

This is the special case $i_1=i$, $i_2=i+1$, $i_3=i+2$ of \eqref{eq:wheel}. 
In order to obtain
the general case, one simply needs to apply the exchange relation
\eqref{eq:exchp} repeatedly, to move $(i,i+1,i+2)\to (i_1,i_2,i_3)$;
assuming that all other arguments of $\poly\Psi$ are generic,
all $R$-matrices involved in this process have nonzero denominator
and are therefore well-defined.
We have thus proved
Theorem~\ref{thm:wheel}.

The case of $\Psi^{(k)}$, $k>0$, can be treated similarly by noting that the reasoning above
is purely local and cannot depend on the twist. Explicitly, one can build
a $U_1$-invariant element of $\mathbb C^{k+1}\otimes(\mathbb C^2)^{\otimes\ell L}\otimes\mathbb C^{k+1}$
given by $\bra{x}\Phi(z_1)\ldots\Phi(z_{2n})\ket{y}$ where $\bra{x}$ and $\ket{y}$ run
over the $U_1$-representation generated by $\bra{k}$ and $\ket{k}$, each isomorphic to $\mathbb C^{k+1}$.
It satisfies the wheel condition, and therefore $\Psi^{(k)}$ does.

{\em Remark}: the connection between the projectors
$P_j$ and link patterns can be made more explicit 
by introducing generalized Temperley--Lieb
operators $E_j$ (denoted $e^{(\ell-j)}$ in \cite{artic37}),
which are best described graphically as
\[
E_j= p_1 p_2\ 
\begin{tikzpicture}[/linkpattern/every linkpattern]
\linkpattern[tangle,numbered=false,tikzstarted,height=1.2]{1/1',2/2',3/6,4/5,3'/6',4'/5',7/7',8/8'}
\node[rotate=90] at (1.5,-1.6) {$\Big\{$};
\node[rotate=90] at (3.5,-1.6) {$\Big\{$};
\node at (1.5,-2.2) {$j$};
\node at (3.5,-2.2) {$\ell-j$};
\end{tikzpicture}
\]
Note that the $E_j$ are (up to normalization) a family of 
non-orthogonal projectors (as opposed to the $P_j$ which are orthogonal
to each other).

Comparing the expression of the $R$-matrix
\eqref{eq:defRp}
 with the following expression (correcting a sign mistake in Eq.~(2.8) of \cite{artic37}):
\[
\check {\poly R}(z)=(-1)^\ell
\sum_{j=0}^\ell
\left(\prod_{r=1}^j \frac{q^{\ell-r+1}-q^{-(\ell-r+1)}}{q^r-q^{-r}}\right)
\left(\prod_{r=0}^j
\frac{q^r z-q^{-r}w}{q^{\ell-r}w-q^{r-\ell}z}
\right)
E_j
\]
we see that there is a triangular change of basis between the
$P_j$ and the
$E_j$: $P_j=\sum_{j'\le j} \alpha_{j,j'} E_{j'}$ with $\alpha_{j,j}\ne0$.

\section{Polynomiality}\label{app:poly}
In this appendix we prove that the expression~\eqref{eq:Psi_pol} is a polynomial.
As a bonus we get that the coefficients are Laurent polynomials in $q$ up to some known factor.
Recall the expression:
\begin{equation}\label{eq:recall_Psi}
 \begin{aligned}
 \poly\Psi_{b_1, \ldots, b_{2n}}^{(k)} (z_1,\ldots,z_{2n}) &=\gamma_{\ell,n}^{(k)} \prod_{i=1}^{2n} \alpha_{b_i} z_i^k
\prod_{1\le i<j\le 2n} \prod_{r=1}^\ell (q^{2r}z_j-z_i)
\oint \ldots \oint \prod_{i=1}^{\ell n}\frac{dw_i}{2\pi i}\\
 & \quad \times \prod_{i<j}\frac{w_j-w_i}{w_j-q^2 w_i} \frac{P_{Y_{\ell,n}^{(k)}} (w_1,\ldots, w_{\ell n})}{\prod_{j \leq \e_i} (q^{\ell} w_i-z_j) \prod_{j \geq \e_i}(w_i - q^{\ell}z_j)} 
 \end{aligned}
\end{equation}
where the contour integration is made in the following way: we start by integrating $w_{\ell n}$ arround the points $q^{\ell-2k} z_s$, for all $1\leq i\leq 2n$ and $0\leq k< \ell$; we proceed by doing the same with $w_{\ell n-1}$, $w_{\ell n-2}$, \ldots, up to $w_1$.
By Cauchy's residue theorem this is the same as replacing $\{w_1, \ldots,w_{\ell n}\}\to \{q^{\ell-2k_1} z_{s_1},\ldots,q^{\ell-2k_{\ell n}} z_{s_{\ell n}}\}$, removing the correspondant poles $(w_i-q^{\ell-2k_i}z_{s_i})^{-1}$ and summing over all possible combinations.
Notice that, due to the presense of the Vandermonde polynomial we can not repeat a pole. 
Also, $q^{\ell-2k}z_s$ is only possible, for $k\geq 1$ if $q^{\ell-2k+2}z_s$ appears to the right.

We want to prove that there is neither any pole of the kind $(z_s - q^a z_r)^{-1}$ nor of the kind $z_r^{-1}$.
In order to prove it we will look in the details of the computation.
We can ignore all integrations except the ones that concern $z_r$ and $z_s$, with $r<s$.
There is a subset of the integration variables $\{w_{j_1},\ldots,w_{j_m}\}$ which is replaced by $q^{\ell -2k} z_r$ and $q^{\ell-2k} z_s$, the others being replaced by other variables $z_t$.

Fix $m$.
We will represent it diagramatically, representing all poles in $z_s$ by a blue circle and the ones in $z_r$ by a red square.
For example, we represent
\begin{align*}
 \{w_1,w_2,w_3,w_4\} &\to \{q^{\ell-4} z_s, q^{\ell-2} z_s,q^{\ell} z_r,q^{\ell} z_s\} &\text{by}&\quad
\begin{tikzpicture}[scale=.6]
 \draw[black!50!white] (0,0) -- (3,0);
 \node[Ps] at (0,0) {};
 \node[Ps] at (1,0) {};
 \node[Pr] at (2,0) {};
 \node[Ps] at (3,0) {};
\end{tikzpicture}
\end{align*}
where we simplified the indices of the integration variables.

Let $\psi_{d, \bm \e}$ be the relevant part of the integral, where $d$ stands for an arbitrary diagram with $m$ circles and squares,
\begin{equation}\label{eq:form_poly}
 \begin{aligned}
 \psi_{d,\bm \e} (z_r,z_s) &=
\oint \ldots \oint \prod_{i=1}^{m}\frac{dw_i}{2\pi i}\prod_{i<j}\frac{w_i-w_j}{w_i-q^{-2} w_j} 
 \prod_i \frac{1}{(w_i - q^{\ell}z_r)(w_i - q^{\ell}z_s)}\\ 
 & \quad \times P_{Y_{\ell,n}^{(k)}} z_r^k z_s^k \prod_{i=1}^m \frac{\prod_{\e_i > r}(w_i - q^{\ell}z_r)\prod_{\e_i > s}(w_i - q^{\ell}z_s)}{\prod_{\e_i \geq r} (q^{\ell} w_i-z_r) \prod_{\e_i \geq s} (q^{\ell} w_i-z_s) } 
\prod_{t=1}^\ell (q^{2t}z_s-z_r)
 \end{aligned}
\end{equation}
where the integration is performed according to the diagram $d$.
We restrict $\bm \e$ to the relevant indices: $\{\e_{j_1}, \ldots \e_{j_m}\}$.

The following statement is necessary to the proof:
\begin{lemma}\label{lemma:Macdonald_power}
 The Macdonald polynomial $P_{Y_{\ell,n}^{(k)}}$ satisfies:
\[
  P_{Y_{\ell,n}^{(k)}} (z,q^2 z, \ldots, q^{2j-2} z, w_{j+1}, \ldots) = z^{\max{\{j-k,0\}}} \times (\ldots)
\]
where $j\leq \ell$.
\end{lemma}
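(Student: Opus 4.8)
The plan is to read the exponent of $z$ directly off the monomial expansion of the Macdonald polynomial. Write $P_{Y_{\ell,n}^{(k)}}=m_\lambda+\sum_{\mu\prec\lambda}c_{\lambda\mu}\,m_\mu$, where $\lambda=Y_{\ell,n}^{(k)}$, the sum runs over partitions strictly below $\lambda$ in the dominance order, and $m_\mu$ is the monomial symmetric polynomial in the $N=\ell n$ variables $w_1,\ldots,w_N$. Under the substitution $w_i=q^{2(i-1)}z$ for $i\le j$, a monomial $w^{\alpha}$ (with $\alpha$ a rearrangement of some $\mu\preceq\lambda$) becomes $z^{\alpha_1+\cdots+\alpha_j}\,q^{2\sum_{i\le j}(i-1)\alpha_i}\prod_{i>j}w_i^{\alpha_i}$, so its contribution to the lowest power of $z$ is governed by $\alpha_1+\cdots+\alpha_j$. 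Minimising over rearrangements puts the $j$ smallest parts of $\mu$ in the first $j$ slots, so the least power of $z$ produced by $m_\mu$ is the sum of the $j$ smallest parts of $\mu$.

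First I would prove the divisibility, which is the essential content. If $\mu\preceq\lambda$ then $\sum_{i=1}^{N-j}\mu_i\le\sum_{i=1}^{N-j}\lambda_i$, hence the sum of the $j$ smallest parts of $\mu$ is at least that of $\lambda$; every monomial of $P_{Y_{\ell,n}^{(k)}}$ therefore contributes a power of $z$ of at least $s:=(\text{sum of the }j\text{ smallest parts of }\lambda)$, and the specialisation is divisible by $z^{s}$. It then remains to identify $s$. By its construction (Proposition~\ref{prop:leading} and the surrounding discussion), $Y_{\ell,n}^{(k)}$ is the staircase $Y_{\ell,n}$ with $\ell-k$ boxes added vertically at each step; at the lowest step this turns the $\ell$ trailing zero parts of $Y_{\ell,n}$ into $(\underbrace{1,\ldots,1}_{\ell-k},\underbrace{0,\ldots,0}_{k})$, while every part outside the last step is $\ge 2$. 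Since $j\le\ell$, the $j$ smallest parts of $\lambda$ are exactly the last $j$ entries of this list, whose sum is $\max\{j-k,0\}$. This gives $s=\max\{j-k,0\}$, the claimed power.

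Finally I would argue that the power is \emph{exactly} $z^{\max\{j-k,0\}}$, i.e.\ that the factor $(\ldots)$ does not vanish identically at $z=0$. The bound $s$ is attained, and by the dominance inequality only partitions $\mu$ having the same $j$ smallest parts as $\lambda$ — in particular $\mu=\lambda$ with its monic coefficient — can feed into the $z^{s}$ coefficient. The one delicate point is the absence of cancellation at this order: spreading the remaining (strictly larger) parts over $w_{j+1},\ldots,w_N$ produces genuinely distinct monomials there, while the freedom to order the $k$ zeros and the $j-k$ ones among the first $j$ slots contributes a Gaussian-binomial factor $\binom{j}{j-k}_{q^{2}}$, a sum of distinct powers of $q$ and hence nonzero for generic $q$. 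Thus the $z^{s}$ coefficient is a nonzero symmetric polynomial (with coefficients rational in $q$) in the remaining variables, and the identity holds as rational functions of $q$. I expect this non-cancellation to be the main obstacle if one insists on the exact power; note, however, that for the application in Appendix~\ref{app:poly} — cancelling the spurious poles at $z_r=0$ — only the divisibility $z^{\max\{j-k,0\}}\mid P_{Y_{\ell,n}^{(k)}}|_{\mathrm{spec}}$ is needed, and that follows from the dominance bound alone.
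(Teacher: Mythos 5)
Your proof is correct, but it takes a genuinely different route from the paper's. The paper argues through the parafermionic realization: it recalls that $P_{Y_{\ell,n}^{(k)}}$ arises as $\left(\prod_i w_i\right)\braC{k}e^{2\ell n\bm\beta}\mathcal F\ketC{k}$, observes via \eqref{eq:pf_f_relations} that specializing $j$ variables to the geometric progression $z,q^2z,\ldots,q^{2j-2}z$ forces the corresponding parafermions into pairwise distinct modes, and notes that the mode-$i$ current contributes $z^{-k_i}$ with $k_i\in\{0,1\}$ and $\sum_i k_i=k$; minimizing then gives the exponent $j-\min\{j,k\}=\max\{j-k,0\}$. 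You instead work purely combinatorially from the triangular expansion $P_\lambda=m_\lambda+\sum_{\mu\prec\lambda}c_{\lambda\mu}m_\mu$ in dominance order, convert the dominance inequality $\sum_{i\le N-j}\mu_i\le\sum_{i\le N-j}\lambda_i$ into a lower bound by the sum of the $j$ smallest parts of $\lambda$, and read off that sum from the explicit shape of $Y_{\ell,n}^{(k)}$ (whose last $\ell$ parts are $(1^{\ell-k},0^k)$ with all other parts $\ge2$). Your route is more elementary and self-contained --- it uses nothing about the vertex-operator origin of the polynomial, only its defining triangularity and the shape of the diagram --- and is if anything more rigorous than the paper's four-line sketch; the paper's route has the advantage of explaining \emph{why} the statement holds from the representation-theoretic construction and of meshing with the wheel-condition machinery already set up in Appendix~\ref{app:parafermions}. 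You are also right that only the divisibility (not the exactness of the power) is used in Appendix~\ref{app:poly}, and your dominance bound delivers exactly that; your sketch of non-cancellation at order $z^{\max\{j-k,0\}}$ (isolating the coefficient of the leading monomial in $w_{j+1},\ldots,w_N$, which only $\mu=\lambda$ can reach, giving a nonzero Gaussian binomial) is sound but, as you note, not needed.
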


Recall that $P_{Y_{\ell,n}}^{(k)}$ is obtained from $\left(\prod_i w_i\right) \braC{k_1, \ldots, k_{\ell}} e^{2\ell n \bm \beta} \mathcal F \ketC{k_1, \ldots, k_{\ell}}$, 
which can be seen as a sum over all parafermionic decompositions.
Chosing $j$ variables $\{z,q^2 z, \ldots, q^{2j-2}z\}$ forces the parafermions to be in distinct modes, according to~\eqref{eq:pf_f_relations}.
Each parafermion $\bm f^i (z)$ generates a factor $z^{-k_i}$.
The result follows from minimizing the power of $z$ on the expression.

We split formula~\eqref{eq:form_poly} into two contributions, the first row and the second row.
All the poles of the second row are outside of the contour integral.
\begin{proposition}\label{prop:poly_2nd_term}
 For any diagram $d$, the second row of formula~\eqref{eq:form_poly} is a polynomial in $z_r$ and $z_s$.
 Moreover, if multiplied by $\frac{[\ell]!}{[\ell-m_r]!}\frac{[\ell]!}{[\ell-m_s]!}$, where $m_r$ (and $m_s$) is the number of $\e_i$ equal to $r$ (respectivelly $s$), it is a Laurent polynomial in $q$.
\end{proposition}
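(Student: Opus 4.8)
The plan is to carry out the residue computation explicitly and then exhibit the cancellation of every possible pole in $z_r$ and $z_s$. After integrating according to the diagram $d$, each relevant variable is replaced by $w_i=q^{\ell-2k_i}z_{\sigma(i)}$ with $\sigma(i)\in\{r,s\}$ (the ``red squares'' have $\sigma(i)=r$, the ``blue circles'' $\sigma(i)=s$), where the exponents $k_i$ run consecutively over $0,\ldots,j_r-1$ on the $z_r$ side and over $0,\ldots,j_s-1$ on the $z_s$ side; this follows from the no-repeat condition together with the rule that $q^{\ell-2k}z$ can only be reached if $q^{\ell-2k+2}z$ lies to its right, so that each side forms a single chain built from the iterated Vandermonde poles $w_i=q^{-2}w_j$. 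The first structural input I would establish is that \emph{every red square satisfies $\e_i\le r$} (and every blue circle $\e_i\le s$): the net pole at $w=q^{\ell}z_r$ survives in \eqref{eq:recall_Psi} only when $\e_i\le r$, and since $\bm\e$ is non-decreasing while the chain runs through decreasing indices, $\e$ can only decrease as $k_i$ grows along the chain.

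With the substitution in hand I would read off the denominator factors of the second row of \eqref{eq:form_poly}. The factor $q^{\ell}w_i-z_r$ with $\e_i\ge r$ gives either $z_r(q^{2\ell-2k_i}-1)$ for a red square with $\e_i=r$, or $q^{2(\ell-k_i)}z_s-z_r$ for a blue circle; symmetrically for $q^{\ell}w_i-z_s$. The only potentially fatal factor would be $q^{\ell}w_i-z_s$ coming from a red square, producing an uncancellable pole $z_r=q^{-2(\ell-k_i)}z_s$; but this requires $\e_i\ge s$, which is impossible because red squares have $\e_i\le r<s$. Hence the only cross-poles occur at $z_r=q^{2t}z_s$ with $t=\ell-k_i\in\{1,\ldots,\ell\}$, they are simple and pairwise distinct (no repeated $k_i$ by the no-repeat condition), and they are cancelled by the prefactor $\prod_{t=1}^{\ell}(q^{2t}z_s-z_r)$.

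It then remains to treat the apparent poles at $z_r=0$ and $z_s=0$, and this is where Lemma~\ref{lemma:Macdonald_power} enters. On the $z_r$ side the denominators contribute order $-\nu_r$, where $\nu_r$ is the number of red squares with $\e_i=r$ --- necessarily the top $k_i=0,\ldots,\nu_r-1$ of the chain; the explicit factor $z_r^{k}$ contributes $+k$; and since the $j_r$ red-square values form the geometric progression $\tilde z_r\{1,q^2,\ldots,q^{2(j_r-1)}\}$, Lemma~\ref{lemma:Macdonald_power} shows that $P_{Y_{\ell,n}^{(k)}}$ contributes order $+\max\{j_r-k,0\}$ in $z_r$. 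Because $k+\max\{j_r-k,0\}\ge j_r\ge\nu_r$, the total order at $z_r=0$ is nonnegative, and likewise at $z_s=0$. As the listed factors exhaust the denominator, a rational function with no finite poles is a polynomial, which proves the first assertion.

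Finally, for the statement over $q$ I would track the only $q$-valued denominators produced, namely the factors $(q^{2\ell-2k_i}-1)$ from the red $\e_i=r$ and blue $\e_i=s$ terms; these assemble into $\prod_{k=0}^{\nu_r-1}(q^{2\ell-2k}-1)\propto [\ell]!/[\ell-\nu_r]!$ and the analogous $[\ell]!/[\ell-\nu_s]!$, while the Macdonald polynomial and all remaining factors are manifestly Laurent in $q$. Since $\nu_r\le m_r$ and $\nu_s\le m_s$, multiplying by $\tfrac{[\ell]!}{[\ell-m_r]!}\tfrac{[\ell]!}{[\ell-m_s]!}$ clears these denominators and leaves a Laurent polynomial. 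The main obstacle, and the step I would spend most care on, is the structural lemma $\e_i\le r$ for red squares: it simultaneously rules out the uncancellable negative cross-poles and identifies exactly which $k_i$ occur among the $\e_i=r$ factors, and hence controls both the $z_r=0$ cancellation and the precise $q$-denominator.
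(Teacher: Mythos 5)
Your argument is correct and follows essentially the same route as the paper's proof: establish that red squares can only occur with $\e_i\le r$ and blue circles with $\e_i\le s$ via the chain structure and the monotonicity of $\bm\e$, match the resulting cross-poles $(q^{2(\ell-k_i)}z_s-z_r)$ against the prefactor $\prod_{t=1}^\ell(q^{2t}z_s-z_r)$, kill the poles at $z_r=0$ and $z_s=0$ with Lemma~\ref{lemma:Macdonald_power}, and read off the $q$-denominators $\prod(q^{2\ell-2k}-1)$ from the $\e_i=r$ and $\e_i=s$ factors. Your version is, if anything, slightly more explicit than the paper's in isolating the one genuinely dangerous case (a red square with $\e_i\ge s$) and why it cannot occur.
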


Notice that $m_r \leq b_r$ (and $m_s \leq b_s$), the equality being only possible if we chose all $w_i$ such that $\e_i = r$ (or $s$) to belong to our subset of variables.

\begin{proof}
 By choice $r<s$.
 Pick a random diagram, for example
\begin{center}
 \begin{tikzpicture}[scale=.6]
 \draw[black!50!white] (0,0) -- (6,0);
 \node[Ps] at (0,0) {};
 \draw[black!50!white] (0.5,-0.5) -- (0.5,0.5);
 \node at (1.5,-1) {$r$}; 
 \node[Pr] at (1,0) {};
 \node[Pr] at (2,0) {};
 \draw[black!50!white] (2.5,-0.5) -- (2.5,0.5);
 \node[Ps] at (3,0) {};
 \draw[black!50!white] (3.5,-0.5) -- (3.5,0.5);
 \node at (4,-1) {$s$}; 
 \node[Pr] at (4,0) {};
 \draw[black!50!white] (4.5,-0.5) -- (4.5,0.5);
 \node[Ps] at (5,0) {};
 \node[Ps] at (6,0) {};
 \end{tikzpicture}
\end{center}
where the vertical lines split the variables $w_i$ in five regions $\e_i < r$, $\e_i = r$, $r < \e_i < s$, $\e_i = s$ and $\e_i > s$.
The product $(w_i - q^{\ell} z_r)(w_i - q^{\ell} z_s)$ vanishes whenever there is a blue circle on the region $\e_i>s$; and it vanishes also if there is a red square on the regions $\e_i >r$.
Then this example will vanish.

We pick now a non vanishing example:
\begin{center}
 \begin{tikzpicture}[scale=.6]
 \draw[black!50!white] (0,0) -- (6,0);
 \node[Pr] at (0,0) {};
 \draw[black!50!white] (0.5,-0.5) -- (0.5,0.5);
 \node at (2,-1) {$r$}; 
 \node[Pr] at (1,0) {};
 \node[Ps] at (2,0) {};
 \node[Pr] at (3,0) {};
 \draw[black!50!white] (3.5,-0.5) -- (3.5,0.5);
 \node[Ps] at (4,0) {};
 \draw[black!50!white] (4.5,-0.5) -- (4.5,0.5);
 \node at (5.5,-1) {$s$}; 
 \node[Ps] at (5,0) {};
 \node[Ps] at (6,0) {};
 \end{tikzpicture}
\end{center}
The term $(q^{\ell}w_i - z_s)^{-1}$ only appears in the region $\e_i=s$, exacly $m_s$ times, when we replace $w_i$ by the sucessive $q^{\ell-2p} z_s$ we get $(q^{2\ell}-1)^{-1}\ldots (q^{2\ell-2m_s+2} - 1)^{-1} z_s^{-m_s}$.
There are at least $m_s$ variables $z_s$ in $P_{Y_{\ell,n}^{(k)}}$ and we can apply lemma~\ref{lemma:Macdonald_power}, therefore the power in $z_s$ is at least $z^k_s z^{\max{\{m_s-k,0\}}}_s z^{-m_s}_s = z_s^{\max{\{0,k-m_s\}}}$.
The term $(q^{\ell}w_i - z_r)^{-1}$ appears in all regions such $\e_i \geq r$.
Let $m^{\prime}_r$ be the number of red squares in the region $\e_i = r$.
Then we can do the same analysis for the squares. 
The circles will generate a term $(q^{2\ell}z_s -z_r)\ldots(q^{2\ell-2m^{\prime}+2} z_s-z_r) $, where $m^{\prime}$ is the number of circles in the regions $\e_i \geq r$.
This term cancels with the product $\prod_p (q^{2p}z_s-z_r)$.
\end{proof}

In general, the first row of the formula~\eqref{eq:form_poly} will give a rational function.
\begin{proposition}\label{prop:poly_1st_term}
 The sum over all possible diagrams with $m$ nodes of $\psi_{d,\bm \e} (z_r, z_s)$ is a polynomial in $z_r$ and $z_s$.
\end{proposition}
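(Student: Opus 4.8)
The plan is to read the sum $\sum_d \psi_{d,\bm\e}(z_r,z_s)$ backwards through Cauchy's theorem. Fixing the $m$ integration variables $w_{j_1},\ldots,w_{j_m}$ that are sent to $z_r$- and $z_s$-poles (the remaining $w$'s being frozen on other spectral parameters), summing $\psi_{d,\bm\e}$ over all diagrams $d$ with these $m$ nodes is exactly the statement that we integrate $w_{j_1},\ldots,w_{j_m}$ over fixed nested contours enclosing the whole families $\{q^{\ell-2k}z_r\}$ and $\{q^{\ell-2k}z_s\}$ at once, rather than residue by residue. Call this integral $\psi_{\bm\e}(z_r,z_s)$. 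Being a multiple contour integral of a rational integrand over fixed cycles, $\psi_{\bm\e}$ is automatically rational in $z_r,z_s$, and the content of the proposition is purely that it has no poles. The apparent poles $(z_r-q^a z_s)^{-1}$ visible in the individual $\psi_{d,\bm\e}$ arise from the chain factor $\prod_{i<j}(w_i-q^{-2}w_j)^{-1}$ colliding two residues of opposite colour, and must disappear once the residues are taken collectively.

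I would first clear the poles at the origin. Negative powers of $z_r$ or $z_s$ can only be manufactured by the second-row denominators $(q^\ell w_i-z_r)^{-1}$, $(q^\ell w_i-z_s)^{-1}$, contributing $z_r^{-m_r}$, $z_s^{-m_s}$; together with the explicit $z_r^k z_s^k$ and the lower bound on the $z$-degree of $P_{Y_{\ell,n}^{(k)}}$ given by Lemma~\ref{lemma:Macdonald_power}, this is exactly the balance already struck in Proposition~\ref{prop:poly_2nd_term}, and it leaves the total power of each variable non-negative. For the finite poles I would invoke the standard principle that a contour integral acquires a pole in $z_r/z_s$ only at a pinch, where a pole of the integrand inside the contour collides with one outside. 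The outside poles are the second-row zeros $w_i=q^{-\ell}z_r$ (present when $\e_i\ge r$) and $w_i=q^{-\ell}z_s$ (present when $\e_i\ge s$); the inside poles are $w_i=q^\ell z_r$, $w_i=q^\ell z_s$ from the first row together with the nested chain poles they generate. The key simplification is that, for precisely the variables carrying an outside pole, the first-row factor $1/(w_i-q^\ell z_r)$ (resp.\ $1/(w_i-q^\ell z_s)$) is cancelled by the second-row numerator $(w_i-q^\ell z_r)$ when $\e_i>r$ (resp.\ $(w_i-q^\ell z_s)$ when $\e_i>s$), and the inside first-row poles that do survive only ever meet an outside pole at $z_r/z_s=q^{2\ell}$, which is a zero of the prefactor $\prod_{t=1}^\ell(q^{2t}z_s-z_r)$.

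What is left, and what I expect to be the genuine obstacle, is the pinching between a nested $z_r$-chain pole $q^{\ell-2k}z_r$ and a cross-colour outside pole $q^{-\ell}z_s$, landing at values $z_r/z_s=q^{2(k-\ell)}$ not automatically protected by the prefactor. Here I would compute the residue of $\psi_{\bm\e}$ at such a locus directly: it is a contour integral in the $m-1$ remaining variables in which the two colliding cycles have merged onto a single point, and I would show it vanishes by pairing each admissible diagram with the one obtained by transposing the two merged residues between the $z_r$- and $z_s$-chains. The antisymmetry of the Vandermonde numerator $\prod_{i<j}(w_i-w_j)$ under this transposition, set against the symmetry of $P_{Y_{\ell,n}^{(k)}}$ and the fact that the two transposed variables lie in the same $\bm\e$-region (so the second-row factors are unchanged), forces the paired residues to be opposite and cancel. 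The delicate point is to check that this transposition is a genuine fixed-point-free involution on the diagrams contributing to a given pole once the nesting constraint is imposed, and that no uncancelled term is produced when one of the two chains has length zero; everything else is the degree bookkeeping already assembled in Proposition~\ref{prop:poly_2nd_term} and Lemma~\ref{lemma:Macdonald_power}.
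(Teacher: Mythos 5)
Your reduction of the problem is sound and matches the paper's setup: the only dangerous denominators are the cross-colour ones, the poles at the origin are disposed of by the degree count of Lemma~\ref{lemma:Macdonald_power} exactly as in Proposition~\ref{prop:poly_2nd_term}, and the prefactor $\prod_{t=1}^\ell(q^{2t}z_s-z_r)$ absorbs part of what remains. But the step that carries the entire content of the proposition --- the cancellation of the residues at the cross-colour loci $z_r/z_s=q^{2(k-\ell)}$ --- is only conjectured in your write-up, and the mechanism you propose for it would not go through as stated. You want to pair each diagram with the one obtained by transposing the two colliding residues between the $z_r$- and $z_s$-chains and invoke the antisymmetry of $\prod_{i<j}(w_i-w_j)$ against the symmetry of $P_{Y_{\ell,n}^{(k)}}$; but the integrand also contains the ordered chain factor $\prod_{i<j}(w_j-q^2w_i)^{-1}$, which is neither symmetric nor antisymmetric under such a transposition, and the second-row factors $\prod_{\e_i>r}(w_i-q^\ell z_r)$, $\prod_{\e_i>s}(w_i-q^\ell z_s)$ distinguish $z_r$ from $z_s$ (since $r<s$), so ``the second-row factors are unchanged'' is not automatic. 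On top of this sits the admissibility issue you yourself flag: the nesting rule (a residue at $q^{\ell-2k}z_s$ requires $q^{\ell-2k+2}z_s$ further to the right) means the transposition need not map contributing diagrams to contributing diagrams. So the proposal, as it stands, has a genuine gap precisely where the theorem lives.

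The paper avoids the pinch taxonomy altogether and cancels the apparent poles by a node-by-node divided-difference argument. Writing $\psi_d$ as an iterated residue whose value is $Q(z^{(1)},\ldots,z^{(m)})$ --- a single rational function evaluated at arguments $z^{(i)}\in\{q^a z_s, q^b z_r\}$ dictated by the diagram --- one sums the two diagrams differing only in the colour of the first node. The apparent simple pole $(q^a z_s-q^b z_r)^{-1}$ then multiplies the difference $Q_{\tikz[baseline=-3pt]{\node[pr] at (0,0) {};}\,d^{(1)}}-Q_{\tikz[baseline=-3pt]{\node[ps] at (0,0) {};}\,d^{(1)}}$ of two evaluations of the same function at arguments that coincide on the pole locus, so the quotient is again a polynomial-numerator rational function of the same shape; iterating over all $m$ nodes yields $\psi=Q^{(m)}(q^\ell z_r,q^\ell z_s)$ with no poles left. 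That argument needs no symmetry of the integrand and no involution on diagrams, only the single observation that red and blue evaluations come from one common $Q$. If you want to salvage your contour-pinching route you would have to prove the residue cancellation by some such telescoping identity anyway, at which point you have essentially reconstructed the paper's proof.
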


\begin{proof}
From proposition~\ref{prop:poly_2nd_term} we know that
\[
 \psi_{d} (z_r,z_s) =
\oint \ldots \oint \prod_{i=1}^{m}\frac{dw_i}{2\pi i}\prod_{i<j}\frac{w_j-w_i}{w_j-q^2 w_i} 
 \prod_i \frac{Q_d (z_r,z_s)}{(w_i - q^{\ell}z_r)(w_i - q^{\ell}z_s)} 
\]
where $Q_d (z_r, z_s)$ is a polynomial depending on the diagram $d$.
We ignore the dependence in $\bm \e$.
For convenience we allow diagrams with more than $\ell$ circles (or squares), in which case we set $Q_d$ to zero.
This is naturally done by using the Macdonald Polynomial $P_{Y_{\ell,n}}$.

On the other hand, $Q_d (z_r,z_s)$ can be seen as the evaluation of a function, i.e. $Q_d (z_r,z_s) = Q (z_r,z_s;z^{(1)},\ldots,z^{(m)})$, where $z^{(i)}$ is given by the diagram $d$.
We will omit the explicit dependence on $z_r$ and $z_s$, and for the analysis we will consider that the $z^{(i)}$ are independent variables.
Then, this is a rational function of the form
\begin{equation}\label{eq:rational_Q}
  Q(z^{(1)},\ldots,z^{(m)}) = \frac{R(z^{(1)},\ldots,z^{(m)})}{\prod_i \prod_{j_i}(z^{(i)}-\zeta_{j_i})}
\end{equation}
where the polynomial $R$ and $\zeta_{j_i}$ depend on $z_r$ and $z_s$.

The diagram \tikz[scale=.6]{
 \draw[black!50!white] (0,0) -- (3,0);
 \node[Ps] at (0,0) {};
 \node[Pr] at (1,0) {};
 \node[Ps] at (2,0) {};
 \node[Ps] at (3,0) {};}
corresponds to the following expression:
\[
 \psi_{d}(z_r, z_s) = \frac{Q(q^{\ell-4}z_s,q^{\ell}z_r,q^{\ell-2}z_s,q^{\ell}z_s)}{(q^{\ell-4}z_s - q^{\ell-2}z_r)(q^{\ell}z_r - q^{\ell-4}z_s)(q^{\ell-2}z_s - q^{\ell}z_r)(q^{\ell}z_s - q^{\ell}z_r)} 
\]
The denominator can be constructed by a simple rule:
we read the diagram $d$ from right to left, at each node we can put either a circle or a square (corresponding to some $q^a z_s$ or $q^b z_r$); if we choose a circle (resp.\ a square) we add the pole $(q^a z_s - q^b z_r)^{-1}$ (resp.\ $(q^b z_r - q^a z_s)^{-1}$).

Now we sum over all $2^m$ possible diagrams.
Let $d_1$ and $d_2$ be two diagrams that only differ in the first element, i.e. $d_1 = \tikz[baseline=-3pt]{\node[pr] at (0,0) {};}\,d^{(1)}$ and $d_2 = \tikz[baseline=-3pt]{\node[ps] at (0,0) {};}\,d^{(1)}$, where $d^{(1)}$ is some diagram with $m-1$ nodes.
Then:
\[
 \psi_{\tikz[baseline=-3pt]{\node[pr] at (0,0) {};}\,d^{(1)}}(z_r,z_s) + 
\psi_{\tikz[baseline=-3pt]{\node[ps] at (0,0) {};}\,d^{(1)}} (z_r,z_s) =
\frac{1}{(\ldots)} \frac{Q_{\tikz[baseline=-3pt]{\node[pr] at (0,0) {};}\,d^{(1)}}-Q_{\tikz[baseline=-3pt]{\node[ps] at (0,0) {};}\,d^{(1)}}}{q^a z_r - q^b z_s}
\]
where the three dots are replacing the commun $m-1$ poles.
Using the formulation~\eqref{eq:rational_Q} and setting $q^a z_r = z^{(1)}$ and $q^b z_s = z^{\prime(1)}$, the numerator is given by
\[
 \frac{1}{\prod_{i\geq 2} \prod_{j_i} (z^{(i)}-\zeta_{j_i})} \left(\frac{R(z^{(1)},\ldots)}{\prod_{j_1}( z^{(1)}-\zeta_{j_1})}-\frac{R(z^{\prime(1)},\ldots)}{\prod_{j_1} (z^{\prime(1)}-\zeta_{j_1})}\right)
\]
which vanishes if $z^{(1)}=z^{\prime (1)}$.
Then the division by $(z^{(1)}-z^{\prime (1)})$ is well defined and we can define the polynomial:
\[
Q^{(1)}_{\tikz[baseline=-3pt]{\node[pn] at (0,0) {};}\,d^{(1)}} \coleq \frac{Q_{\tikz[baseline=-3pt]{\node[pr] at (0,0) {};}\,d^{(1)}}-Q_{\tikz[baseline=-3pt]{\node[ps] at (0,0) {};}\,d^{(1)}}}{q^a z_s - q^b z_r}
\]
where the open circle indicates that we already performed the sum.
This can be written in the rational function perpesctive:
\[
  Q^{(1)} (z^{(1)},z^{\prime(1)};z^{(2)}\ldots,z^{(m)}) = \frac{R^{(1)}(z^{(1)},z^{\prime(1)};z^{(2)}\ldots,z^{(m)})}{\prod_k(z^{(1)}-\zeta_k)(z^{\prime(1)}-\zeta_k) \prod_{i\geq 2} \prod_{j_i}(z^{(i)}-\zeta_{j_i})}
\]
Notice that this is a symmetric function on $z^{(1)}$ and $z^{\prime (1)}$.

Now we sum over the second node, let $d_1 = \tikz[baseline=-3pt]{\node[pn] at (0,0) {};\node[pr] at (0.2,0) {};}\,d^{(2)}$ and $d_2 = \tikz[baseline=-3pt]{\node[pn] at (0,0) {};\node[ps] at (0.2,0) {};}\,d^{(2)}$, where the open circle indicates the first node where we already performed the sum.
Then:
\[
 \psi_{\tikz[baseline=-3pt]{\node[pn] at (0,0) {}; \node[pr] at (0.2,0) {};}\,d^{(2)}}(z_r,z_s) + 
\psi_{\tikz[baseline=-3pt]{\node[pn] at (0,0) {}; \node[ps] at (0.2,0) {};}\,d^{(2)}} (z_r,z_s) =
\frac{1}{(\ldots)} \frac{Q^{(1)}_{\tikz[baseline=-3pt]{\node[pn] at (0,0) {};\node[pr] at (0.2,0) {};}\,d^{(2)}}-Q^{(1)}_{\tikz[baseline=-3pt]{\node[pn] at (0,0) {};\node[ps] at (0.2,0) {};}\,d^{(2)}}}{q^a z_s - q^b z_r}
\]
We consider now that $z^{(1)}$ and $z^{\prime(1)}$ depend on the two variables $z^{(2)}$ and $z^{\prime(2)}$, then the numerator is given by:
\begin{multline*}
 \frac{1}{\prod_{i\geq 3} \prod_{j_i} (z^{(i)}-\zeta_{j_i})} \left(\frac{R^{(1)}(q^{-2}z^{(2)},z^{\prime(2)};z^{(2)},\ldots)}{\prod_k(q^{-2} z^{(2)}-\zeta_k)(z^{\prime(2)}-\zeta_k)\prod_{j_2}(z^{(2)}-\zeta_{j_2})}\right.\\
-\left.\frac{R^{(1)}(z^{(2)},q^{-2} z^{\prime(2)};z^{\prime(2)},\ldots)}{\prod_k (z^{(2)}-\zeta_k)(q^{-2}z^{\prime(2)}-\zeta_k)\prod_{j_2}(z^{\prime (2)}-\zeta_{j_2})}\right)
\end{multline*}
Once again this vanishes if $z^{(2)} = z^{\prime(2)}$, then the division by $(z^{(2)}-z^{\prime(2)})$ is well defined and we can define a new polynomial:
\[
Q^{(2)}_{\tikz[baseline=-3pt]{\node[pn] at (0,0) {};\node[pn] at (0.2,0) {};}\,d^{(2)}} \coleq \frac{Q^{(1)}_{\tikz[baseline=-3pt]{\node[pn] at (0,0) {};\node[pr] at (0.2,0) {};}\,d^{(2)}}-Q^{(1)}_{\tikz[baseline=-3pt]{\node[pn] at (0,0) {};\node[ps] at (0.2,0) {};}\,d^{(2)}}}{q^a z_s - q^b z_r}
\]
or in the rational function perspective:
\[
  Q^{(2)} (z^{(2)},z^{\prime(2)};z^{(3)}\ldots,z^{(m)}) = \frac{R^{(2)}(z^{(2)},z^{\prime(2)};z^{(3)}\ldots,z^{(m)})}{\prod_k(z^{(2)}-\zeta_k)(z^{\prime(2)}-\zeta_k) \prod_{i\geq 2} \prod_{j_i}(z^{(i)}-\zeta_{j_i})}
\]
which is once again symmetric in $z^{(2)}$ and $z^{\prime (2)}$.

Repeating the process $m$ times we obtain:
\[
 \psi (z_r,z_s) = Q^{(m)} (q^{\ell} z_r, q^{\ell} z_s)
\]
which is a polynomial.
\end{proof}

This concludes the proof.
Note that if $Q_d$ is a Laurent polynomial in $q$ for all $d$, then $\sum_d \psi_d$ will be also a Laurent polynomial in $q$.
Then:
\begin{corollary}
 The coefficients of $([\ell]!)^{-n}\prod_i [b_i]!\poly\Psi_{b_1, \ldots, b_{2n}}^{(k)}(z_1,\ldots,z_{2n})$ are Laurent polynomials in $q$.
\end{corollary}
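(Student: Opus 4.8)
The plan is to use the remark preceding the statement to localise the problem to a single residue--assignment diagram, and then to track the $q$--factorials through the prefactor and the residues. By Cauchy's theorem the integral in~\eqref{eq:recall_Psi} is a finite sum $\sum_d\psi_d$, where a diagram $d$ records, for each $w_i$, the point $q^{\ell-2p}z_j$ at which its residue is taken; write $m_j$ for the number of variables sent to poles attached to $z_j$, so that $m_j\le b_j$ and $\sum_j m_j=\ell n$. Since $\gamma_{\ell,n}^{(k)}\prod_i\alpha_{b_i}$ is a common prefactor and multiplication by a Laurent polynomial preserves Laurent--polynomiality, it suffices to prove that $([\ell]!)^{-n}\prod_i[b_i]!\,\gamma_{\ell,n}^{(k)}\prod_i\alpha_{b_i}\,\psi_d$ is, coefficient by coefficient in the $z_i$, a Laurent polynomial in $q$ for every $d$; the quoted remark then assembles the full sum.

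First I would dispose of the prefactor. The leading coefficient $\gamma_{\ell,n}^{(k)}=q^{\cdots}([k]![\ell-k]!)^n$ has no denominator, and each $\alpha_{b_i}=(-1)^{b_i}(q-q^{-1})^{b_i}q^{\cdots}\left[{\ell\atop b_i}\right]$ is a Laurent unit times a Gaussian binomial $\left[{\ell\atop b_i}\right]=[\ell]!/([b_i]![\ell-b_i]!)$, which is an honest polynomial; hence the prefactor is already Laurent and contributes the factorials $\prod_i[\ell]!/([b_i]![\ell-b_i]!)$. All genuine denominators therefore come from $\psi_d$, and by the mechanism in the proof of Proposition~\ref{prop:poly_2nd_term} each point $q^{\ell-2p}z_j$ turns a pole $(q^\ell w_i-z_j)^{-1}$ into $(q^{2(\ell-p)}-1)^{-1}$, so that the $m_j$ residues at $z_j$ yield, up to a Laurent unit, the denominator $[\ell][\ell-1]\cdots[\ell-m_j+1]=[\ell]!/[\ell-m_j]!$. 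Globalising the pairwise bound of Proposition~\ref{prop:poly_2nd_term} over all $2n$ spectral parameters shows that the $q$--denominator of $\psi_d$ divides $\prod_{j=1}^{2n}[\ell]!/[\ell-m_j]!$, while Lemma~\ref{lemma:Macdonald_power} is needed to control the corresponding denominator hidden in the coefficients of the Macdonald polynomial $P_{Y_{\ell,n}^{(k)}}$.

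Assembling these, I would pair the prefactor numerator $\prod_i[\ell]!/([b_i]![\ell-b_i]!)$ against the residue denominator $\prod_j[\ell]!/[\ell-m_j]!$: since $m_i\le b_i$ the quotient $[\ell-m_i]!/[\ell-b_i]!$ is a polynomial, the factors $[\ell]!$ telescope, and the normalisation $([\ell]!)^{-n}\prod_i[b_i]!$ is designed precisely to absorb what remains. The main obstacle is that the naive denominator bound above is not tight: the explicit $\ell=2,n=2$ components computed in Section~\ref{sec:poly} (e.g.\ $\bm\Psi^{(0)}_{1,2,1,0}$, which carries a visible factor $q+q^3=q^2[2]$) and the simplest component~\eqref{eq:simplecomp} itself show that $\poly\Psi^{(k)}$ is in fact divisible by extra factors, so that $\prod_i[b_i]!\,\poly\Psi^{(k)}$ is divisible by $([\ell]!)^n$ and not merely Laurent. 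Establishing this surplus divisibility is the delicate step: it does not follow from the per--pair estimate alone, but from the cancellations among diagrams produced in the proof of Proposition~\ref{prop:poly_1st_term} (the successive divided differences $Q^{(1)},Q^{(2)},\dots$) together with the vanishing orders of $P_{Y_{\ell,n}^{(k)}}$ at the coincidences $q^{\ell-2p}z_j$ given by Lemma~\ref{lemma:Macdonald_power}. Once one checks that these mechanisms supply exactly $n$ surplus factors of $[\ell]!$, uniformly in $d$ and in the twist $k$, the normalised object is a Laurent polynomial and the corollary follows.
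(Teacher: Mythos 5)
Your proposal follows the same route as the paper: decompose the integral into residue diagrams $\sum_d\psi_d$, bound the $q$-denominator of each diagram by $\prod_j[\ell]!/[\ell-m_j]!$ with $m_j\le b_j$ via Proposition~\ref{prop:poly_2nd_term}, observe that the divided-difference summation of Proposition~\ref{prop:poly_1st_term} preserves Laurent-polynomiality, and combine with the prefactor $\gamma_{\ell,n}^{(k)}\prod_i\alpha_{b_i}$. Your bookkeeping of that assembly is also correct: it yields only that $\prod_i[b_i]!\,\poly\Psi^{(k)}$ is Laurent, which is weaker than the corollary by a factor of $([\ell]!)^{n}$. The problem is that you then openly defer the decisive step --- ``once one checks that these mechanisms supply exactly $n$ surplus factors of $[\ell]!$'' --- so what you have written is a proof of the weaker statement plus a conjecture about where the surplus comes from. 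That surplus \emph{is} the content of the corollary (it is what lets one specialize $q$ to the roots of $[j]$, $j\le\ell$, with only the controlled denominators $1-q^{2j}$ surviving), so leaving it unverified is a genuine gap, not a detail.

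Moreover, the mechanism you propose for the surplus (cancellations among diagrams in the divided-difference scheme of Proposition~\ref{prop:poly_1st_term}, controlled by Lemma~\ref{lemma:Macdonald_power}) is not where the paper gets it, and Lemma~\ref{lemma:Macdonald_power} in particular is about orders of vanishing in the variables $z$ at the coincidences $q^{\ell-2p}z_j$ --- it is used for polynomiality in $z$, not for $q$-denominators. In the paper's accounting the surplus is already sitting in the explicit prefactor: the object that actually enters the residue computation is $\gamma_{\ell,n}^{(k)}P_{Y_{\ell,n}^{(k)}}$, which equals the parafermion correlator $\bigl(\prod_iw_i\bigr)\braC{k}e^{2\ell n\bm\beta}\mathcal F\ketC{k}$ and has Laurent coefficients in $q$ with the denominators of the residues already covered by the bound of Proposition~\ref{prop:poly_2nd_term}; consequently $\gamma_{\ell,n}^{(k)}=q^{\cdots}([k]![\ell-k]!)^n$ is not consumed cancelling denominators of $P_{Y_{\ell,n}^{(k)}}$ and survives as a pure numerator, supplying the $n$ extra factorials. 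If you want to salvage your argument you must either prove that $P_{Y_{\ell,n}^{(k)}}$ itself is a Laurent polynomial in $q$ (so that $\gamma_{\ell,n}^{(k)}$ is genuine surplus), or prove your claimed inter-diagram cancellation; as it stands neither is done, and pointing at the $\ell=2$, $n=2$ examples only shows the statement is plausible, not that it holds.
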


\def\urlshorten http://#1/#2urlend{{#1}}%
\renewcommand\url[1]{%
\href{#1}{\scriptsize\expandafter\path\urlshorten#1 urlend}%
}
\gdef\MRshorten#1 #2MRend{#1}%
\gdef\MRfirsttwo#1#2{\if#1M%
MR\else MR#1#2\fi}
\def\MRfix#1{\MRshorten\MRfirsttwo#1 MRend}
\renewcommand\MR[1]{\relax\ifhmode\unskip\spacefactor3000 \space\fi
\MRhref{\MRfix{#1}}{{\scriptsize \MRfix{#1}}}}
\renewcommand{\MRhref}[2]{%
\href{http://www.ams.org/mathscinet-getitem?mr=#1}{#2}}

\bibliography{biblio}
\bibliographystyle{amsplainhyper}

\end{document}